\documentclass[10pt, journal]{IEEEtran}  

\usepackage{color,amsmath,amsthm,amstext,amsfonts,amssymb, mathrsfs,mathtools,subfigure}
\usepackage{graphicx,algorithm}
\usepackage[algo2e,linesnumbered]{algorithm2e}
\usepackage{tikz}
\usepackage{setspace}
\usepackage{array} 
\usepackage{booktabs}  
\usepackage{bbm}
\usepackage{wasysym}
\usepackage{textcomp}
\usepackage{hyperref} 
\usepackage{caption}
\usepackage{graphicx} %
\usepackage[sort,compress]{cite}
\usepackage{cleveref}
\newtheorem{lemma}{Lemma}
\newtheorem{assumption}{Assumption}

\newtheorem{theorem}{Theorem} 
\newtheorem{definition}{Definition}
\newtheorem{corollary}{Corollary}

\theoremstyle{remark}
\newtheorem{remark}{Remark}

\SetKwInput{KwInput}{Input}

\newcounter{l1}
\newcounter{l2}
\newcounter{l3}
\setlength{\itemsep}{0cm} \setlength{\itemindent}{0in}
\newcommand{\bdotlist}{\begin{list}{$\bullet$}{}}
\newcommand{\bboxlist}{\begin{list}{$\Box$}{}}
\newcommand{\bbboxlist}{\begin{list}{\raisebox{.005in}{{\tiny $\blacksquare$ \ \ }}}{}}
\newcommand{\bdashlist}{\begin{list}{$-$}{} }
\newcommand{\blist}{\begin{list}{}{} }
\newcommand{\barablist}{\begin{list}{\arabic{l1}}{\usecounter{l1}}}
\newcommand{\balphlist}{\begin{list}{(\alph{l2})}{\usecounter{l2}}}
\newcommand{\bAlphlist}{\begin{list}{\Alph{l2}.}{\usecounter{l2}}}
\newcommand{\bdiamlist}{\begin{list}{$\diamond$}{}}
\newcommand{\bromalist}{\begin{list}{(\roman{l3})}{\usecounter{l3}}}

\providecommand{\norm}[1]{\lVert#1\rVert}

\newcommand{\beq}{\begin{equation}}
\newcommand{\eeq}{\end{equation}}

\let\[\equation
\let\]\endequation

\newcommand{\tn}{\textnormal}

\DeclarePairedDelimiterX{\Norm}[1]{\lVert}{\rVert}{#1}

\DeclareMathOperator*{\argmin}{arg\,min}

\allowdisplaybreaks

\title{Change Point Detection Approach for Online Control of Unknown Time Varying Dynamical Systems}


\author{Deepan Muthirayan, Ruijie Du, Yanning Shen, and Pramod P. Khargonekar
\thanks{This work is supported in part by the National Science Foundation under Grant ECCS-1839429 and ECCS-2207457.
Deepan Muthirayan, Ruijie Du, Yanning Shen and Pramod P. Khargonekar are with the Department of Electrical Engineering and Computer Sciences, University of California Irvine, Irvine, CA (emails: deepan.m@uci.edu, ruijied@uci.edu, yannings@uci.edu, pramod.khargonekar@uci.edu).}
}

\begin{document}

\maketitle
\thispagestyle{empty}
\pagestyle{empty}

\begin{abstract}
    We propose a novel change point detection approach for online learning control with full information feedback (state, disturbance, and cost feedback) for unknown time-varying dynamical systems.  We  show that our algorithm can achieve a sub-linear regret with respect to the class of Disturbance Action Control (DAC) policies, which are a widely studied class of policies for online control of dynamical systems, for any sub-linear number of changes and very general class of systems: (i) matched disturbance system with general convex cost functions, (ii) general system with linear cost functions. Specifically, a (dynamic) regret of $\Gamma_T^{1/5}T^{4/5}$ can be achieved for these class of systems, where $\Gamma_T$ is the number of changes of the underlying system and $T$ is the duration of the control episode. That is, the change point detection approach 
    achieves a sub-linear regret for any sub-linear number of changes, which other previous algorithms such as in \cite{minasyan2021online} cannot. 
    Numerically, we demonstrate that the change point detection approach is superior to a standard restart approach \cite{minasyan2021online} and  to standard online learning approaches for time-invariant dynamical systems. Our work presents the first regret guarantee for unknown time-varying dynamical systems in terms of a stronger notion of variability like the number of changes in the underlying system. The extension of our work to state and output feedback controllers is a subject of future work. 
\end{abstract}

\section{Introduction}

In recent years, there has been significant interest in the finite-time performance of learning-based control algorithms for uncertain dynamical systems. Such a control setting is broadly termed as {\it online control}, borrowing the notion from online learning, where a learner's performance is assessed by their ability to learn from a finite number of samples. The performance in online control is typically measured  in terms of regret, which is the loss of  performance using the proposed algorithm as compared with the best possible policy. Predominantly, the goal is to design algorithms that adapt to uncertainties arising from disturbances and adversarial cost function so that the regret scales sub-linearly in $T$, i.e., as $T^\alpha$ with $\alpha < 1$, where $T$ is the duration of the control episode. Significant progress has been made in online control. For example, algorithms have been developed for control of unknown systems, with adversarial cost functions and disturbances \cite{dean2018regret, mania2019certainty, agarwal2019online, simchowitz2020improper}, algorithms for known systems with some predictability of future disturbances \cite{yu2020power, lin2021perturbation}, and for unknown systems with predictability \cite{muthirayan2021online}.

Control of uncertain systems is an extensively researched theme in control theory. Stochastic control,
robust control and adaptive control are large subfields with voluminous literature that address the analysis and synthesis of control for different types of uncertainties. In particular, adaptive control comes closest to ``online control'' described above. While the primary focus in adaptive control is on closed-loop stability and asymptotic performance,  there have been some papers on transient performance. Adaptive control has been studied for systems of all types such as linear, non-linear, and stochastic. There are many variants of adaptive control such as adaptive model predictive control, adaptive learning control, stochastic adaptive control, and robust adaptive control. These variations address the design of adaptive controllers for different variations of the basic adaptive control setting.  Thus, adaptive control is a very rich and extensively studied topic. The key differences in the ``online control'' setting from the classical adaptive control are (a) the consideration of regret as the measure of performance and (b) in some cases the more general nature of the costs, which could be adversarial and/or unknown. Consequently, the classical adaptive control approaches can be inadequate to analyze online control problems. From a techniques point of view, progress in online control is achieved by  merging tools from statistical learning, online optimization, and control theory.

A typical assumption in online control is that the system is time-invariant. In many circumstances, however, the underlying system or environment can be time-varying. While some works have studied time-varying dynamical systems \cite{han2022learning, baby2022optimal}, they have been limited to quadratic cost functions. Very recently, authors of \cite{minasyan2021online} explored the problem of online control of unknown time-varying linear dynamical systems for generic convex cost functions. Their work presents some impossibility results and a regret guarantee of $\widetilde{\mathcal{O}}\left(\vert I \vert \sigma_I + T^{2/3}\right)$ for any interval $I$, where $\vert I \vert$ denotes the length of the interval and $\sigma_I$ is the square root of the average squared deviation of the system parameters in the interval $I$. Clearly, in their case \cite{minasyan2021online}, the achievability of sub-linear regret is limited to scenarios with number of changes of the underlying system within $o(T^{1/3})$. Motivated by this observation, we investigate the question, {\it whether sub-linear regret is achievable for any number of changes over the duration $T$, and under what system, information and cost structures assumptions can we achieve sub-linear guarantees.} 

{\bf Contribution}: Distinct from most of prior works in online control, which study the control of time invariant dynamical systems, the present paper studies the problem of control of a time varying dynamical system over a finite time horizon for generic convex cost functions. Specifically, a linear dynamical system with arbitrary disturbances, whose system matrices can be time varying is considered. For such systems, we address the question of how to learn online and optimize when the system matrices are unknown, in addition to the cost functions and disturbances being arbitrary and unknown a priori. The goal is to design {\it algorithms with regret guarantees in terms of stronger notions of variability (compared to $\sigma$), such as the number of changes}. Towards this end, we consider the full information feedback structure, where in addition to the cost and state feedback at the end of a time step, the controller also receives disturbance as a feedback. We specifically consider the regret with respect to the class of Disturbance Action Control (DAC) policies \cite{minasyan2021online}, which are a widely used class of policies for online control of dynamical systems. 
 
 We propose a novel change point detection-based online control algorithm for unknown time-varying dynamical systems.  We  present guarantees for very general class of systems: (i) matched disturbance system with general convex cost functions, (ii) general system with linear cost functions. We show that, in both these settings, a (dynamic) regret of $\widetilde{\mathcal{O}}\left(\Gamma_T^{1/5} T^{4/5}\right)$ is achievable with a high probability, where $\Gamma_T$ is the number of times the system changes in $T$ time steps and $T$ is the duration of the control episode. Through numerical simulations, we demonstrate that the change point detection approach is superior to a standard restart approach, the adaptive algorithm of \cite{minasyan2021online}, and also standard online learning approach for time-invariant dynamical systems such as \cite{simchowitz2020improper}. Our result guarantees sub-linear regret for any sub-linear number of changes, which is an improvement over \cite{minasyan2021online} which cannot guarantee sub-linear regret for any number of changes. Our work presents the first regret guarantee in terms of a stronger notion of variability like the number of changes in the underlying system. The extension of our work to the setting without disturbance feedback is a subject of future work.  

{\it Notation}: We denote the spectral radius of a matrix $A$ by $\rho(A)$, the discrete time interval from $m_1$ to $m_2$  by $[m_{1}, m_{2}]$, and the sequence $(x_{m_1}, x_{m_1+1}, ..., x_{m_2})$ compactly by $x_{m_{1}:m_{2}}$. Unless otherwise specified, $\norm{\cdot}$ is the 2-norm of a vector and the Frobenious norm of a matrix. We use $\mathcal{O}(\cdot)$ for the standard order notation, and $\widetilde{\mathcal{O}}(\cdot)$ denotes the order neglecting the poly-log terms in $T$. We denote the inner product of two vectors $x$ and $y$ by $\langle x,y \rangle$. 
\section{Problem Formulation}

We consider the online control of a general linear time-varying dynamical system. Let $t$ denote the time index, $x_t$, the state of the system, $y_t$, the output of the system that is to be controlled, $u_t$, the control input, $w_t$ and $e_t$, the disturbance and measurement noise, and $\theta_t = [A_t, B_{t}]$, the time-varying system matrices. Then, the equation governing the dynamical system is given by
\begin{align}
& x_{t+1} = A_t x_t + B_{t} u_t + B_{t,w} w_t, \nonumber \\
& y_t = C_t x_t + e_t.
\label{eq:syseq}
\end{align}
Let $w_t \in \mathbb{R}^q$, $e_t \in \mathbb{R}^p$, $x_t \in \mathbb{R}^n$, $y_t \in \mathbb{R}^p$, and $u_t \in \mathbb{R}^m$. We assume that the sequence of system parameters $\theta_{1:T}$ is unknown to the controller. The disturbance $w_t$ could arise from unmodeled dynamics and thus need not be stochastic. For generality, we assume that the disturbances and measurement noise are bounded and arbitrary. We denote the total duration of the control episode by $T$. 

Like in any control problem, at any time $t$, the controller incurs a cost $c_t(y_t,u_t)$, which is a function of the output and the control input. In addition to the system parameters being unknown, the sequence of cost functions $c_{1:T}$ and the disturbances $w_{1:T}$ for the duration $T$ is arbitrary and unknown a priori. We assume that the full cost function $c_t(\cdot, \cdot)$ and the disturbance $w_t$ are revealed to the controller after its action at $t$. Such a feedback is typical in online control and optimization and is termed the full information feedback. The difference here compared to a standard online control formulation is the feedback of the disturbance $w_t$. Thus, a control policy has the following information by any time $t$: (i) the cost functions and the disturbances till $t-1$, $c_{1:t-1}$ and $w_{1:t-1}$, (ii) the control inputs till $t-1$, $u_{1:t-1}$, and (iii) the observations till $t$, $y_{1:t}$. Let $\Pi_I$ denote the set of policies that satisfy this information setting. 

We denote a control policy by $\pi$. The state, output, and the control input under the policy is denoted by $x^\pi_t, y^\pi_t$ and $u^\pi_t$ respectively. Given that the cost functions and disturbances are only revealed incrementally, one step at a time, the control policy will have to be adapted online as and when the controller gathers information to achieve the best performance over a period of time. Like in a standard online control problem, we characterize the performance of a control policy over a finite time by its {\it regret}. We denote the regret of a policy $\pi$ over a duration $T$ with respect to a policy class $\Pi_\mathcal{M} \in \Pi_I$ 
 by $R_T(\pi)$:
\beq 
R_T(\pi) = \underbrace{\sum_{t=1}^T c_t(y^\pi_t, u^\pi_t)}_{\tn{Policy Cost}} - \underbrace{\min_{\kappa \in \Pi_\mathcal{M}} \sum_{t = 1}^T c_t(y^\kappa_t, u^\kappa_t)}_{\tn{comparator cost}}.  
\label{eq:reg-opt}
\eeq 
The primary goal is to design a control policy that minimizes the regret for the stated control problem. Since the regret minimization problem is typically hard, a typical goal is to design a policy that achieves sub-linear regret, i.e., a regret that scales as $T^\alpha$ with $T$, with a $\alpha < 1$ that is minimal. Such a regret scaling implies that the realized costs converge to that of the best policy from the comparator class asymptotically. Our objective is to design an adaptive policy that can track time variations and achieve sub-linear regret. We note that the regret defined above is static regret. Later, we present the extension to dynamic regret, which is a notion that is more suitable for time-varying dynamical systems.

 The comparator class we consider is the class of Disturbance Action Control (DAC) policies (see \cite{minasyan2021online}). A Disturbance Action Control (DAC) policy is defined as the linear feedback of the disturbances up to a certain history $h$. Let's denote a DAC policy by $\pi_{\mathrm{DAC}}$. Then, the control input $u^{\pi_{\mathrm{DAC}}}_t$ under policy $\pi_{\mathrm{DAC}}$ is given by
\beq 
u^{\pi_{\mathrm{DAC}}}_t = \sum_{k=1}^h M^{[k]}_tw_{t-k}.
\label{eq:drc-fk}
\eeq 
Here, $M_t = \left[M^{[1]}_t, \dots, M^{[h]}_t \right]$ are the feedback gains or the disturbance gains and are the (time-varying) parameters of $\pi_{\mathrm{DAC}}$. Here, we note that, $\pi_{\mathrm{DAC}}$ can be dynamic, i.e., it's parameters can be varying with time. Therefore, the regret defined in Eq. \eqref{eq:reg-opt} is the notion of dynamic regret. We note that the policy is implementable with disturbance feedback. Extension to the case without the disturbance feedback can be made by using estimates of the disturbances instead. We defer the treatment without any disturbance feedback to future work. Our objective here is to optimize the parameter $M$ online so that the regret with respect to the best DAC policy in hindsight is sub-linear.

 The DAC policy is typically used in online control for regulating systems with disturbances; see \cite{agarwal2019online}. The important feature of the DAC policy is that the optimization problem to find the optimal fixed disturbance gain for a given sequence of cost functions is a convex problem and is thus amenable to online optimization and online performance analysis. A very appealing feature of DAC is that, for time-invariant systems, the optimal disturbance action control for a given sequence of cost functions is very close in terms of the performance to the optimal linear feedback controller of the state; see \cite{agarwal2019online}. Thus, for time-invariant systems, by optimizing the DAC online, it is possible to achieve a sub-linear regret with respect to the best linear feedback controller of the state, whose computation is a non-convex optimization problem. 
 
For time-varying dynamical systems, as pointed out in \cite[Thoerem 2.1]{minasyan2021online}, there exist problem instances where the DAC class (with disturbance feedback) incurs a much better cost than other types of classes such as linear state or output feedback policies and vice versa. Therefore, the DAC class is not a weaker class to compete against compared to these standard classes. Moreover, as pointed out by the impossibility result \cite[Thoerem 3.1]{minasyan2021online}, it is an equally harder class to compete against in terms of regret. In this work, we focus our study on the regret minimization problem with respect to the DAC class (with disturbance feedback) and defer the treatment of other control structures to future work. 

Even with the disturbance feedback, the challenge of estimating the unknown system parameters does not diminish. This is because of the presence of measurement noise and the variations itself. In the time-invariant case, following an analysis similar to \cite{simchowitz2020improper}, it can be shown that, even with disturbance feedback, only a regret of $T^{2/3}$ can be achieved with the state-of-the-art methods, which is not any better than the regret that can be achieved without disturbance feedback (see \cite{simchowitz2020improper}). 
The same holds for the time-varying case. It can be shown that, what \cite{minasyan2021online} can achieve for the system in Eq. \eqref{eq:syseq}, even with disturbance feedback, cannot be improved. Therefore, the conclusions we draw later on comparing the bounds we derive and the regret upper bound of \cite{minasyan2021online} are valid. We state our other assumptions below. 

\begin{assumption}[System] 
(i) The system is stable, i.e., $\norm{C_{t+k+1} A_{t+k} \dots  A_{t+1}B_t}_2 \leq \kappa_a\kappa_b(1-\gamma)^{k}, ~~ \forall ~ k \geq 0, ~~ \forall ~ t$, where $\kappa_a > 0, \kappa_b > 0$ and $\gamma$ is such that $0 < \gamma < 1$, and where $\kappa_a, \kappa_b$ and $\gamma$ are constants. $B_t$ is bounded, i.e., $\norm{B_t} \leq \kappa_b$. (ii) The disturbance and noise $w_t$ and $e_t$ is bounded. Specifically, $\norm{w_t} \leq \kappa_w$, where $\kappa_w > 0$ is a
constant, and $\norm{e_t} \leq \kappa_e$, where $\kappa_e > 0$ is a constant.
\label{ass:sys}
\end{assumption}

\begin{assumption}[Cost Functions]
(i) The cost function $c_t$ is convex $\forall ~ t$.
(ii)  $\norm{c_t(x,u) -  c_t(x',u')} \leq LR\norm{z - z'}$ for a given $z^\top := [x^\top, u^\top], (z')^\top := [(x')^\top, (u')^\top]$, where $R:= \max\{\norm{z},\norm{z'},1\}$.
(iii) For any $d > 0$, when $\norm{x} \leq d$ and $\norm{u} \leq d$, $\nabla_x c(x,u) \leq Gd, \nabla_u c(x,u) \leq Gd$.
\label{ass:lipschitz}
\end{assumption}

\begin{remark}[System Assumptions]
Assumption \ref{ass:sys}.(i) is the equivalent of stability assumption used in time invariant systems. Such an assumption is typically used in online control when the system is unknown; see for eg., \cite{simchowitz2020improper, minasyan2021online}. Assumption \ref{ass:sys}.(iii) that noise is bounded is necessary, especially in the non-stochastic setting \cite{agarwal2019online, simchowitz2020improper}. The assumption on cost functions is also standard \cite{agarwal2019online}. 
\end{remark}

\begin{definition}
(i) $\mathcal{M}\! :=\!\! \{M = (M^{[1]},\dots,M^{[h]})\!:\! \norm{M^{[k]}} \leq \kappa_M \}$ (Disturbance Action Policy Class). (ii) $\mathcal{G} = \{ G^{[1:h]}: \norm{G^{[k]}}_2 \leq \kappa_a\kappa_b(1-\gamma)^{k-1} \}$. (iii) Setting (S-1): Matched disturbance system with convex cost functions: $B_{t} = B_{t,w}, C = I, e_t = 0$. Setting (S-2): General system with linear cost functions: $B_{t,w} = I$, and there exists a coefficient $\alpha_t \in \mathbb{R}^{p+m}$ such that $c_t(y,u) = \alpha^\top_tz$, $\norm{\alpha^\top_t} \leq G$.
\label{def:drc-class}
\end{definition} 
\section{Online Learning Control Algorithm}
\label{sec:onlinelearning}

Typically, online learning control algorithms for time-invariant dynamical systems explore first for a period of time, and then exploit, i.e., adapt or optimize the control policy. While, in the time-invariant case, this strategy results in sub-linear regret, in the time-varying case, it can be less effective. For instance, consider the case where the system remains unchanged for the duration of the exploration phase and then changes around the instant when the exploration ends. Clearly, in this case, the estimate made at the end of the exploration phase will be very distant from the underlying system parameter realized after the exploration phase and therefore not result in a sub-linear regret.


We propose an online algorithm that {\it continuously learns to compute an estimate of the time varying system parameters and that simultaneously optimizes the control policy online}. Our estimation algorithm combines (i) a change point detection algorithm to detect the changes in the underlying system and (ii) a regular estimation algorithm. The online algorithm runs an online optimization parallel to the estimation to optimize the parameters of the control policy, which in our case is a DAC policy. 

{\bf Online Optimization}: Since the cost functions and the disturbances are unknown a priori, the optimal parameter $M$ of the DAC policy cannot be computed a priori. Rather, the parameters have to be adapted online continuously with the information gathered along the way to achieve the best performance. Given the convexity of the cost functions and the linearity of the system dynamics, we can apply the Online Convex Optimization (OCO) framework to optimize the policy parameters online. 
 
 We call a policy that learns the DAC policy parameters online as an online DAC policy. We formally denote such a policy by $\pi_{\mathrm{DAC-O}}$. Let the parameters estimated by $\pi_{\mathrm{DAC-O}}$ be denoted by $M_t = \left[M^{[1]}_t, \dots, M^{[h]}_t \right]$. Given that the parameter $M_t$ is continuously updated, the control input $u^{\pi_{\mathrm{DAC-O}}}_t$ can be computed by,
\beq 
u^{\pi_{\mathrm{DAC-O}}}_t = \sum_{k=1}^h M^{[k]}_t w_{t-k}.
\label{eq:drc-fk-o}
\eeq 
Given that the realized cost is dependent on the past control inputs, we will have to employ an extension of the OCO framework called Online Convex Optimization with Memory (OCO-M) to optimize the parameters of the DAC policy.

For the benefit of the readers, we briefly review the online convex optimization (OCO) setting (see \cite{hazan2008adaptive}). OCO is a game played between a player who is learning to minimize its overall cost and an adversary who is attempting to maximize the cost incurred by the player. At any time $t$, the player chooses a decision $M_t$ from some convex subset $\mathcal{M}$ given by $\max_{M \in \mathcal{M}} \norm{M} \leq \kappa_M$, and the adversary chooses a convex cost function $f_t(\cdot)$. As a result, the player incurs a cost $f_t(M_t)$ for its decision $M_t$. The goal of the player is to minimize the regret over a duration $T$, given by 
\begin{equation}
R_T  = \sum_{t=1}^{T} f_t(M_t) - \min_{M \in \mathcal{M}} \sum_{t=1}^{T} f_t(M). \nonumber 
\label{eq:regret-oco}
\end{equation}
The challenge is that the player does not know the cost function that the adversary will pick. Once the adversary picks a cost function, the player observes the realized cost and in some cases can also observe the full cost function. The objective of the learner is to achieve the minimal regret or at the least a sub-linear regret. We direct the readers to \cite{hazan2008adaptive} for a more detailed exposition and the various algorithmic approaches for this problem.

The difference in the OCO-M setting is that the cost functions can be dependent on the history of past decisions up to a certain time. Let the length of the history dependence be denoted by $h$. The regret in the OCO-M problem is then given by
\begin{equation}
R_T  = \sum_{t=1}^{T} f_t(M_{t-h:t}) - \min_{M \in \mathcal{M}} \sum_{t=1}^{T} f_t(M). \nonumber 
\label{eq:regret-oco-m}
\end{equation}

One limitation of the OCO-M framework is that it can only be applied when the length $h$ is fixed or bounded above. In a control setting though, the cost is typically a function of the state or the output, which is dependent on the full history of decisions $M_{1:t}$, the length of which grows unbounded with the duration of the control episode. Let 
\begin{align}
    & G_{t} = [G^{[1]}_{t}, G^{[2]}_{t}, \dots, G^{[h]}_{t}], \widetilde{G}_{t} = [\widetilde{G}^{[1]}_{t}, \widetilde{G}^{[2]}_{t}, \dots, \widetilde{G}^{[t-1]}_{t}], \nonumber \\
    & \widetilde{G}^{[k]}_t = C_t A_{t-1} \dots A_{t-k+2}A_{t-k+1}, ~ \forall ~  k \geq 2, ~ \widetilde{G}^{[1]}_t = C_t, \nonumber \\
    & G^{[k]}_t = C_t A_{t-1} \dots A_{t-k+2}A_{t-k+1}B_{t-k}, ~ \forall ~  k \geq 2, \nonumber 
\end{align}

and  $G^{[1]}_t = C_tB_{t-1}$. Thus, the history of dependence increases with $t$ and is not fixed. In order to apply the OCO-M framework, typically, a truncated  output $\tilde{y}_t$ is constructed, whose dependence on the history of control inputs is limited to $h$ time steps:
\begin{align}
& \tilde{y}^{\pi_{\mathrm{DAC-O}}}_t[M_{t:t-h} \vert G_t, s_{1:t}] = s_t + \sum_{k = 1}^{h} G^{[k]}_t u^{\pi_{\mathrm{DAC-O}}}_{t-k},  \nonumber \\
& ~ \tn{where} ~ s_t = y_t - \sum_{k=1}^{t-1} G^{[k]}_t u^{\pi_{\mathrm{DAC-O}}}_{t-k}. \nonumber 
\end{align} 

Using the truncated output, a truncated cost function $\tilde{c}_t$ is constructed as
\begin{align}
& \tilde{c}_t(M_{t:t-h}\vert G_t , s_{1:t}) \nonumber \\
& = c_t(\tilde{y}^{\pi_{\mathrm{DAC-O}}}_t[M_{t:t-h}\vert G_t , s_{1:t}],u^{\pi_{\mathrm{DAC-O}}}_t). \nonumber 
\end{align}

We denote the function $\tilde{c}_t(M_{t:t-h}\vert G_t , s_{1:t})$ succinctly by $\tilde{c}_t( M \vert G_t , s_{1:t})$ when each $M_k$ in $M_{t:t-h}$ is equal to $M$. This denotes the (truncated) cost that would have been incurred had the policy parameter been fixed to $M$ at all the past $h$ time steps. 

A standard gradient algorithm for OCO-M framework updates the decision $M_t$ by the gradient of the function $f_t(M_{t:t-h})$ with all $M_k$ in $M_{t:t-h}$ fixed to $M_t$. Using the same compact notation as above, this gradient is equal to $\partial f_t(M_{t})$. An interpretation of this gradient is that, it is the gradient of the cost that would have been incurred had the policy parameter been fixed at $M_t$ the past $h$ time steps. We employ the same idea to update the policy parameters of the DAC policy online. The online optimization algorithm we propose updates the policy parameter $M_t$ by the gradient of the cost function $\tilde{c}_t(M_t\vert G_t , s_{1:t})$ where each $M_k$ in $M_{t:t-h}$ is fixed to $M_t$, i.e., as 
\beq 
M_{t+1} = \tn{Proj}_{\mathcal{M}}\left(M_t - \eta \frac{ \partial \tilde{c}_t(M_t\vert G_t , s_{1:t})}{\partial M_t}\right), \label{eq:onlineudpate}
\eeq 
where $\mathcal{M}$ is a convex set of policy parameters.

\begin{definition}[Disturbance Action Policy Class]
$\mathcal{M} := \{M = (M^{[1]},\dots,M^{[h]}): \norm{M^{[k]}} \leq \kappa_M \}$
\label{def:drc-class}
\end{definition}

{\bf Optimization for Dynamic Regret}: The online optimization procedure described above can only fetch a sub-linear regret for static regret. To fetch a sub-linear dynamic regret, multiple online optimizers like in Eq. \eqref{eq:onlineudpate} are required to be run parallelly as in \cite{zhao2022non}. Let's index the parallel learners by $i$ and let the parameters corresponding to the learner $i$ be $M_{t,i}$. Just as in \cite{zhao2022non}, the final parameter $M_t$ is computed by $M_t = \sum_{i=1}^H p_{t,i} M_{t,i}$, where $p_{t,i}$ are a set of weights such that $\sum_{i=1}^N p_{t,i} = 1$ and $p_{t,i}$ are also updated online along with $M_{t,i}$s. Specifically, $p_{t,i}$ is updated by $p_{t+1,i} \propto p_{t,i} e^{-l_{t,i}(M_{t,i})}$, where $l_{t,i}(M) = \zeta \norm{M_{t,i} - M_{t-1,i}} + \langle M_{t,i},\partial \tilde{c}_t(M_t\vert G_t , s_{1:t})\rangle$. The $M_{t,i}$s are updated by
\beq 
M_{t+1,i} = \tn{Proj}_{\mathcal{M}}\left(M_{t,i} - \eta_i \frac{ \partial \tilde{c}_t(M_t\vert G_t , s_{1:t})}{\partial M_t}\right), 
\label{eq:onlineudpate-mul}
\eeq 
The complete online optimization algorithm is given in Algorithm \ref{alg:olc-fk}.

\LinesNumberedHidden{\begin{algorithm}[]
\DontPrintSemicolon
\KwInput{$\zeta, H$, Step sizes $\eta_i$s, parameters $\theta_{1:T}$.}
 
Initialize $M_{1,i} \in \mathcal{M}$ arbitrarily for all $i \in [1,H]$

Initialize $p_{1,i} \propto 1/(i^2+i)$ for all $i \in [1,H]$
 
  \For{t = 1,\dots,T}    
  { 
	
  Apply $u^{\pi_{\mathrm{DAC-O}}}_t = \sum_{k=1}^h M^{[k]}_t w_{t-k}$
	
  Observe $c_t, w_t$ and incur cost $c_t(y^{\pi_{\mathrm{DAC-O}}}_t,u^{\pi_{\mathrm{DAC-O}}}_t)$

  Compute: $l_{t,i} = \zeta \norm{M_{t,i} - M_{t-1,i}} + \langle M_{t,i},\partial \tilde{c}_t(M_t\vert G_t , s_{1:t})\rangle$ for all $i \in [1,H]$

  Update: $p_{t+1,i} \propto p_{t,i} e^{-l_{t,i}}$ for all $i \in [1,H]$
  
	
  Update: $M_{t+1,i} = \tn{Proj}_{\mathcal{M}}\left(M_{t,i} - \eta_i \frac{ \partial \tilde{c}_t(M_t\vert G_t , s_{1:t})}{\partial M_t}\right)$
  }

\caption{Online Learning Control with Full Knowledge (OLC-FK) Algorithm \cite[scream.control]{zhao2022non}  }
\label{alg:olc-fk}
\end{algorithm}
}
{\bf Main Result}: We state the performance of the algorithm OLC-FK formally below. 



\begin{theorem}[Full System Knowledge]
Suppose the setting is the general setting S-2, and the cost functions are general convex functions. Then, under Algorithm \ref{alg:olc-fk} \cite[scream.control]{zhao2022non}, with $h = \frac{\log{T}}{\left(\log{\left({1}/{1-\gamma}\right)}\right)}$, $H = \mathcal{O}(\log(T))$, $\zeta = \mathcal{O}(h^2)$ and $\eta_i = \mathcal{O}(2^{i-1}/\sqrt{\zeta T})$, the regret with respect to any DAC policy $M^\star_{1:T}$,
\[ R_T \leq \mathcal{O}\left(\sqrt{T(1+P_T)}\right), \] 
 where $P_T$ is the path length of the sequence $M^\star_{1:T}$.
\label{thm:olc-fk}
\end{theorem}

The proof follows from a standard proof for online optimization. Please see Appendix \ref{sec:th1-proof} for the full proof. 

\subsection{Disturbance Action Control without System Knowledge}

In the previous case, where the system parameters are known, the control policy parameters are optimized online through the truncated cost $\tilde{c}_t(\cdot)$, whose construction explicitly utilizes the knowledge of the underlying system parameters $G^{[k]}_t$. In this case, since the underlying system parameters are not available, we construct an estimate of the truncated state and the truncated cost by estimating the underlying system parameters $G^{[k]}_{t}$s. With this approach, the control policy will have to solve an online estimation problem to compute an estimate of the system parameters. Since the parameters are time-variant, the online estimation has to be run throughout, unlike the other online estimation approaches \cite{simchowitz2020improper, muthirayan2021online}, along with the policy optimization. Below, we describe in detail how our algorithm simultaneously performs estimation and optimizes the control policy. 

{\bf Online Estimation and Optimization}: The Online Learning Control with Zero Knowledge (OLC-ZK) of the system parameters has two components: (i) {\it a control policy} and (ii) {\it an online estimator} that runs in parallel to the control policy and throughout the control episode. The control policy and online optimization algorithm is similar to the online algorithm \ref{alg:olc-fk}, except that the control policy parameters are updated through an estimate of the truncated cost function. The online estimation algorithm employs a change point detection to identify the changes in the underlying system and a standard estimation algorithm to estimate the underlying system that is restarted after every detection of change. 
We discuss the details of our algorithm below.

{\bf A. Online Control Policy}: We use the same notation for the control policy and the control input, i.e., $\pi_{\mathrm{DAC-O}}$ and $u^{\pi_{\mathrm{DAC-O}}}_t$ respectively. The estimation algorithm constructs an estimate $\widehat{G}^{[k]}_t$ of the parameters $G^{[k]}_t$ of the system in Eq. \eqref{eq:syseq} for $k \in [1,h]$. Thus, the estimation algorithm estimates $G^{[k]}_t$s only for a truncated time horizon (looking backwards), i.e., for $k \in [1,h]$. We describe the estimation algorithm later. 

The policy $\pi_{\mathrm{DAC-O}}$ computes the control input $u^{\pi_{\mathrm{DAC-O}}}_t$ (zero knowledge case) by combining two terms: (i) {\it disturbance action control} just as in the full knowledge case and (ii) a {\it perturbation} for exploration. In this case, we require an additional perturbation, just as in \cite{dean2018regret}, so as to be able to run the estimation parallel to the Online DAC, the control for regulating the cost. Let $\tilde{u}^{\pi_{\mathrm{DAC-O}}}_t[M_{t} \vert w_{1:t}] = \sum_{k=1}^h M^{[k]}_t w_{t-k}$. Therefore, the total control input by $\pi_{\mathrm{DAC-O}}$ is given by
\beq 
u^{\pi_{\mathrm{DAC-O}}}_t = \underbrace{\tilde{u}^{\pi_{\mathrm{DAC-O}}}_t[M_{t} \vert w_{1:t}]}_{\tn{DAC}} + \underbrace{\delta u^{\pi_{\mathrm{DAC-O}}}_t}_{\tn{Perturbation}}.
\label{eq:cp-zk-o}
\eeq 

As in \cite{dean2018regret}, we apply a Gaussian random variable as the perturbation, i.e.,
\beq
\delta u^{\pi_{\mathrm{DAC-O}}}_t \sim \mathcal{N}(0,\sigma^2I),
\label{eq:pert-zk-o}
\eeq
where $\sigma$ denotes the standard deviation, and is a constant to be specified later.

In this case the policy parameters are optimized by applying OCO-M on an estimate of the truncated cost. To construct this estimate, we construct an estimate of $s_t$ and the truncated state $\tilde{x}^{\pi_{\mathrm{DAC-O}}}_t(\cdot)$. Given that $s_t$ is the state response when the control inputs are zero, we estimate $s_t$ by subtracting the contribution of the control inputs from the observed state:
\begin{align}
& \hat{s}_t = \sum_{k = 1}^{h} \widehat{G}^{[k]}_t w_{t-k} (\tn{S-1}) \nonumber \\
& \hat{s}_t = y^{\pi_{\mathrm{DAC-O}}}_t - \sum_{k = 1}^{h} \widehat{G}^{[k]}_t u^{\pi_{\mathrm{DAC-O}}}_{t-k} (\tn{S-2}).
\label{eq:natst-est}
\end{align}
Then the estimate of the truncated output follows by substituting $\hat{s}_t$ in place $s_t$ and using the estimated $\widehat{G}_t$ in place ${G}_t$:
\[ \tilde{\tilde{y}}^{\pi_{\mathrm{DAC-O}}}_t[M_{t:t-h} \vert \widehat{G}_t, \hat{s}_{1:t}] = \hat{s}_t + \sum_{k = 1}^{h} \widehat{G}^{[k]}_t \tilde{u}^{\pi_{\mathrm{DAC-O}}}_{t-k}. 
\]

Then, the estimate of the truncated cost is calculated as
\begin{align}
& \tilde{c}_t(M_{t:t-h}\vert \widehat{G}_t , \hat{s}_{1:t})\nonumber \\
& = c_t(\tilde{\tilde{y}}^{\pi_{\mathrm{DAC-O}}}_t[M_{t:t-h}\vert \widehat{G}_t , \hat{s}_{1:t}],\tilde{u}^{\pi_{\mathrm{DAC-O}}}_t). \nonumber 
\end{align}

The online update to the policy parameters is just as in Algorithm \ref{alg:olc-fk}, i.e., by the gradient of the estimate of the truncated cost
\[ 
M_{t+1,i} = \tn{Proj}_{\mathcal{M}}\left(M_{t,i} - \eta_i \frac{ \partial \tilde{c}_t(M_t\vert \widehat{G}_t , \hat{s}_{1:t})}{\partial M_t}\right).  
\]

{\bf A. Online Estimation}: The online estimation algorithm is a combination of a change point detection algorithm and a regular estimation algorithm. The change point detection algorithm detects changes larger than a certain threshold and resets the estimation algorithm upon every detection. The estimation algorithm continuously updates the estimates using all of the data from the last reset point. This offers the online learner more flexibility as it only resets whenever there is a significant underlying change, while it continues to refine the estimate otherwise. Thus, the change point detection approach can track the time variations more optimally. This is observed to be the case in the numerical simulations. 

{\bf A.1. Change Point Detection}: The goal of the Change Point Detection (CPD) algorithm is to detect the underlying changes in the system reliably. To do this, the CPD algorithm runs a sequence of independent estimation algorithms each of duration $t_p = N+h$ one after the other, where the estimation algorithms are the standard least-squares estimation applied to the data collected from the respective periods of duration $t_p = N+h$. Here, $t_p$ has to be necessarily greater than $h$, since computing the estimate of $G_t$ requires at least a length of $h$ inputs. Essentially, the CPD algorithm ignores the past and only considers the recent history to compute an estimate of the system parameters. This allows the CPD algorithm to compute a reliable estimate of the current values of the parameters of the system provided $N$ is of adequate size and at the same time not very large. Then, provided the estimation in each period of duration $N$ is an accurate estimate of the system parameter values in the respective periods, any change point can be detected by comparing the estimates across the different periods. More specifically, if the estimate at the end of a period is greater than a certain threshold compared to the estimate from an earlier period, we can proclaim change point detection. 

We denote the index of the successive periods of duration $t_p$ by $k$. We denote the start and end time of each of these periods by $t^k_s$ and $t^k_e$. Therefore, it follows that $t^k_s = t^{k-1}_e$ for all $k$. The CPD algorithm computes the following least-squares estimate at the end of each period $k$
\begin{align}
& \widehat{G}^{\mathrm{cd}}_k = \argmin_{\widehat{G}} \sum_{p = t^k_s+h}^{t^k_e} \ell_p\left(\widehat{G}\right) +  \lambda \norm{\widehat{G}}^2, \nonumber \\
& \lambda > 0 , ~ \ell_p\left(\widehat{G}\right) = \norm{y^{\pi_{\mathrm{DAC-O}}}_p - \sum_{l=1}^h \widehat{G}^{[l]}\delta u^{\pi_{\mathrm{DAC-O}}}_{p-l}}^2. 
\label{eq:cd-est-k}
\end{align}

We denote the first period of duration $t_p$, after a detection, as the baseline period with index $k=1$. By default, the very first period of duration $t_p$ at the beginning of the control episode is also a period with index $k=1$. The CPD algorithm {\bf proclaims change point detection, when at the end of a period $k$}
\beq
\norm{\widehat{G}^{\mathrm{cd}}_k - \widehat{G}^{\mathrm{cd}}_\ell}_2 > \frac{2\beta}{\sigma\sqrt{N}}, ~~ \tn{for any} ~~ \ell ~~ \tn{s.t.} ~~ 1 \leq \ell < k. \nonumber 
\eeq 
where $\beta$ is a constant to be defined later.

\LinesNumberedHidden{\begin{algorithm}[h]
\DontPrintSemicolon
\KwInput{Step sizes $\eta_{1:H}$, $H, \zeta, \sigma, \beta, N, h$}
 
Initialize $M_{1,i} \in \mathcal{M}$ arbitrarily $\forall ~ i \in [1,H]$, $t_d = 1, k = 1, t_s = 1, t_e = N+h$.

Initialize $p_{1,i} \propto 1/(i^2+i)$ for all $i \in [1,H]$.
 
  \For{t = 1,\dots,T}{ 
   Observe $y^{\pi_{\mathrm{DAC-O}}}_t$.
   
  \If{$t == t_e$}{
  Estimate $\widehat{G}^{\mathrm{cd}}_k$ according to Eq. \eqref{eq:cd-est-k}.
   
  \eIf{$k > 1$}{
  \eIf{$\norm{\widehat{G}^{\mathrm{cd}}_k - \widehat{G}^{\mathrm{cd}}_\ell}_2 > \frac{(2)\beta}{\sigma\sqrt{N}}$ ~ for any $1 \leq \ell < k$}{
  Proclaim change point detection. Set $t_d = t$. Set $k = 1$.}{$k=k+1$.}}
  {$k=k+1$.}
   
  $t_s = t_e, ~~ t_e = t_s+N+h-1$.
   
  }
   Compute $\widehat{G}_t$ according to Eq. \eqref{eq:ls-est}.
   
   Apply $u^{\pi_{\mathrm{DAC-O}}}_t$ from Eq. \eqref{eq:cp-zk-o}.
	
   Observe $c_t, w_t$ and incur cost $c_t(y^{\pi_{\mathrm{DAC-O}}}_t,u^{\pi_{\mathrm{DAC-O}}}_t)$.

   Compute: $l_{t,i} = \zeta \norm{M_{t,i} - M_{t-1,i}} + \langle M_{t,i},\partial \tilde{c}_t(M_t\vert \widehat{G}_t , \hat{s}_{1:t})\rangle$ for all $i \in [1,H]$

   Update: $p_{t+1,i} \propto p_{t,i} e^{-l_{t,i}}$ for all $i \in [1,H]$
	
   Update: $M_{t+1,i} = \tn{Proj}_{\mathcal{M}}\left(M_{t,i} - \eta_i \frac{ \partial \tilde{c}_t(M_t\vert \widehat{G}_t , \hat{s}_{1:t})}{\partial M_t}\right)$.
   }

\caption{Online Learning Control with Change Point Detection (OLC-ZK-CPD) Algorithm }
\label{alg:olc-zk-cd}
\end{algorithm}}

{\bf A.2. System Estimation}: Upon detection of a change by the CPD algorithm, the online estimation algorithm restarts the estimation of the system parameters after a delay of $h$. Let $t_d$ denote the most recent time of detection by the CPD algorithm. Then, the estimate of the system parameters for any time $t \geq t_d + 2h$ is given by
\begin{align}
& \widehat{G}_t = \tn{Proj}_{\mathcal{G}}(\widehat{G}^{\star}_t), ~~ \widehat{G}^{\star}_t = \argmin_{\widehat{G}} \sum_{p = t_d+h}^{t - h} \ell_p\left(\widehat{G}\right) +  \lambda \norm{\widehat{G}}^2, \nonumber \\
& \ell_p\left(\widehat{G}\right) = \norm{y^{\pi_{\mathrm{DAC-O}}}_p - \sum_{l=1}^h \widehat{G}^{[l]}\delta u^{\pi_{\mathrm{DAC-O}}}_{p-l}}^2.
\label{eq:ls-est}
\end{align}

{\bf Main Result}: The complete algorithm for the unknown system case is shown in Algorithm \ref{alg:olc-zk-cd}. We state the performance of the algorithm OLC-ZK-CPD formally below. 

\begin{definition}[Parameters]
$h = \frac{\log{T}}{\log\left({1}/(1-\gamma\right)},  
$ and $ \beta =\! 2\sqrt{h}\zeta_\Delta\left( \!\!\sqrt{ n\log\left(2\right) + 2\log\left(\frac{2h}{\delta}\right)} \!\!+ \!\frac{\lambda\kappa_a\kappa_b}{\gamma\zeta_\Delta\sigma\sqrt{hN}}\!\!\right)$,
where $\zeta_\Delta = \left(R_s + \frac{\kappa_a\kappa_b\kappa_m\kappa_wh}{\gamma} + \frac{\kappa_a\kappa_bR_u}{\gamma}\right), R_u = \kappa_M\kappa_wh + 3\sigma\sqrt{m+\log(1/\delta)}$, $R_s = \frac{\kappa_a\kappa_w}{\gamma} + \kappa_e + \frac{2R_u\kappa_a\kappa_b}{\gamma}$ and $\delta > 0$ is a constant. $N = \Gamma^{-0.8}_TT^{4/5}, ~ \sigma = \Gamma^{0.2}_T T^{-1/5}$.
\label{def:par-setting}
\end{definition}

\begin{theorem}[Zero System Knowledge]
 Consider Algorithm \ref{alg:olc-zk-cd} with the parameters given by Definition \ref{def:par-setting}. Suppose $T \geq 3$, $H = \mathcal{O}(\log(T))$, $\zeta = \mathcal{O}(h^2)$, $\eta_i = \mathcal{O}(2^{i-1}/\sqrt{\zeta T})$,  
 $\lambda \propto \mathcal{O}(1)$, $\Gamma_T = \widetilde{O}(T^d), ~ d < 1$ and the setting is either S-1 or S-2. Then, for $\delta \leq 1/T$, $\widetilde{\delta}$ arbitrarily small and $\delta \leq \widetilde{\delta}$, the regret with respect to any DAC policy $M^\star_{1:T}$, 
$ R_T \leq \widetilde{\mathcal{O}}\left(\sqrt{T(1+P_T)} + \Gamma^{1/5}_TT^{4/5}\right)$
 with probability greater than $1-\widetilde{\delta}$, where $P_T$ is the path length of the sequence $M^\star_{1:T}$. 
\label{thm:olc-zk-cd}
\end{theorem}

Please see Appendix \ref{sec:th2-proof} and \ref{sec:th2-proof-1} for the full proof.

\begin{definition}[Switching DAC Policy]
We defining a switching DAC policy as a policy which shifts its policy parameter $M$ at the instances of change in the underlying system.
\end{definition}

\begin{corollary}[Best Switching Policy]
Suppose the setting is either S-1 or S-2. Then, under the parameter setting of Theorem \ref{thm:olc-zk-cd}, for any $\delta \leq 1/T$, $\widetilde{\delta}$ arbitrarily small and $\delta \leq \widetilde{\delta}$, the regret with respect to the best switching policy $M^\star_{1:T}$, 
$ R_T \leq \widetilde{\mathcal{O}}\left(\Gamma^{1/5}_TT^{4/5}\right)$
 with probability greater than $1-\widetilde{\delta}$. 
\label{cor:olc-zk-cd}
\end{corollary}

This is a straightforward conclusion that follows from Theorem \ref{thm:olc-zk-cd} after recognizing the fact that the number of switches of the switching policy is $\Gamma_T$. 


\begin{remark}[Regret Result]
Minasyan et al. \cite{minasyan2021online} prove an adaptive regret bound of $\widetilde{\mathcal{O}}\left(\vert I \vert \sigma_I + T^{2/3}\right)$ for any interval $I$ of length $\vert I \vert$, where $\sigma_I$ is the square root of the average squared deviation of $G_t$ over the interval $I$. The key difference compared to \cite{minasyan2021online} is that our result is sub-linear with respect to the number of changes $\Gamma_T$ instead of $\sigma$, and we present a dynamic regret bound that is $\widetilde{\mathcal{O}}\left(\sqrt{T(1+P_T)} + \Gamma^{1/5}_TT^{4/5}\right)$. To compare with \cite{minasyan2021online}, lets consider the best switching policy corresponding to the switches in the underlying system. Let $I$ be any interval where the system does not change and let $M^\star_k$ correspond to the best policy parameter for the interval $k$. Then, the regret achieved by \cite{minasyan2021online} with respect to $M^\star_{1:\Gamma_T}$ is $\Gamma_T T^{2/3}$. The regret achieved by our algorithm is $\widetilde{\mathcal{O}}\left(\Gamma^{1/5}_TT^{4/5}\right)$, which follows from the fact that $P_T = \mathcal{O}(\Gamma_T)$. It follows that, we can achieve a sub-linear regret guarantee for $\Gamma_T = \widetilde{O}(T^d)$ for any $d < 1$, whereas the achievability of sub-linear regret in \cite{minasyan2021online} is limited to scenarios with $\Gamma_T = o(T^{1/3})$. 
\end{remark}

\begin{remark}[Unknown Time Variation]
Algorithm \ref{alg:olc-zk-cd} assumes the knowledge of total number of changes. We can extend our algorithm to the unknown time variation case by learning the optimal interval period $N$ and optimal $\sigma$ from an ensemble by using a meta-bandit algorithm on top of Algorithm \ref{alg:olc-zk-cd} just as in \cite{zhao2020simple}. 
We plan to incorporate this in our journal version.
\end{remark}


\begin{figure}
\centering
\subfigure[]{%
\centering
  \includegraphics[scale=.2]{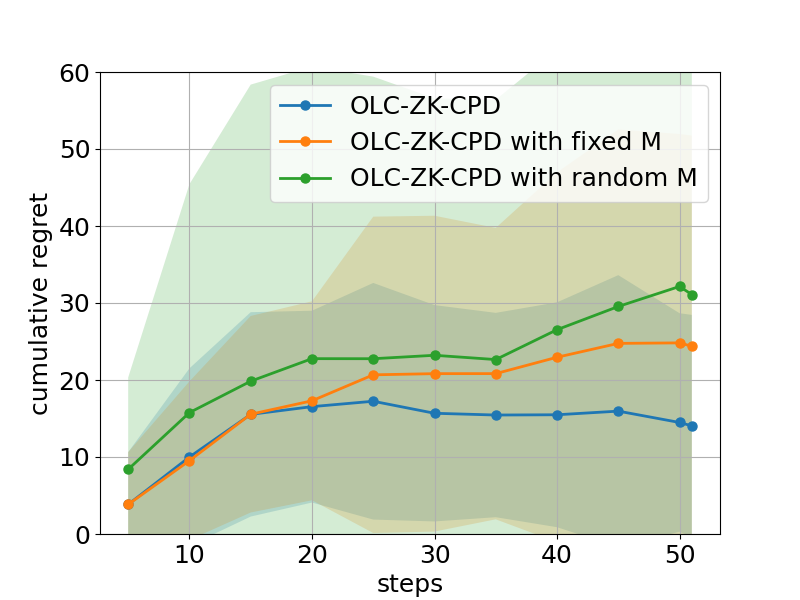}
  \label{POZK}}
\subfigure[]{
  \centering
  \includegraphics[scale=.2]{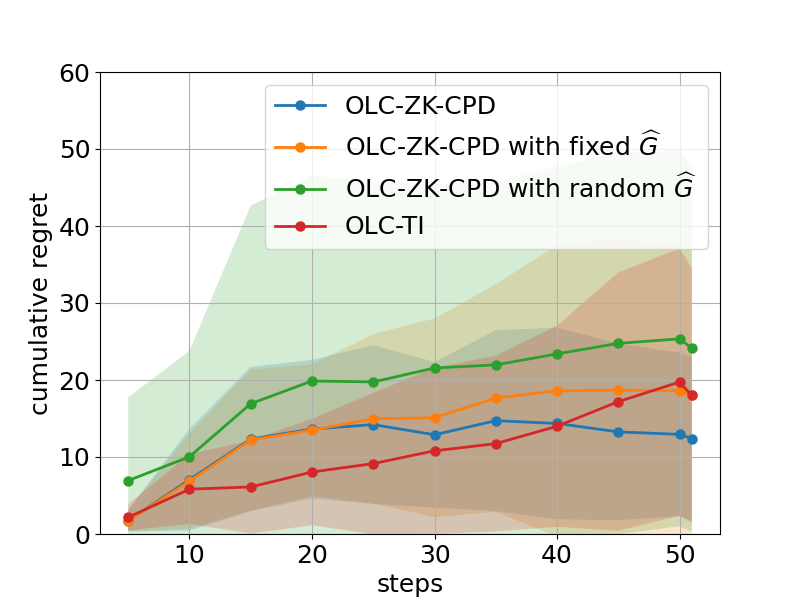}
  \label{different_G}}
\caption{(a) Cumulative regret of OLC-ZK-CPD with different $M$ estimation; (b) Cumulative regret of OLC-ZK-CPD with different $\widehat{G}$ estimation.}
\label{fig:std-0}
\end{figure}

\begin{figure}[h]
\centering
\subfigure[$h=2,N=4$]{%
  \centering
  \includegraphics[scale=.2]{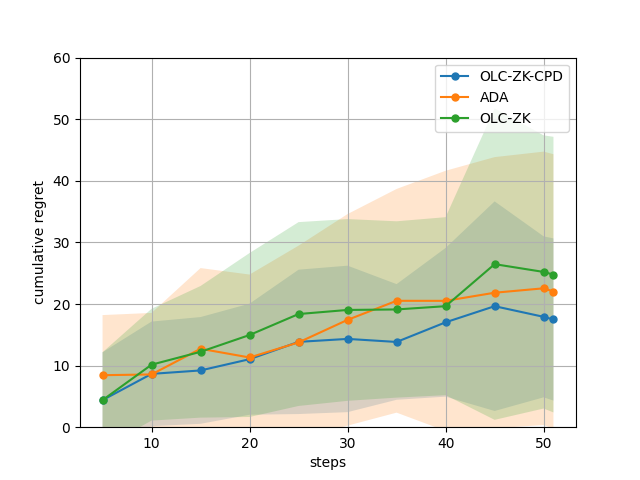}
	\label{h2N4}}
\subfigure[$h=2,N=6$.]{%
  \centering
  \includegraphics[scale=.2]{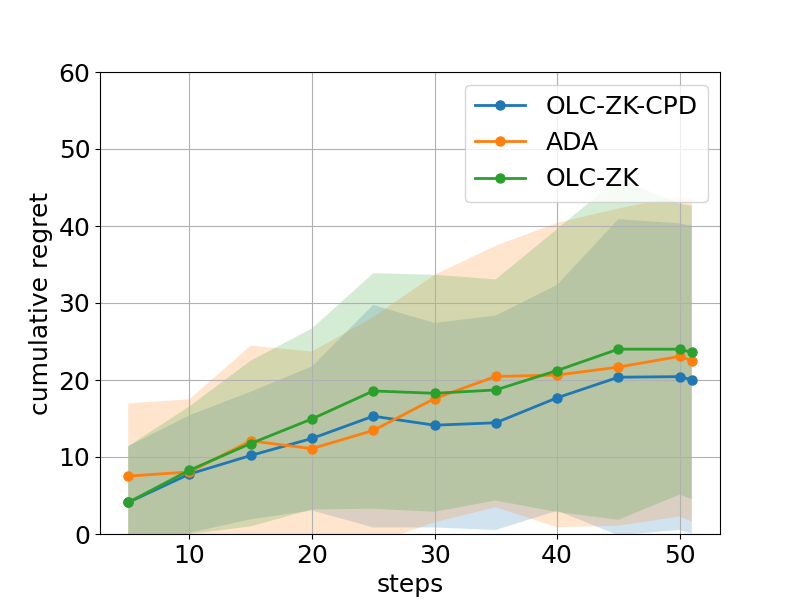}
	\label{h2N6}}
  \subfigure[$h=4,N=4$.]{%
  \centering
  \includegraphics[scale=.2]{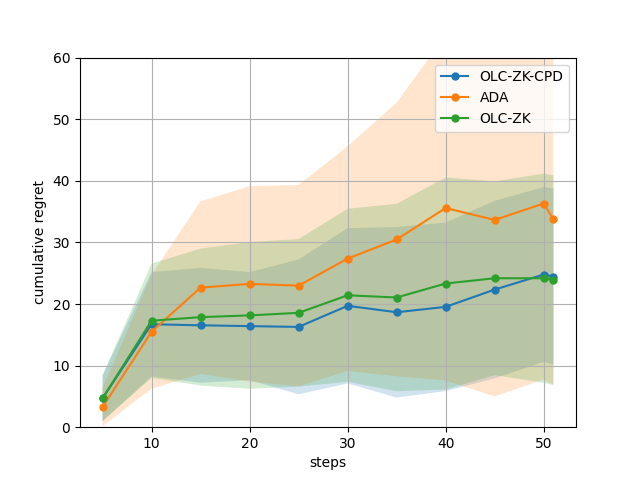}
	\label{h4N4}}
 \subfigure[$h=4,N=6$.]{%
  \centering
  \includegraphics[scale=.2]{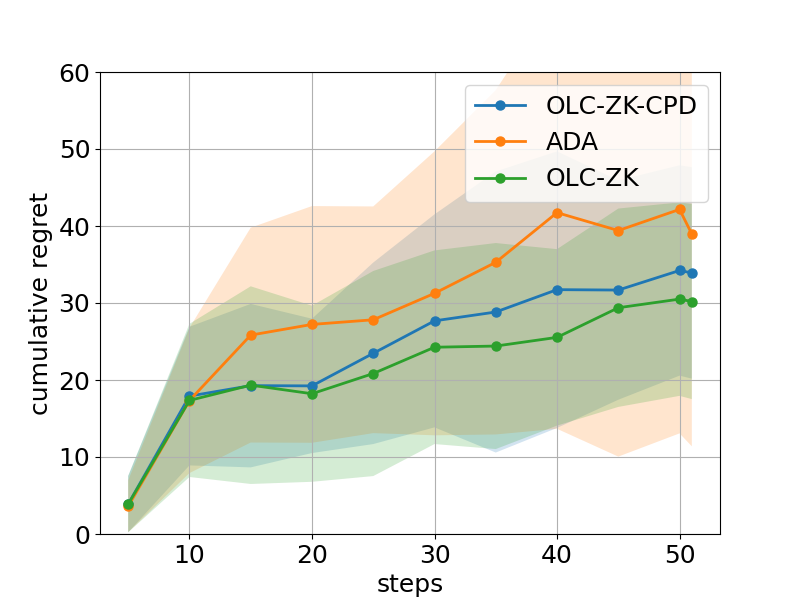}
	\label{h4N6}}
\caption{Performance Comparison with Baseline Algorithms for Time Varying Dynamical Systems.}
\label{fig:std}
\end{figure}

\section{Numerical Experiments}

In this section, experimental results are presented for illustrating the performance of OLC-ZK-CPD. 

\textbf{Parameter setting:} For all experiments, $\theta_t:=[A_t,B_t]$ and $w_t$ are randomly generated at each time step; $C_t$ is randomly initialized, but is kept unchanged across all time steps: $C_{t_1} = C_{t_2}, ~ \forall ~ t_1, t_2 \in[1,t]$, and $e_t=0, ~ \forall t \in [1,t]$.
The cost function is a quadratic function of $y_t$ and $u_t$: $c_t(y_t,u_t) = y_t^T Q y_t + u_t^T R u_t$. The matrices $Q$ and $R$ are randomly generated positive semi-definite matrices. Experiments are averaged over 10 random runs. In each run, all the algorithms use the same $Q, R, C, A_t$ and $B_t$. 

\textbf{Baselines}: Below, we describe the baseline algorithms we compare OLC-ZK-CPD with.
\begin{itemize}
\item {\it OLC-ZK}: is the online learning algorithm 
where the output $\widehat{G}^{cd}_k$ is itself used as the estimate of the system parameters for the duration of the period of the next interval of the change point detection procedure. At the end of the next interval, the estimate is updated to $\widehat{G}^{cd}_{k+1}$ and so on. 
\item {\it Adaptive Estimation Algorithm (ADA)}: is Algorithm \ref{alg:olc-zk-cd} with the estimation algorithm in \cite{minasyan2021online}, in place of the estimation approach in Algorithm \ref{alg:olc-zk-cd}. Essentially, in this combination, what is retained is only the policy parameter update step, with the entire estimation approach replaced by the adaptive estimation algorithm in \cite{minasyan2021online}. 
\item {\it OLC-TI}: is the online learning algorithm for time invariant systems \cite{simchowitz2020improper}. In contrast to ours, which continuously explores and exploits, OLC-TI explores first and then exploits. 
\item {\it OLC-ZK-CPD with fixed $M$}: is the online algorithm where $M$ is a fixed value and is not udpated. OLC-ZK-CPD with random $M$: the online algorithm where $M$ is picked randomly. 
\item {\it OLC-ZK-CPD with fixed $\widehat{G}$}: is the OLC-ZK-CPD algorithm with $\widehat{G}_t$ fixed to a constant value instead of an estimator. 
\item {\it OLC-ZK-CPD with random $\widehat{G}$}: is the OLC-ZK-CPD algorithm with $\widehat{G}_t$ picked randomly instead of an estimator. 
\end{itemize}

\textbf{Results:} In the figures, the shaded regions represent the standard deviation for the respective algorithms. \Cref{POZK} indicates that OLC-ZK-CPD has a smaller sub-linear increase in cumulative regret and smaller variance compared to the case when a fixed $M$ or a randomly generated $M$ is used instead.  
Similarly, it can be observed from \Cref{different_G} that the proposed OLC-ZK-CPD algorithm achieves a smaller sub-linear regret with smaller variance compared to the case when a fixed $\widehat{G}$ or a randomly generated $\widehat{G}$ is applied instead of Eq. \eqref{eq:ls-est}. Most importantly, while, initially the OLC-TI algorithm is better, over time its performance worsens and converges to the OLC-ZK-CPD with an arbitrarily fixed $\widehat{G}$. This is expected as the estimate from the initial exploration phase of OLC-TI can be very different from the underlying dynamical system after a sufficiently long time and thus behave like an arbitrarily fixed $\widehat{G}$ over time. These results corroborate the effectiveness of our proposed algorithm in adapting to time variations. 

In Fig. \ref{fig:std}, we compare the performance of OLC-ZK-CPD algorithm with the other adaptive algorithms for time-varying dynamical systems such as the OLC-ZK algorithm \cite{muthirayan2022adaptive} and the ADA algorithm. The plots are averaged over 10 random runs with parameters $N=[4,6]$ for $h=2$ and $N=[4,6]$ for $h=4$. In each run, all the algorithms are simulated with the same $Q, R, C, A_t$,  and $B_t$. The performance of ADA is unchanged with $N$ because it does not use the parameter $N$. 
We recall that $h$ defines the length of the history of disturbances in the DAC policy. For $h = 2$, we find that OLC-ZK-CPD achieves a better regret compared to ADA and OLC-ZK. For $h = 4$, we find that OLC-ZK-CPD achieves much better regret compared to ADA, and is similar to OLC-ZK, given that in one run, it is better compared to OLC-ZK, and in another run, it is worse. 
Overall, we also find that the statistical deviation of OLC-ZK-CPD is lesser compared to OLC-ZK and ADA, showing that OLC-ZK-CPD is more stable in the statistical sense. 






\section{Conclusion}

In this work, we study the problem of online control of unknown time varying dynamical systems with arbitrary disturbances and cost functions. Our goal is to design an online adaptation algorithm that can provably achieve sub-linear regret upto sub-linear variations in the system with respect to stronger notions of variability like the number of changes. We present system, information and cost structures along with algorithms which guarantee such results and also present some open questions. 


\bibliographystyle{IEEEtran} 
\bibliography{Refs.bib}

\onecolumn

 \begin{definition}
\[ \widetilde{D} := \max\left\{\kappa_M\kappa_wh +\frac{\kappa_a\kappa_w}{\gamma} + \kappa_e + \frac{\kappa_a \kappa_b \kappa_M\kappa_wh}{\gamma},1\right\}, ~ L_f := L\widetilde{D}\left( \kappa_a \kappa_b \kappa_w\sqrt{h} + \kappa_w\sqrt{h}\right) \]
\[ G_f := G\widetilde{D}hnm\left(\frac{\kappa_a\kappa_b \kappa_w }{\gamma}+ \kappa_w\right), ~ D := \sup_{M_1, M_2 \in \mathcal{M}} \norm{M_1 - M_2}.  \]
\end{definition}

\section{Proof of Theorem \ref{thm:olc-fk}}
\label{sec:th1-proof}

\begin{remark}
When the cost functions are linear, it follows that \\
(i) For a given $z^\top = [y^\top, u^\top], (z')^\top = [(y')^\top, (u')^\top]$, \[\norm{c_t(y,u) -  c_t(y',u')} \leq L\norm{z - z'},\] where $L = G$.\\
(ii) $\nabla_y c(y,u) \leq G, \nabla_u c(y,u) \leq G$.
\end{remark}

Since $u^{\pi_{\mathrm{DAC-O}}}_t = \sum_{k=1}^m M^{[k]}_t w_{t-k+1}$, lets define a function $f_t$ by
\[ f_t(M_{t:t-h}\vert G_t , s_{1:t}) := c_t(\tilde{y}^{\pi_{\mathrm{DAC-O}}}_t[M_{t:t-h}\vert G_t , s_{1:t}],u^{\pi_{\mathrm{DAC-O}}}_t). \]

We first prove the following two intermediate results. 
\begin{lemma}
Consider two policy sequences $(M_{t-h}, \dots, M_{t-k}, \dots, M_t)$ and $(M_{t-h}, \dots, \widetilde{M}_{t-k}, \dots, M_t)$, which differ only in the policy at time $t-k$, where $k \in \{0,1,\dots,h\}$. Then,
\begin{align}
& \left\vert f_t(M_{t-h} \dots M_{t-k} \dots M_t\vert G_t , s_{1:t}) -  f_t(M_{t-h} \dots \widetilde{M}_{t-k} \dots M_t\vert G_t , s_{1:t})\right \vert \nonumber \\
& \leq L_f \norm{M_{t-k} - \widetilde{M}_{t-k}}, \nonumber \\
& L_f = L\max\left\{\kappa_M\kappa_wh + \frac{\kappa_a\kappa_w}{\gamma} +  \kappa_e +\frac{\kappa_a \kappa_b \kappa_M\kappa_wh}{\gamma},1\right\}\left( \kappa_a \kappa_b \kappa_w\sqrt{h} + \kappa_w\sqrt{h}\right). \nonumber  
\end{align}
\label{lem:lipschitzbound}
\end{lemma}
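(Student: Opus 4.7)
The plan is to bound the cost change via the Lipschitz property of $c_t$ (Assumption \ref{ass:lipschitz}.(ii)) applied to the pair $(\tilde{y}^{\pi_{\mathrm{DAC-O}}}_t, u^{\pi_{\mathrm{DAC-O}}}_t)$, and then to separately bound how much each of these two arguments changes when $M_{t-k}$ is perturbed to $\widetilde{M}_{t-k}$. The cases $k = 0$ and $k \geq 1$ behave differently, so I would handle them separately and take the larger of the two bounds to define the single constant $L_f$.

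First I would observe how $M_{t-k}$ enters $f_t$. The truncated output $\tilde{y}^{\pi_{\mathrm{DAC-O}}}_t$ involves $\sum_{k=1}^h G^{[k]}_t u^{\pi_{\mathrm{DAC-O}}}_{t-k}$, so for $k \geq 1$, $M_{t-k}$ enters only through $u^{\pi_{\mathrm{DAC-O}}}_{t-k} = \sum_{j=1}^h M^{[j]}_{t-k} w_{t-k-j}$; for $k = 0$, $M_t$ enters only through the standalone argument $u^{\pi_{\mathrm{DAC-O}}}_t$ outside $\tilde{y}_t$. Thus the differences in the two arguments of $c_t$ can be computed cleanly: for $k \geq 1$, the $\tilde{y}_t$-argument changes by $G^{[k]}_t \sum_j (M^{[j]}_{t-k} - \widetilde{M}^{[j]}_{t-k}) w_{t-k-j}$ while the $u_t$-argument is unchanged; for $k = 0$, only the $u_t$-argument changes by $\sum_j (M^{[j]}_t - \widetilde{M}^{[j]}_t) w_{t-j}$. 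A Cauchy–Schwarz step converts $\sum_j \|M^{[j]}_{t-k} - \widetilde{M}^{[j]}_{t-k}\|\|w_{t-k-j}\|$ into $\kappa_w \sqrt{h}\,\|M_{t-k} - \widetilde{M}_{t-k}\|$, and Assumption \ref{ass:sys}.(i) at $k=1$ (with $C$ absorbed into the stability bound) gives $\|G^{[k]}_t\| \leq \kappa_a\kappa_b(1-\gamma)^{k-1} \leq \kappa_a\kappa_b$.

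Next I would bound $R = \max\{\|z\|,\|z'\|,1\}$ uniformly by $\widetilde{D}$. Since $\|u^{\pi_{\mathrm{DAC-O}}}_t\| \leq \kappa_M\kappa_w h$, $\|s_t\| \leq \tfrac{\kappa_a\kappa_w}{\gamma} + \kappa_e$ by stability and the disturbance/noise bounds, and $\|\sum_{k=1}^h G^{[k]}_t u^{\pi_{\mathrm{DAC-O}}}_{t-k}\| \leq \tfrac{\kappa_a\kappa_b\kappa_M\kappa_w h}{\gamma}$ by summing the geometric series from Assumption \ref{ass:sys}.(i), we get $\|\tilde{y}_t\| + \|u_t\| \leq \widetilde{D}$. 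Combining this with Assumption \ref{ass:lipschitz}.(ii) yields, for $k \geq 1$,
\[
|f_t(\ldots M_{t-k}\ldots) - f_t(\ldots \widetilde{M}_{t-k}\ldots)| \leq L\widetilde{D}\,\kappa_a\kappa_b\kappa_w\sqrt{h}\,\|M_{t-k} - \widetilde{M}_{t-k}\|,
\]
and for $k = 0$ the same bound but with $\kappa_w\sqrt{h}$ in place of $\kappa_a\kappa_b\kappa_w\sqrt{h}$. Taking the worst case (i.e. adding the two contributions to stay safe) recovers the claimed $L_f$.

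I do not expect a genuine obstacle here; the main care points are (i) keeping straight that $M_t$ only enters via the external $u^{\pi_{\mathrm{DAC-O}}}_t$ argument of $c_t$ while $M_{t-k}$ for $k \geq 1$ enters only via the sum inside $\tilde{y}_t$, and (ii) correctly assembling $R \leq \widetilde{D}$ from the stability geometric series together with the bounds on $u$, $w$, and $e$. Once those bookkeeping pieces are in place, the lemma follows directly from the Lipschitz assumption and a single Cauchy–Schwarz step.
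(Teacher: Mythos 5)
Your proposal is correct and follows essentially the same route as the paper: bound the perturbation of the truncated output via $\|G^{[k]}_t\|_2 \le \kappa_a\kappa_b(1-\gamma)^{k-1}$ and a Cauchy–Schwarz step giving the $\kappa_w\sqrt{h}\,\|M_{t-k}-\widetilde{M}_{t-k}\|$ factor, bound the perturbation of the control argument (nonzero only for $k=0$) the same way, bound $R$ by $\widetilde{D}$ using the geometric series and the disturbance/noise bounds, and then apply Assumption 2.(ii), summing the two contributions into one uniform constant. The paper handles your $k=0$ versus $k\ge 1$ case split via a piecewise definition of an auxiliary $\tilde{u}_t$, but that is the same observation, and your final assembly of $L_f$ matches theirs.
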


{\em Proof}:

Let $\widetilde{M}^k_{t:t-h} = [M_{t-h} \dots \widetilde{M}_{t-k} \dots M_t]$. Then 
\begin{align}
& \norm{\tilde{y}^{\pi_{\mathrm{DAC-O}}}_t[M_{t:t-h}\vert G_t , s_{1:t}] -   \tilde{y}^{\pi_{\mathrm{DAC-O}}}_t[\widetilde{M}^k_{t:t-h} \vert G_t , s_{1:t}]} = \norm{G^{[k]}_t\left( \sum_{i=1}^h \left( M^{[i]}_{t-k} -  \widetilde{M}^{[i]}_{t-k}\right) w_{t-k-i+1}\right) } \nonumber \\
& \stackrel{(a)}{\leq} \norm{G^{[k]}_t}_2 \left( \sum_{i=1}^h \norm{M^{[i]}_{t-k} -  \widetilde{M}^{[i]}_{t-k}} \norm{w_{t-k-i+1}}\right) \stackrel{(b)}{\leq} (1-\gamma)^{k-1}\kappa_a \kappa_b \left( \sum_{i=1}^h \norm{M^{[i]}_{t-k} -  \widetilde{M}^{[i]}_{t-k}} \norm{w_{t-k-i+1}} \right) \nonumber \\
& \stackrel{(c)}{\leq} (1-\gamma)^{k-1}\kappa_a \kappa_b \kappa_w \left( \sum_{i=1}^h \norm{M^{[i]}_{t-k} -  \widetilde{M}^{[i]}_{t-k}}  \right) \stackrel{(d)}{\leq} (1-\gamma)^{k-1}\kappa_a \kappa_b \kappa_w\sqrt{h} \norm{M_{t-k} -  \widetilde{M}_{t-k}}. \nonumber 
\end{align}

Here, $(a)$ follows from using the definition of $2$-norm of a matrix and triangle inequality, $(b, c)$ follow from Assumption \ref{ass:sys} and $(d)$ follows from the inequality that for any positive numbers $a_1, a_2, \dots, a_h$, $\sum a_k \leq \sqrt{h} \sqrt{\sum a_k^2}$. Let 
\[ \tilde{u}^{\pi_{\mathrm{DAC-O}}}_t = \left\{ \begin{array}{cc} \sum_{i = 1}^h M^{[i]}_t w_{t-i+1} & \tn{If} ~ k > 0 \\ \sum_{i = 1}^h \widetilde{M}^{[i]}_t w_{t-i+1} &  \tn{If} ~ k = 0. 
 \end{array} \right. \] 

Then,
\[
\norm{u^{\pi_{\mathrm{DAC-O}}}_t - \tilde{u}^{\pi_{\mathrm{DAC-O}}}_t} \stackrel{(e)}{\leq} \sum_{i=1}^h \norm{M^{[i]}_{t} -  \widetilde{M}^{[i]}_{t}}\norm{w_{t-i+1}} \stackrel{(f)}{\leq} \kappa_w\sqrt{h}\norm{M_{t-k} -  \widetilde{M}_{t-k}}.  \]

Here, $(e)$ follows from applying triangle inequality first and by using the definition of $2$-norm of a matrix, $(f)$ follows from Assumption \ref{ass:sys}. We make the following observations. Using a similar argument it follows that the control input $u^{\pi_{\mathrm{DAC-O}}}_t$ is bounded by
\[ \norm{u^{\pi_{\mathrm{DAC-O}}}_t} \leq \kappa_w\kappa_Mh.\]

The truncated output $\tilde{y}^{\pi_{\mathrm{DAC-O}}}_t[M_{t:t-h}\vert G_t , s_{1:t}]$ is bounded by 
\begin{align}
& \norm{\tilde{y}^{\pi_{\mathrm{DAC-O}}}_t[M_{t:t-h}\vert G_t , s_{1:t}]} \stackrel{(g)}{\leq} \norm{s_t}  + \sum_{k = 1}^{h} \norm{G^{[k]}_t}_2 \norm{u^{\pi_{\mathrm{DAC-O}}}_{t-k}} \nonumber \\
& \stackrel{(h)}{\leq} \norm{s_t}  + \kappa_a \kappa_b \kappa_M\kappa_wh \sum_{k = 1}^{h}  (1-\gamma)^{k-1} \stackrel{(i)}{\leq} \frac{\kappa_a\kappa_w}{\gamma} + \kappa_e + \frac{\kappa_a \kappa_b \kappa_M\kappa_wh}{\gamma}. \nonumber 
\end{align}

Here, $(g)$ follows from triangle inequality, $(h)$ follows from Assumption \ref{ass:sys} and the bound on $\norm{u^{\pi_{\mathrm{DAC-O}}}_t}$ and $(i)$ follows from bounding $s_t$ using the definition of $s_t$ and Assumption \ref{ass:sys}. Given these observations, by Assumption \ref{ass:lipschitz} it follows that
\begin{align}
& \left\vert f_t(M_{t-h} \dots M_{t-k} \dots M_t\vert G_t , s_{1:t}) -  f_t(M_{t-h} \dots \widetilde{M}_{t-k} \dots M_t\vert G_t , s_{1:t})\right \vert \nonumber \\
& \leq LR\left( \norm{\tilde{y}^{\pi_{\mathrm{DAC-O}}}_t[M_{t:t-h}\vert G_t , s_{1:t}] - \tilde{y}^{\pi_{\mathrm{DAC-O}}}_t[\widetilde{M}^k_{t:t-h} \vert G_t , s_{1:t}]}\right) + LR \left(\norm{u^{\pi_{\mathrm{DAC-O}}}_t - \tilde{u}^{\pi_{\mathrm{DAC-O}}}_t}\right) \nonumber \\
& \stackrel{(j)}{\leq} LR\left( (1-\gamma)^{k-1}\kappa_a \kappa_b \kappa_w\sqrt{h} + \kappa_w\sqrt{h}\right) \norm{M_{t-k} -  \widetilde{M}_{t-k}} \nonumber \\
& \stackrel{(h)}{\leq} L\max\left\{\kappa_M\kappa_wh + \frac{\kappa_a\kappa_w}{\gamma} + \kappa_e + \frac{\kappa_a \kappa_b \kappa_M\kappa_wh}{\gamma},1\right\}\left( (1-\gamma)^{k-1}\kappa_a \kappa_b \kappa_w\sqrt{h} + \kappa_w\sqrt{h}\right) \norm{M_{t-k} -  \widetilde{M}_{t-k}}. \nonumber 
\end{align}

Here, $(j)$ follows from using the fact that the arguments $\tilde{y}^{\pi_{\mathrm{DAC-O}}}_t, u^{\pi_{\mathrm{DAC-O}}}_t$ and $\tilde{u}^{\pi_{\mathrm{DAC-O}}}_t$ are bounded as shown earlier and Assumption \ref{ass:lipschitz}, and $(h)$ follows from substituting for $R$ using the bounds on $\tilde{y}^{\pi_{\mathrm{DAC-O}}}_t$ and $u^{\pi_{\mathrm{DAC-O}}}_t$ derived earlier. The final result follows from here $\square$

\begin{lemma}
For all $M$ such that $\norm{M^{[j]}} \leq \kappa_M, ~ \forall ~ j \in [1,h]$, we have that
\begin{align}
& \norm{\nabla_M f_t(M,\dots,M\vert G_t , s_{1:t})} \leq G_f, ~~ G_f = GDhnm\left(\frac{\kappa_a\kappa_b \kappa_w }{\gamma}+ \kappa_w\right),  \nonumber \\
& D = \max\left\{\kappa_M\kappa_wh + \frac{\kappa_a\kappa_w}{\gamma} + \kappa_e + \frac{\kappa_a \kappa_b \kappa_M\kappa_wh}{\gamma}, 1\right\}. \nonumber 
\end{align}
\label{lem:gradbound}
\end{lemma}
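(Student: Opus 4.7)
\textbf{Proof plan for Lemma \ref{lem:gradbound}.} The plan is to apply the chain rule to the composition $f_t(M,\dots,M\vert G_t,s_{1:t}) = c_t(\tilde{y}^{\pi_{\mathrm{DAC-O}}}_t, u^{\pi_{\mathrm{DAC-O}}}_t)$, recognize that when all $M_{t-k}$ are set equal to $M$ both $\tilde{y}^{\pi_{\mathrm{DAC-O}}}_t$ and $u^{\pi_{\mathrm{DAC-O}}}_t$ are linear (hence trivially differentiable) functions of the entries of $M$, and then bound the resulting expression as the product of (a) the norm of the gradient of $c_t$ evaluated at its arguments and (b) the Jacobian of these linear maps. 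The technical ingredient is recycling the bounds on $\|\tilde{y}^{\pi_{\mathrm{DAC-O}}}_t\|$ and $\|u^{\pi_{\mathrm{DAC-O}}}_t\|$ that were already established inside the proof of Lemma \ref{lem:lipschitzbound}.

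First, I would re-derive, by the same chain of inequalities used in Lemma \ref{lem:lipschitzbound}, that whenever $\norm{M^{[j]}} \leq \kappa_M$ we have $\norm{u^{\pi_{\mathrm{DAC-O}}}_t} \leq \kappa_M \kappa_w h$ and $\norm{\tilde{y}^{\pi_{\mathrm{DAC-O}}}_t} \leq \kappa_a\kappa_w/\gamma + \kappa_e + \kappa_a\kappa_b\kappa_M\kappa_w h/\gamma$, using the stability Assumption \ref{ass:sys}.(i) (geometric summability of $\norm{G^{[k]}_t}_2$) and the bound $\norm{w_t}\leq \kappa_w$. These bounds together with the ``$1$'' in the max give $\max\{\norm{\tilde{y}^{\pi_{\mathrm{DAC-O}}}_t},\norm{u^{\pi_{\mathrm{DAC-O}}}_t},1\}\leq D$. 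Applying Assumption \ref{ass:lipschitz}.(iii) with $d = D$ then yields $\norm{\nabla_{\tilde{y}} c_t}\leq GD$ and $\norm{\nabla_u c_t}\leq GD$.

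Next I would compute the partial derivatives of $\tilde{y}^{\pi_{\mathrm{DAC-O}}}_t$ and $u^{\pi_{\mathrm{DAC-O}}}_t$ with respect to an individual scalar entry $M^{[i]}_{a,b}$. Since $u^{\pi_{\mathrm{DAC-O}}}_t = \sum_{i=1}^{h} M^{[i]} w_{t-i+1}$ the partial is $(w_{t-i+1})_b\, e_a$, of norm at most $\kappa_w$. Since $\tilde{y}^{\pi_{\mathrm{DAC-O}}}_t = s_t + \sum_{k=1}^{h} G^{[k]}_t \sum_{i=1}^{h} M^{[i]} w_{t-k-i+1}$, the partial w.r.t.\ $M^{[i]}_{a,b}$ is $\sum_{k=1}^{h}(w_{t-k-i+1})_b\, G^{[k]}_t e_a$, whose norm is bounded by $\kappa_w\sum_{k=1}^{h}\norm{G^{[k]}_t}_2\leq \kappa_a\kappa_b\kappa_w/\gamma$ by the stability assumption.

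Combining via the chain rule then gives, for every scalar entry, $\bigl\lvert\partial f_t/\partial M^{[i]}_{a,b}\bigr\rvert \leq GD\bigl(\kappa_a\kappa_b\kappa_w/\gamma + \kappa_w\bigr)$. Finally I would bound the gradient norm by the $\ell_1$--to--$\ell_2$ inequality, summing this entrywise bound over all $hnm$ coordinates of $M$, which yields $\norm{\nabla_M f_t}\leq GDhnm\bigl(\kappa_a\kappa_b\kappa_w/\gamma + \kappa_w\bigr) = G_f$, as required. The main obstacle I anticipate is purely bookkeeping---tracking indices $i,k,a,b$ so that no $w$-term is double-counted and the sum $\sum_{k=1}^{h}(1-\gamma)^{k-1}\leq 1/\gamma$ is applied in the right inner sum; the ``hard'' analytical content was already absorbed into Lemma \ref{lem:lipschitzbound}.
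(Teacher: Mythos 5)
Your proposal is correct and follows essentially the same route as the paper's proof: bound $\|\tilde{y}^{\pi_{\mathrm{DAC-O}}}_t\|$ and $\|u^{\pi_{\mathrm{DAC-O}}}_t\|$ by $D$ as in Lemma \ref{lem:lipschitzbound}, invoke Assumption \ref{ass:lipschitz}.(iii), compute the entrywise partial derivatives of $\tilde{y}^{\pi_{\mathrm{DAC-O}}}_t$ and $u^{\pi_{\mathrm{DAC-O}}}_t$ with respect to a scalar entry of $M$ (bounded by $\kappa_a\kappa_b\kappa_w/\gamma$ and $\kappa_w$ respectively via the geometric decay of $\|G^{[k]}_t\|_2$), and sum over the $hnm$ coordinates. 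No gaps.
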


{\em Proof}:
When each $M_k \in M_{t:t-h}$ is fixed to $M$, we denote $f_t(M_{t:t-h}\vert G_t , s_{1:t})$ by $F_t(M\vert G_t , s_{1:t})$. Thus,
\[ \nabla_M f_t(M,\dots,M\vert G_t , s_{1:t}) = \nabla_M F_t(M\vert G_t , s_{1:t}).\]

Let 
\[D = \max\left\{\kappa_M\kappa_wh + \frac{\kappa_a\kappa_w}{\gamma} + \kappa_e + \frac{\kappa_a \kappa_b \kappa_M\kappa_wh}{\gamma}, 1\right\} \]

Let $\pi_{\mathrm{DAC}}$ denote the fixed DAC policy with the policy parameter $M \in \mathcal{M}$. Then, the output and the control input under $\pi_{\mathrm{DAC}}$ are given by $y^{\pi_{\mathrm{DAC}}}_t$ and $u^{\pi_{\mathrm{DAC}}}_t$. Then, similar to the derivation in Lemma \ref{lem:lipschitzbound} $\norm{\tilde{y}^{\pi_{\mathrm{DAC}}}_t} \leq D, \norm{u^{\pi_{\mathrm{DAC}}}_t} \leq D$. Let $M^{[k]}_{p,q}$ denote the $(p,q)$th element of the matrix $M^{[k]}$. Then, it is sufficient to derive the bound for $\nabla_{M^{[k]}_{p,q}} F_t(M\vert G_t , s_{1:t})$, to bound the overall gradient. Then, from Assumption \ref{ass:lipschitz}.{\it (iii)} it follows that
\[ \norm{\nabla_{M^{[k]}_{p,q}} F_t(M \vert G_t , s_{1:t})} \leq GD\left(\norm{\frac{\partial \tilde{y}^{\pi_{\mathrm{DAC}}}_t}{\partial M^{[k]}_{p,q}}} + \norm{\frac{\partial u^{\pi_{\mathrm{DAC}}}_t}{\partial M^{[k]}_{p,q}}}\right). \]

We have that
\[\frac{\partial \tilde{y}^{\pi_{\mathrm{DAC}}}_t}{\partial M^{[k]}_{p,q}} = \sum_{i = 1}^h G^{[i]}_t  \frac{\partial u^{\pi_{\mathrm{DAC}}}_{t-i}}{\partial M^{[k]}_{p,q}} = \sum_{i = 1}^h G^{[i]}_t  \left( \sum_{j = 1}^h \frac{\partial M}{\partial M^{[j]}_{p,q}} \right) w_{t-i-j+1} \]

Then taking the norm on both sides, 
\begin{align}
& \norm{\frac{\partial \tilde{y}^{\pi_{\mathrm{DAC}}}_t}{\partial M^{[k]}_{p,q}}} \stackrel{(a)}{\leq} \sum_{i=1}^h \sum_{j=1}^h \norm{G^{[i]}_t}_2 \norm{\frac{\partial M^{[j]}}{\partial M^{[k]}_{p,q}}}\norm{w_{t-i-j+1}} \stackrel{(b)}{=} \sum_{i=1}^h \norm{G^{[i]}_t}_2\norm{w_{t-i-k+1}}  \nonumber \\
& \stackrel{(c)}{\leq} \sum_{i=1}^h \left( 1 -\gamma\right)^{i-1}\kappa_a\kappa_b \kappa_w \stackrel{(d)}{\leq} \frac{\kappa_a\kappa_b \kappa_w }{\gamma}. \nonumber  
\end{align}

Here, $(a)$ follows from applying triangle inequality and by using the definition of $2$-norm of a matrix, $(b)$ follows from the fact that $\norm{\frac{\partial M^{[j]}}{\partial M^{[k]}_{p,q}}} = 1$ for $j = k$ and zero otherwise, and $(c)$ follows from Assumption \ref{ass:sys}. Similarly, we have that
\[\frac{\partial u^{\pi_{\mathrm{DAC}}}_t}{\partial M^{[k]}_{p,q}} \stackrel{(f)}{=} \sum_{i = 1}^{h} \frac{\partial M^{[i]}}{\partial M^{[k]}_{p,q}} w_{t-i+1}, ~~ \tn{i.e.,} ~~ \norm{\frac{\partial u^{\pi_{\mathrm{DAC}}}_t}{\partial M^{[k]}_{p,q}} } \stackrel{(g)}{\leq} \norm{w_{t-k+1}} = \kappa_w. \]

Here, $(f)$ follows from the definition of $u^{\pi_{\mathrm{DAC}}}_t$ and $(g)$ follows from the fact that $\norm{\frac{\partial M^{[i]}}{\partial M^{[k]}_{p,q}}} = 1$ for $i = k$ and zero otherwise. Thus, we get
\[ \norm{\nabla_{M^{[k]}_{p,q}} F_t(M \vert G_t , s_{1:t})} \leq GD\left(\frac{\kappa_a\kappa_b \kappa_w }{\gamma}+ \kappa_w\right).\] 

The final result follows from here $\square$

We now prove the main result. Let $M^{\star}_{1:T}$ be the best switching DAC policy. Let $\Pi_\mathcal{M}$ denote the class of switching DAC policies. 
Given this definition, the regret can be split as
\begin{align}
& R_T = \sum_{t=1}^T c_t(y^{\pi_{\mathrm{DAC-O}}}_t, u^{\pi_{\mathrm{DAC-O}}}_t) - \min_{\pi \in \Pi_\mathcal{M}}\sum_{t=1}^T c_t(y^{\pi}_t, u^{\pi}_t) \nonumber \\
& = \underbrace{\sum_{t=1}^T c_t(y^{\pi_{\mathrm{DAC-O}}}_t, u^{\pi_{\mathrm{DAC-O}}}_t) - \sum_{t=1}^T c_t(\tilde{y}^{\pi_{\mathrm{DAC-O}}}_t[M_{t:t-h}\vert G_t , s_{1:t}],u^{\pi_{\mathrm{DAC-O}}}_t)}_{\tn{Cost Truncation Error}} \nonumber \\
& + \underbrace{\sum_{t=1}^T f_t(M_{t:t-h}\vert G_t , s_{1:t}) - \sum_{t=1}^T f_t(M^{\star}_{t:t-h}\vert G_t , s_{1:t})}_{\tn{Policy Regret}} \nonumber \\
& + \underbrace{\sum_{t=1}^T f_t(M^{\star}_{t:t-h}\vert G_t , s_{1:t}) - \min_{\pi \in \Pi_\mathcal{M}}\sum_{t=1}^T c_t(y^{\pi}_t, u^{\pi}_t)}_{\tn{Cost Approximation Error}}. \nonumber 
\end{align}

Here, the cost truncation error is the error due to replacing the full output $y^{\pi_{\mathrm{DAC-O}}}_t$ with the truncated output $\tilde{y}^{\pi_{\mathrm{DAC-O}}}_t[M_{t:t-h}\vert G_t , s_{1:t}]$, the cost approximation error is the error of the optimal given the truncation of the cost, and the policy approximation error is the error between the optimal DAC and and the optimal linear feedback policy. 

Let $D = \sup_{M_1, M_2 \in \mathcal{M}} \norm{M_1 - M_2}$. By applying \cite[Theorem 2]{zhao2022non} to the online learning algorithm Algorithm \ref{alg:olc-fk}, and Lemma \ref{lem:lipschitzbound} and Lemma \ref{lem:gradbound}, we get that
\[ \sum_{t=1}^T f_t(M_{t:t-h}\vert G_t , s_{1:t}) - \sum_{t=1}^T f_t(M^{\star}_{t:t-h}\vert G_t , s_{1:t}) \leq \mathcal{O}(\sqrt{T(1+P_T)}. \] 

Next, we bound the cost truncation error term. 
\begin{align}
& \sum_{t=1}^T c_t(y^{\pi_{\mathrm{DAC-O}}}_t, u^{\pi_{\mathrm{DAC-O}}}_t) - \sum_{t=1}^T c_t(\tilde{y}^{\pi_{\mathrm{DAC-O}}}_t[M_{t:t-h}\vert G_t , s_{1:t}],u^{\pi_{\mathrm{DAC-O}}}_t) \nonumber\\
& \stackrel{(a)}{\leq} \sum_{t=1}^T LR\norm{y^{\pi_{\mathrm{DAC-O}}}_t-\tilde{y}^{\pi_{\mathrm{DAC-O}}}_t[M_{t:t-h}\vert G_t , s_{1:t}]} \stackrel{(b)}{\leq} \sum_{t=1}^T LR \norm{ \sum_{k = h+1}^{t-1} G^{[k]}_tu^{\pi_{\mathrm{DAC-O}}}_{t-k}} \nonumber \\
& \stackrel{(c)}{\leq} \sum_{t=1}^T LR \sum_{k = h+1}^{t-1} \norm{G^{[k]}_t}_2\norm{u^{\pi_{\mathrm{DAC-O}}}_{t-k}} \stackrel{(d)}{\leq} \sum_{t=1}^T LR \kappa_a \kappa_b \kappa_M \kappa_wh \sum_{k = h+1}^{t-1} (1 - \gamma)^{k-1} \leq \frac{LR \kappa_a \kappa_b \kappa_M \kappa_w h (1-\gamma)^h  T}{\gamma}. \nonumber 
\end{align}

Here, $(a)$ follows from Assumption \ref{ass:lipschitz}, $(b)$ follows from definitions of $y^{\pi_{\mathrm{DAC-O}}}_t$ and $\tilde{y}^{\pi_{\mathrm{DAC-O}}}_t[M_{t:t-h}\vert G_t , s_{1:t}]$, $(c)$ follows from applying triangle inequality and by using the definition of $2$-norm of a matrix and $(d)$ follows from Assumption \ref{ass:sys} and the bound on $u^{\pi_{\mathrm{DAC-O}}}_t$.

Next, we bound the cost approximation error term. This term can also be bounded in a similar way. 
Then
\begin{align}
&  \sum_{t = 1}^T f_t(M^{\star}_{t:t-h} \vert G_t , s_{1:t}) - \min_{\pi \in \Pi_\mathcal{M}}\sum_{t=1}^T c_t(y^{\pi}_t, u^{\pi}_t) \leq \sum_{t = 1}^T f_t(M^{\star}_{t:t-h} \vert G_t , s_{1:t}) - \min_{\pi \in \Pi_\mathcal{M}}\sum_{t=1}^T c_t(y^{\pi}_t, u^{\pi}_t) \nonumber \\
& \stackrel{(e)}{\leq}  LR \sum_{t = 1}^T \left( \norm{\tilde{y}^{\pi_{\mathrm{DAC-O}}}_t[M^{\star}_{t:t-h}\vert G_t , s_{1:t}] - y^{\pi}_t} \right) \stackrel{(f)}{\leq} LR \sum_{t = 1}^T \sum_{k = h+1}^{t-1} \norm{G^{[k]}_{t}}_2 \norm{u^\pi_{t-k}} \stackrel{(g)}{\leq} \frac{LR \kappa_a\kappa_b\kappa_M\kappa_w h (1-\gamma)^{h}T}{\gamma}. \nonumber 
\end{align}

Here, $(e)$ follows from Assumption \ref{ass:lipschitz}, $(f)$ follows from the fact that $\tilde{y}^{\pi_{\mathrm{DAC-O}}}_t[M^{\star}_{t:t-h}\vert G_t , s_{1:t}]$ is just the truncated output of $y^{\pi}_t$ and applying triangle followed by using the definition of $2$-norm of a matrix, and $(g)$ follows from applying Assumption \ref{ass:sys}. 

Putting together all the three terms, we get the final result $\blacksquare$

\section{Proof of Theorem \ref{thm:olc-zk-cd} : Setting S-2 }
\label{sec:th2-proof}

\begin{lemma}
Under Algorithm \ref{alg:olc-zk-cd}, with a probability $1-\delta/3$, $\delta \leq 1/T$ and $T \geq 3$
\[\norm{u^{\pi_{\mathrm{DAC-O}}}_t} \leq R_u = \kappa_M\kappa_wh + 3\sigma\sqrt{m+\log(1/\delta)}, ~~ \norm{\hat{s}_t} \leq R_s = \frac{\kappa_a\kappa_w}{\gamma} + \kappa_e + \frac{2R_u\kappa_a\kappa_b}{\gamma},\]\[ \norm{y^{\pi_{\mathrm{DAC-O}}}_t} \leq \frac{\kappa_a\kappa_w}{\gamma} + \kappa_e + \frac{\kappa_a\kappa_bR_u}{\gamma} \leq R_s, ~~ \forall ~~ t.\]
\end{lemma}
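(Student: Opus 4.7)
The plan is to verify the three bounds in sequence on a single high-probability event, propagating them forward: the perturbation tail controls $\|u_t\|$, which controls $\|y_t\|$, and both together control $\|\hat s_t\|$.

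First I would bound $\|u^{\pi_{\mathrm{DAC-O}}}_t\|$. By the decomposition in Eq.~\eqref{eq:cp-zk-o}, $u^{\pi_{\mathrm{DAC-O}}}_t$ is the sum of the DAC part $\tilde u^{\pi_{\mathrm{DAC-O}}}_t=\sum_{k=1}^h M^{[k]}_t w_{t-k}$ and the Gaussian exploration term $\delta u^{\pi_{\mathrm{DAC-O}}}_t\sim\mathcal{N}(0,\sigma^2 I_m)$. A triangle-inequality calculation combined with Definition~\ref{def:drc-class} and Assumption~\ref{ass:sys} immediately gives $\|\tilde u^{\pi_{\mathrm{DAC-O}}}_t\|\le \kappa_M\kappa_w h$. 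For the Gaussian part I would apply the standard chi-norm tail bound $P(\|\delta u\|>\sigma(\sqrt{m}+\sqrt{2\log(1/\delta_0)}))\le \delta_0$, take $\delta_0=\delta/(3T)$, and union-bound over $t\in[1,T]$, landing a per-time bound $\sigma(\sqrt{m}+\sqrt{2\log(3T/\delta)})$ on an event of probability at least $1-\delta/3$. Under the hypotheses $\delta\le 1/T$ and $T\ge 3$ one has $\log(3T/\delta)\le 3\log(1/\delta)$, and the inequality $\sqrt{a}+\sqrt{b}\le \sqrt{2(a+b)}$ then collapses this into the stated form $3\sigma\sqrt{m+\log(1/\delta)}$, so that $\|u^{\pi_{\mathrm{DAC-O}}}_t\|\le R_u$.

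Next, on that same event, I would bound $\|y^{\pi_{\mathrm{DAC-O}}}_t\|$ by unrolling the state recursion in Eq.~\eqref{eq:syseq} to write $y^{\pi_{\mathrm{DAC-O}}}_t$ as $e_t$ plus a control-response sum over terms $C_t A_{t-1}\cdots A_{t-k+1} B_{t-k}\, u^{\pi_{\mathrm{DAC-O}}}_{t-k}$ and a disturbance-response sum over terms $C_t A_{t-1}\cdots A_{t-k+1} B_{t-k,w}\, w_{t-k}$ (with $B_{t,w}=I$ in setting S-2, $B_{t,w}=B_t$ in S-1). Invoking the stability bound of Assumption~\ref{ass:sys}(i) termwise, summing the geometric series $\sum_{k\ge 1}(1-\gamma)^{k-1}=1/\gamma$, and plugging in $\|u^{\pi_{\mathrm{DAC-O}}}_{t-k}\|\le R_u$ together with $\|w_t\|\le\kappa_w$ and $\|e_t\|\le\kappa_e$ yields $\|y^{\pi_{\mathrm{DAC-O}}}_t\|\le \kappa_e + \kappa_a\kappa_w/\gamma + \kappa_a\kappa_b R_u/\gamma$, which matches the stated expression and is $\le R_s$ by the definition of $R_s$.

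Finally I would control $\|\hat s_t\|$ directly from Eq.~\eqref{eq:natst-est}: for S-2 write $\hat s_t = y^{\pi_{\mathrm{DAC-O}}}_t - \sum_{k=1}^h \widehat G^{[k]}_t u^{\pi_{\mathrm{DAC-O}}}_{t-k}$, and for S-1 write $\hat s_t=\sum_k \widehat G^{[k]}_t w_{t-k}$. Using the a priori bound $\|\widehat G^{[k]}_t\|\le \kappa_a\kappa_b(1-\gamma)^{k-1}$ from Definition~\ref{def:system-set}, the same geometric sum, and the bounds on $u_t$ (S-2) or $w_t$ (S-1), triangle inequality produces $\|\hat s_t\|\le \|y^{\pi_{\mathrm{DAC-O}}}_t\|+\kappa_a\kappa_b R_u/\gamma \le R_s$ in both settings. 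The only genuinely nontrivial step is the Gaussian concentration plus union bound and the attendant constant-chasing needed to cleanly arrive at $3\sigma\sqrt{m+\log(1/\delta)}$; everything else is deterministic triangle-inequality and geometric-series bookkeeping against the stability assumption.
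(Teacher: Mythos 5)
Your proposal is correct and follows essentially the same route as the paper: the DAC term is bounded deterministically by $\kappa_M\kappa_w h$, the perturbation by a Gaussian-norm tail bound with a union bound over $t$ (the paper simply cites Claim D.3 of Simchowitz et al.\ for the resulting $3\sigma\sqrt{m+\log(1/\delta)}$), and $y_t$, $\hat{s}_t$ follow by triangle inequality and the geometric series from Assumption 1. The only (cosmetic) difference is in the $\hat{s}_t$ step, where the paper writes $\hat{s}_t = s_t + \sum_k (G^{[k]}_t - \widehat{G}^{[k]}_t)u_{t-k}$ and uses $\|G^{[k]}_t - \widehat{G}^{[k]}_t\|_2 \le 2\kappa_a\kappa_b(1-\gamma)^{k-1}$ while you bound $\|\hat{s}_t\| \le \|y_t\| + \|\sum_k \widehat{G}^{[k]}_t u_{t-k}\|$ directly; both yield $R_s$.
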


{\em Proof}:
The first term in the bound of $\norm{u^{\pi_{\mathrm{DAC-O}}}_t}$ follows from bounding the DAC part of the control policy \ref{eq:cp-zk-o} following the steps in the proof of Lemma \ref{lem:lipschitzbound}. The second term follows from bounding the perturbation Eq. \eqref{eq:pert-zk-o} by using \cite[Claim D.3]{simchowitz2020improper}.

From the definition of $\hat{s}_t$, we get under the event that $\norm{u^{\pi_{\mathrm{DAC-O}}}_t} \leq R_u$,
\begin{align}
    & \hat{s}_t = s_t + \sum_{k=1}^h \left( G^{[k]}_t - \widehat{G}^{[k]}_t\right)u^{\pi_{\mathrm{DAC-O}}}_t, ~~ \tn{i.e.,} ~~ \norm{\hat{s}_t} \stackrel{(a)}{\leq} \norm{s_t} + \sum_{k=1}^h \norm{G^{[k]}_t - \widehat{G}^{[k]}_t}_2 \norm{u^{\pi_{\mathrm{DAC-O}}}_t} \nonumber \\
    & \norm{\hat{s}_t} \stackrel{(b)}{\leq} \frac{\kappa_a\kappa_w}{\gamma} + \kappa_e + 2R_u\sum_{k=1}^h \left(1-\gamma\right)^{k-1}\kappa_a\kappa_b = \frac{\kappa_a\kappa_w}{\gamma} + \kappa_e + \frac{2R_u\kappa_a\kappa_b}{\gamma}. \nonumber 
\end{align} 

Here, $(a)$ follows from applying triangle inequality, followed by the definition of $2$-norm of a matrix and $(b)$ follows from Assumption \ref{ass:sys}. By definition, we get, under the event that $\norm{u^{\pi_{\mathrm{DAC-O}}}_t} \leq R_u$,
\begin{align}
& y^{\pi_{\mathrm{DAC-O}}}_t = s_t + \sum_{k = 1}^{t-1} G^{[k]}_tu^{\pi_{\mathrm{DAC-O}}}_{t-k}, ~~ \tn{i.e.}, ~~ \norm{y^{\pi_{\mathrm{DAC-O}}}_t} \leq \norm{s_t} + \sum_{k = 1}^{t-1} \norm{G^{[k]}_t}_2\norm{u^{\pi_{\mathrm{DAC-O}}}_{t-k}} \nonumber \\
& \leq \frac{\kappa_a\kappa_w}{\gamma} + \kappa_e + \sum_{k = 1}^{t-1} (1 -\gamma)^{k-1}\kappa_a\kappa_bR_u = \frac{\kappa_a\kappa_w}{\gamma} + \kappa_e + \frac{\kappa_a\kappa_bR_u}{\gamma} \square\nonumber 
\end{align}

Lets define a function $f_t$ by
\[ f_t(M_{t:t-h}\vert \widehat{G}_t , \hat{s}_{1:t}) := c_t(\tilde{\tilde{y}}^{\pi_{\mathrm{DAC-O}}}_t[M_{t:t-h}\vert \widehat{G}_t , \hat{s}_{1:t}],\tilde{u}^{\pi_{\mathrm{DAC-O}}}_t[M_{t} \vert w_{1:t}]). \]

\begin{lemma}
Consider two policy sequences $(M_{t-h}, \dots, M_{t-k}, \dots, M_t)$ and $(M_{t-h}, \dots, \widetilde{M}_{t-k}, \dots, M_t)$, which differ only in the policy at time $t-k$, where $k \in \{0,1,\dots,h\}$. Then, with probability $1-\delta/3$
\begin{align}
& \left\vert f_t(M_{t-h} \dots M_{t-k} \dots M_t\vert \widehat{G}_t , \hat{s}_{1:t}) -  f_t(M_{t-h} \dots \widetilde{M}_{t-k} \dots M_t\vert \widehat{G}_t , \hat{s}_{1:t})\right \vert \nonumber \\
& \leq L_f \norm{M_{t-k} - \widetilde{M}_{t-k}}, ~~ L_f =  L\max\left\{\left(1+\frac{\kappa_a\kappa_b}{\gamma}\right)R_u+R_s,1\right\}\left((1-\gamma)^{k-1}\kappa_a\kappa_b\kappa_w\sqrt{h} + \kappa_w\sqrt{h}\right). \nonumber  
\end{align}
\label{lem:lipschitzbound-zk}
\end{lemma}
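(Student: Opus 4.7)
The plan is to mirror the proof of Lemma \ref{lem:lipschitzbound} almost step-by-step, with two modifications: (i) replace the true system parameters $G_t$ by the estimates $\widehat{G}_t$, using the fact that by construction (Definition \ref{def:system-set}) the estimator produces $\widehat{G}_t \in \mathcal{G}$ so that $\norm{\widehat{G}^{[k]}_t}_2 \leq \kappa_a\kappa_b(1-\gamma)^{k-1}$; and (ii) replace the control input $u^{\pi_{\mathrm{DAC-O}}}_t$ in the first argument of $c_t$ by its DAC-only component $\tilde{u}^{\pi_{\mathrm{DAC-O}}}_t$, since $f_t$ here has the perturbation stripped off. The high-probability event $\norm{u^{\pi_{\mathrm{DAC-O}}}_t} \leq R_u$, $\norm{\hat{s}_t}\leq R_s$ from the preceding lemma is invoked throughout, which is where the probability $1-\delta/3$ in the statement comes from.

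First, I would bound $\norm{\tilde{\tilde{y}}^{\pi_{\mathrm{DAC-O}}}_t[M_{t:t-h}\vert \widehat{G}_t,\hat{s}_{1:t}]}$. By the definition of $\tilde{\tilde{y}}$, triangle inequality and the definition of the operator $2$-norm, this is at most $\norm{\hat{s}_t} + \sum_{k=1}^{h} \norm{\widehat{G}^{[k]}_t}_2 \norm{\tilde{u}^{\pi_{\mathrm{DAC-O}}}_{t-k}}$; using $\widehat{G}_t\in\mathcal{G}$ and $\norm{\tilde{u}^{\pi_{\mathrm{DAC-O}}}_{t-k}} \leq R_u$ (which is essentially the DAC part of the bound on $\norm{u^{\pi_{\mathrm{DAC-O}}}_t}$) and summing the geometric series $\sum_k (1-\gamma)^{k-1} \leq 1/\gamma$ gives $\norm{\tilde{\tilde{y}}} \leq R_s + (\kappa_a\kappa_b/\gamma)R_u$. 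Combined with $\norm{\tilde{u}^{\pi_{\mathrm{DAC-O}}}_t} \leq R_u$, the argument of $c_t$ is bounded in norm by $(1+\kappa_a\kappa_b/\gamma)R_u + R_s$, which explains the first factor in $L_f$.

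Second, I would estimate the sensitivity in $M_{t-k}$ in exactly the same pattern as Lemma \ref{lem:lipschitzbound}. Writing $\widetilde{M}^k_{t:t-h}$ for the perturbed sequence, the difference in the truncated output equals $\widehat{G}^{[k]}_t \sum_{i=1}^{h}(M^{[i]}_{t-k} - \widetilde{M}^{[i]}_{t-k})w_{t-k-i+1}$, which by the same chain of triangle inequality, $\norm{\widehat{G}^{[k]}_t}_2 \leq (1-\gamma)^{k-1}\kappa_a\kappa_b$, $\norm{w_\cdot}\leq \kappa_w$ and the Cauchy-Schwarz type bound $\sum_i \norm{M^{[i]}-\widetilde M^{[i]}} \leq \sqrt{h}\norm{M-\widetilde M}$ yields $(1-\gamma)^{k-1}\kappa_a\kappa_b\kappa_w\sqrt{h}\norm{M_{t-k}-\widetilde{M}_{t-k}}$. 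When $k=0$ there is an additional direct effect through $\tilde{u}^{\pi_{\mathrm{DAC-O}}}_t$ of magnitude at most $\kappa_w\sqrt{h}\norm{M_t-\widetilde{M}_t}$, treated identically. Note that for $k\neq 0$ the $\tilde{u}^{\pi_{\mathrm{DAC-O}}}_t$ argument of $c_t$ is unchanged, so that term is zero, but writing it uniformly as a sum of the two contributions only loosens the bound and keeps the statement clean.

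Finally, I would apply Assumption \ref{ass:lipschitz}(ii) to the cost $c_t$ at the two pairs $(\tilde{\tilde{y}}, \tilde{u})$ and $(\tilde{\tilde{y}}', \tilde{u}')$, choosing $R = \max\{(1+\kappa_a\kappa_b/\gamma)R_u+R_s,\,1\}$ per the bounds established in the first paragraph. Multiplying the Lipschitz constant $LR$ by the sum of the two displacement bounds gives exactly $L_f\norm{M_{t-k}-\widetilde{M}_{t-k}}$ with the stated $L_f$. The main obstacle is really only the accounting in paragraph one, namely pushing $\norm{\widehat{G}^{[k]}_t}_2\leq \kappa_a\kappa_b(1-\gamma)^{k-1}$ through the estimate of $\tilde{\tilde{y}}$ using $R_u$ (not the smaller DAC bound $\kappa_M\kappa_w h$) so that the resulting radius $R$ correctly becomes $(1+\kappa_a\kappa_b/\gamma)R_u + R_s$; everything else is a faithful copy of Lemma \ref{lem:lipschitzbound}.
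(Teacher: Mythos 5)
Your proposal is correct and follows essentially the same route as the paper, whose own "proof" of this lemma is just the remark that it repeats the steps of Lemma \ref{lem:lipschitzbound} with $G_t, s_t, u_t$ replaced by $\widehat{G}_t, \hat{s}_t, \tilde{u}_t$ and the radius $R$ re-evaluated via $R_u$ and $R_s$ on the high-probability event — exactly the accounting you carry out. The one point worth flagging is that your use of $\norm{\widehat{G}^{[k]}_t}_2 \leq \kappa_a\kappa_b(1-\gamma)^{k-1}$ relies on the estimate being projected onto $\mathcal{G}$, which the paper assumes implicitly (it later uses $\norm{\widehat{G}_t}_2 \leq \kappa_a\kappa_b/\gamma$) even though Algorithm \ref{alg:olc-zk} does not explicitly state the projection; this is a gap in the paper, not in your argument.
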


{\em Proof}:
The proof follows from the same steps as in the proof of Lemma \ref{lem:lipschitzbound} $\square$

\begin{lemma}
For all $M$ such that $\norm{M^{[j]}} \leq \kappa_M, ~ \forall ~ j \in [1,h]$, we have that
\[ \norm{\nabla_M f_t(M,\dots,M\vert \widehat{G}_t , \hat{s}_{1:t})} \leq G_f, ~~ G_f = GD\left(\frac{\kappa_a\kappa_b\kappa_w}{\gamma}+\kappa_w\right), ~~ D = \max\left\{\left(1+\frac{\kappa_a\kappa_b}{\gamma}\right)R_u+R_s,1\right\}. \]
\label{lem:gradbound-zk}
\end{lemma}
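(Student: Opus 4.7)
My plan is to mirror the proof of Lemma \ref{lem:gradbound}, with $G_t$ and $s_t$ replaced by the estimates $\widehat{G}_t$ and $\hat{s}_t$, and $u^{\pi_{\mathrm{DAC-O}}}_t$ replaced by the pure DAC part $\tilde{u}^{\pi_{\mathrm{DAC-O}}}_t[M \vert w_{1:t}]$ (the perturbation does not enter the truncated cost surrogate used for policy updates). When every $M_k$ in $M_{t:t-h}$ is fixed to a common $M$, denote the resulting value of $f_t$ by $F_t(M \vert \widehat{G}_t, \hat{s}_{1:t})$; then $\nabla_M f_t(M,\dots,M \vert \widehat{G}_t,\hat{s}_{1:t}) = \nabla_M F_t(M \vert \widehat{G}_t,\hat{s}_{1:t})$, and I would bound this gradient entrywise, exactly as in Lemma \ref{lem:gradbound}.

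The first step is to certify that the arguments fed into $c_t$ lie in a ball of radius $D$, so that Assumption \ref{ass:lipschitz}(iii) delivers $\norm{\nabla_y c_t}, \norm{\nabla_u c_t} \leq GD$ at the relevant point. Using $\norm{M^{[j]}} \leq \kappa_M$ and Assumption \ref{ass:sys}, the DAC input is bounded by $\norm{\tilde{u}^{\pi_{\mathrm{DAC-O}}}_t[M \vert w_{1:t}]} \leq \kappa_M \kappa_w h \leq R_u$. Combined with $\norm{\hat{s}_t} \leq R_s$ from the previous lemma and $\norm{\widehat{G}^{[k]}_t}_2 \leq \kappa_a\kappa_b(1-\gamma)^{k-1}$ inherited from $\widehat{G}_t \in \mathcal{G}$ (Definition \ref{def:system-set}), the truncated output obeys
\[
\norm{\tilde{\tilde{y}}^{\pi_{\mathrm{DAC-O}}}_t[M \vert \widehat{G}_t, \hat{s}_{1:t}]} \leq \norm{\hat{s}_t} + \sum_{k=1}^{h} \norm{\widehat{G}^{[k]}_t}_2 \, R_u \leq R_s + \frac{\kappa_a\kappa_b R_u}{\gamma},
\]
and both $\tilde{u}^{\pi_{\mathrm{DAC-O}}}_t$ and $\tilde{\tilde{y}}^{\pi_{\mathrm{DAC-O}}}_t$ are therefore at most $D$ in norm.

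Second, for each scalar entry $M^{[k]}_{p,q}$, the chain rule gives
\[
\nabla_{M^{[k]}_{p,q}} F_t(M \vert \widehat{G}_t, \hat{s}_{1:t}) = (\nabla_y c_t)^\top \frac{\partial \tilde{\tilde{y}}^{\pi_{\mathrm{DAC-O}}}_t}{\partial M^{[k]}_{p,q}} + (\nabla_u c_t)^\top \frac{\partial \tilde{u}^{\pi_{\mathrm{DAC-O}}}_t}{\partial M^{[k]}_{p,q}}.
\]
Differentiating $\tilde{u}^{\pi_{\mathrm{DAC-O}}}_t = \sum_{i=1}^h M^{[i]} w_{t-i}$ picks out a single disturbance $w_{t-k}$, whose norm is at most $\kappa_w$. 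Differentiating $\tilde{\tilde{y}}^{\pi_{\mathrm{DAC-O}}}_t = \hat{s}_t + \sum_{i=1}^h \widehat{G}^{[i]}_t \tilde{u}^{\pi_{\mathrm{DAC-O}}}_{t-i}$ collapses to $\sum_{i=1}^h \widehat{G}^{[i]}_t (\partial \tilde{u}^{\pi_{\mathrm{DAC-O}}}_{t-i}/\partial M^{[k]}_{p,q})$, whose norm is bounded by the geometric sum $\sum_{i=1}^h \kappa_a\kappa_b(1-\gamma)^{i-1}\kappa_w \leq \kappa_a\kappa_b\kappa_w/\gamma$. Plugging in $\norm{\nabla_y c_t}, \norm{\nabla_u c_t} \leq GD$ and assembling the entrywise bounds yields the claimed $G_f$.

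The one step requiring a bit of care is the uniform bound $\norm{\widehat{G}^{[k]}_t}_2 \leq \kappa_a\kappa_b(1-\gamma)^{k-1}$ for the least-squares estimator of Eq.~\eqref{eq:ls-est}; in principle Eq.~\eqref{eq:ls-est} is an unconstrained ridge estimate. I would handle this either by implicitly projecting $\widehat{G}^{\mathrm{cd}}_k$ onto the set $\mathcal{G}$ of Definition \ref{def:system-set} as part of the algorithm, or by invoking the high-probability estimation-error bound of parameter $\beta$ on the good event, combined with the true decay, which at worst inflates the constant by a factor of two. Either way the structure of the proof and the scaling of $G_f$ in $D$, $\kappa_a$, $\kappa_b$, $\kappa_w$ and $\gamma$ remain unchanged.
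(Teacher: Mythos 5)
Your proposal is correct and matches the paper's approach: the paper's entire proof of this lemma is the single line ``the proof follows from the same steps as in the proof of Lemma \ref{lem:gradbound},'' which is exactly the entrywise chain-rule argument you carry out with $\widehat{G}_t$, $\hat{s}_t$ and the pure DAC input in place of the true quantities. Your side remark about needing $\norm{\widehat{G}^{[k]}_t}_2 \leq \kappa_a\kappa_b(1-\gamma)^{k-1}$ is well taken — the paper's Eq.~\eqref{eq:ls-est} is indeed an unconstrained ridge estimate, but the proof of Lemma \ref{lem:esterror-k} reveals that a projection onto $\mathcal{G}$ is implicitly assumed, so your first resolution is the intended one.
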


{\em Proof}:
The proof follows from the same steps as in the proof of Lemma \ref{lem:gradbound} $\square$

We now prove the main result. Let $M^{\star}_{1:T}$ be the best switching DAC policy. Let $\Pi_\mathcal{M}$ denote the class of switching DAC policies. We can split the regret as
\begin{align}
& R_T = \sum_{t=1}^T c_t(y^{\pi_{\mathrm{DAC-O}}}_t, u^{\pi_{\mathrm{DAC-O}}}_t) - \min_{\pi \in \Pi_\mathcal{M}}\sum_{t=1}^T c_t(y^{\pi}_t, u^{\pi}_t) \nonumber \\
& \stackrel{(a)}{=} \underbrace{\sum_{t=1}^T c_t(y^{\pi_{\mathrm{DAC-O}}}_t, u^{\pi_{\mathrm{DAC-O}}}_t) - \sum_{t=1}^T c_t(\tilde{y}^{\pi_{\mathrm{DAC-O}}}_t[M_{t:t-h}\vert \widehat{G}_t , \hat{s}_{1:t}], u^{\pi_{\mathrm{DAC-O}}}_t)}_{\tn{Cost Truncation Error}} \nonumber \\
& + \underbrace{\sum_{t=1}^T c_t(\tilde{y}^{\pi_{\mathrm{DAC-O}}}_t[M_{t:t-h}\vert \widehat{G}_t , \hat{s}_{1:t}], u^{\pi_{\mathrm{DAC-O}}}_t) -  \sum_{t=1}^T f_t(M^{\star}_{t:t-h}\vert \widehat{G}_t , \hat{s}_{1:t})}_{\tn{Policy Regret}} \nonumber \\
& + \underbrace{\sum_{t=1}^T f_t(M^{\star}_{t:t-h}\vert \widehat{G}_t , \hat{s}_{1:t}) - \sum_{t=1}^T f_t(M^{\star}_{t:t-h}\vert G_t , s_{1:t})}_{\tn{Model Approximation Error}} \nonumber \\
& + \underbrace{ \sum_{t=1}^T f_t(M^{\star}_{t:t-h}\vert G_t , s_{1:t}) - \min_{\pi \in \Pi_\mathcal{M}}\sum_{t=1}^T c_t(y^{\pi}_t, u^{\pi}_t)}_{\tn{Policy Approximation Error}}. \nonumber 
\end{align}

By definition 
\[ \tilde{y}^{\pi_{\mathrm{DAC-O}}}_t[M_{t:t-h}\vert \widehat{G}_t , \hat{s}_{1:t}] = \hat{s}_t + \sum_{k = 1}^h \widehat{G}^{[k]}_tu^{\pi_{\mathrm{DAC-O}}}_{t-k} = y^{\pi_{\mathrm{DAC-O}}}_t.\]

Thus,
\[ \underbrace{\sum_{t=1}^T c_t(y^{\pi_{\mathrm{DAC-O}}}_t, u^{\pi_{\mathrm{DAC-O}}}_t) - \sum_{t=1}^T c_t(\tilde{y}^{\pi_{\mathrm{DAC-O}}}_t[M_{t:t-h}\vert \widehat{G}_t , \hat{s}_{1:t}], u^{\pi_{\mathrm{DAC-O}}}_t)}_{\tn{Cost Truncation Error}} = 0.\]

Next, we bound the policy regret term. Let $R = \max\left\{\left(1+\frac{\kappa_a\kappa_b}{\gamma}\right)R_u+R_s,1\right\}$. Then, by using the definition of $u^{\pi_{\mathrm{DAC-O}}}_t$, Assumption \ref{ass:lipschitz} and \cite[Claim D.3]{simchowitz2020improper}, with a probability $1-\delta/3$
\[ \sum_{t=1}^T c_t(\tilde{y}^{\pi_{\mathrm{DAC-O}}}_t[M_{t:t-h}\vert \widehat{G}_t , \hat{s}_{1:t}], u^{\pi_{\mathrm{DAC-O}}}_t) \leq \sum_{t=1}^T \left( f_t(M_{t:t-h}\vert \widehat{G}_t , \hat{s}_{1:t}) + 3LR\left(1+\frac{\kappa_a\kappa_b}{\gamma}\right)\sigma\sqrt{m+\log(1/\delta)}\right) \]

Thus, with a probability $1-\delta/3$, the policy regret term is given by
\begin{align}
& \sum_{t=1}^T c_t(\tilde{y}^{\pi_{\mathrm{DAC-O}}}_t[M_{t:t-h}\vert \widehat{G}_t , \hat{s}_{1:t}], u^{\pi_{\mathrm{DAC-O}}}_t) - \sum_{t=1}^T f_t(M^{\star}_t\vert \widehat{G}_t , \hat{s}_{1:t}) \nonumber \\
& \leq \sum_{t=1}^T f_t(M_{t:t-h}\vert \widehat{G}_t , \hat{s}_{1:t}) - \sum_{t=1}^T f_t(M^{\star}_{t:t-h} \vert \widehat{G}_t , \hat{s}_{1:t}) + 3LR\left(1+\frac{\kappa_a\kappa_b}{\gamma}\right)\sqrt{m+\log(1/\delta)}T\sigma. \nonumber 
\end{align}

Let $D = \sup_{M_1, M_2 \in \mathcal{M}} \norm{M_1 - M_2}$. By applying \cite[Theorem 2]{zhao2022non} to the online learning algorithm Algorithm \ref{alg:olc-zk-cd}, and Lemma \ref{lem:lipschitzbound-zk} and Lemma \ref{lem:gradbound-zk}, we get that
\[ \sum_{t=1}^T f_t(M_{t:t-h}\vert \widehat{G}_t , \hat{s}_{1:t}) -  \sum_{t=1}^T f_t(M^\star_{t:t-h}\vert \widehat{G}_t , \hat{s}_{1:t}) \leq \mathcal{O}\left(\sqrt{T(1+P_T)}\right). \]

Thus, under an event with a probability $1-\delta/3$
\begin{align} 
& \sum_{t=1}^T c_t(\tilde{y}^{\pi_{\mathrm{DAC-O}}}_t[M_{t:t-h}\vert \widehat{G}_t , \hat{s}_{1:t}], u^{\pi_{\mathrm{DAC-O}}}_t) - \sum_{t=1}^T f_t(M^\star_{t:t-h}\vert \widehat{G}_t , \hat{s}_{1:t}) \nonumber \\
& \leq \mathcal{O}\left(\sqrt{T(1+P_T)}\right) + 3LR\left(1+\frac{\kappa_a\kappa_b}{\gamma}\right)\sqrt{m+\log(1/\delta)}T\sigma. \nonumber 
\end{align}

Next, we bound the model approximation term. 
\begin{align}
&  \sum_{t=1}^T f_t(M^\star_{t:t-h}\vert \widehat{G}_t , \hat{s}_{1:t}) -  \sum_{t=1}^T f_t(M^\star_{t:t-h}\vert G_t , s_{1:t}) \stackrel{(a)}{\leq} \nonumber \sum_{t=1}^T f_t(M^\star_{t:t-h}\vert \widehat{G}_t , \hat{s}_{1:t})  - \sum_{t=1}^T f_t(M^\star_{t:t-h}\vert G_t , s_{1:t}) \\
&  \stackrel{(b)}{=} \sum_{t=1}^T c_t(\tilde{\tilde{y}}^{\pi_{\mathrm{DAC-O}}}_t[M^\star_{t:t-h} \vert \widehat{G}_t , \hat{s}_{1:t}],\tilde{u}^{\pi_{\mathrm{DAC-O}}}_t[M^\star_t \vert w_{1:t}]) - \sum_{t=1}^T c_t(\tilde{\tilde{y}}^{\pi_{\mathrm{DAC-O}}}_t[M^\star_{t:t-h} \vert G_t , s_{1:t}],\tilde{u}^{\pi_{\mathrm{DAC-O}}}_t[M^\star_t \vert w_{1:t}]) \nonumber \\
& \stackrel{(c)}{\leq} LR\sum_{t=1}^T\left( \norm{\tilde{\tilde{y}}^{\pi_{\mathrm{DAC-O}}}_t[M^\star_{t:t-h} \vert \widehat{G}_t , \hat{s}_{1:t}] - \tilde{\tilde{y}}^{\pi_{\mathrm{DAC-O}}}_t[M^\star_{t:t-h} \vert G_t , s_{1:t}]}\right). \nonumber 
\end{align}

Here, 
$(b)$ follows from the respective definitions and $(c)$ follows from Assumption \ref{ass:lipschitz}. Now, under the same event, with probability $1-\delta/3$,
\begin{align}
& \norm{\tilde{\tilde{y}}^{\pi_{\mathrm{DAC-O}}}_t[M^\star_{t:t-h} \vert \widehat{G}_t , \hat{s}_{1:t}] - \tilde{\tilde{y}}^{\pi_{\mathrm{DAC-O}}}_t[M^\star_t \vert G_t , s_{1:t}]} \nonumber \\
& = \norm{ (\hat{s}_t - s_t) + \sum_{k=1}^h  \widehat{G}^{[k]}_t\tilde{u}^{\pi_{\mathrm{DAC-O}}}_{t-k}[M^\star_{t-k} \vert w_{1:t}] - \sum_{k=1}^h  G^{[k]}_t\tilde{u}^{\pi_{\mathrm{DAC-O}}}_{t-k}[M^\star_{t-k}\vert w_{1:t}] } \nonumber \\
& = \norm{ (\hat{s}_t - s_t) + \sum_{k=1}^h  \left(\widehat{G}^{[k]}_t - G^{[k]}_t\right)\tilde{u}^{\pi_{\mathrm{DAC-O}}}_{t-k}[M^\star_{t-k} \vert w_{1:t}] }  \stackrel{(d)}{\leq } \norm{\hat{s}_t - s_t} + R_u\sqrt{h}\norm{\widehat{G}_t - G_t}_2 \nonumber \\
& \stackrel{(e)}{\leq} 2R_u\sqrt{h}\norm{G_{t} - \widehat{G}_{t}}_2 + \frac{\kappa_a\kappa_bR_u(1-\gamma)^{h}}{\gamma}. \nonumber 
\end{align}

Here, $(d)$ follows from triangle inequality and $(e)$ follows from substituting for $\hat{s}_t$. Putting the terms together, we get that, under the same event, with probability $1-\delta/3$
\begin{align}
&  \sum_{t=1}^T f_t(M^\star_{t:t-h}\vert \widehat{G}_t , \hat{s}_{1:t}) - \sum_{t=1}^T f_t(M^\star_{t:t-h}\vert G_t , s_{1:t}) \nonumber \\
& \leq  2LRR_u\sqrt{h}\norm{G_{t} - \widehat{G}_{t}}_2 + \frac{\kappa_a\kappa_bLRR_u(1-\gamma)^{h}T}{\gamma}. \nonumber 
\end{align}

Next, we bound the final policy approximation error. Let $M_\star$ be the optimizing disturbance gain for $\sum_{t=1}^T c_t(y^{\pi}_t, u^{\pi}_t)$ and let $\pi_\star$ denote the policy corresponding to $M_\star$. Then,
\begin{align}
& \sum_{t=1}^T f_t(M^\star_{t:t-h}\vert G_t , s_{1:t}) - \min_{\pi \in \Pi_\mathcal{M}}\sum_{t=1}^T c_t(y^{\pi}_t, u^{\pi}_t) = \sum_{t=1}^T f_t(M^\star_{t:t-h}\vert G_t , s_{1:t}) - \sum_{t=1}^T c_t(y^{\pi_\star}_t, u^{\pi_\star}_t) \nonumber \\
& = \sum_{t=1}^T c_t(\tilde{\tilde{y}}^{\pi_{\mathrm{DAC-O}}}_t[M^\star_{t:t-h} \vert G_t , s_{1:t}],\tilde{u}^{\pi_{\mathrm{DAC-O}}}_t[M^\star_t\vert w_{1:t}]) - \sum_{t=1}^T c_t(y^{\pi_\star}_t, u^{\pi_\star}_t) \nonumber \\
& \stackrel{(f)}{\leq} LR\sum_{t=1}^T \left( \norm{\tilde{\tilde{y}}^{\pi_{\mathrm{DAC-O}}}_t[M^\star_{t:t-h} \vert G_t , s_{1:t}] -y^{\pi_\star}_t} + \norm{\tilde{u}^{\pi_{\mathrm{DAC-O}}}_t[M^\star_t \vert w_{1:t}] - u^{\pi_\star}_t} \right) \nonumber \\
& \stackrel{(g)}{\leq} LR\sum_{t=1}^T \left( \norm{\tilde{\tilde{y}}^{\pi_{\mathrm{DAC-O}}}_t[M^\star_{t:t-h}\vert G_t , s_{1:t}] -y^{\pi_\star}_t} \right) \stackrel{(h)}{=} LR\sum_{t=1}^T \left( \norm{\sum_{k=h+1}^{t-1} G^{[k]}_t u^{\pi_\star}_{t-k}} \right) \nonumber \\
& \stackrel{(i)}{\leq} \frac{LRR_u\kappa_a\kappa_b(1-\gamma)^hT}{\gamma}. \nonumber  
\end{align}

Here $(f)$ follows from Assumption \ref{ass:lipschitz}, $(g)$ follows from the fact that $\tilde{u}^{\pi_{\mathrm{DAC-O}}}_t[M^\star_t \vert w_{1:t}] = u^{\pi_\star}_t$ and $(h)$ follows from the fact that $\tilde{\tilde{y}}^{\pi_{\mathrm{DAC-O}}}_t[M^\star_{t:t-h} \vert G_t , s_{1:t}]$ is just the truncation of $x^{\pi_\star}_t$ and $(i)$ follows from Assumption \ref{ass:sys}. Putting together the bound on all the terms we get that with a probability $1-\delta/3$

\begin{align}
& R_T \leq \mathcal{O}\left(\sqrt{T(1+P_T)}\right)  \nonumber \\
& + 3LR\left(1+\frac{\kappa_a\kappa_b}{\gamma}\right)\sqrt{m+\log(1/\delta)}T\sigma + 2LRR_u\sqrt{h}\sum_{t=1}^T\norm{G_{t} - \widehat{G}_{t}}_2 \nonumber \\
& + \frac{2LRR_u\kappa_a\kappa_b(1-\gamma)^hT}{\gamma}. \nonumber
\end{align}

Next, we bound the term $\sum_{t=1}^T\norm{G_{t} - \widehat{G}_{t}}_2$. We denote the system parameters in a period where it does not change by $G$, without the time subscript. First we introduce two lemmas which we then use to derive the final bound.
\begin{lemma}
For any period $k$ in the CPD algorithm where the system parameters are a constant, i.e., $G_t = G$, suppose $t_p \geq chm\log(2hm)^2\log(2t_pm)^2$ for some constant $c > 0$, $T \geq 3$, then with probability $1-N^{-\log(N)}-\delta$
\begin{align}
    & \norm{\widehat{G}^{\mathrm{cd}}_k-G}_2 \leq \frac{\beta(\delta, \lambda, \sigma, h ,N)}{\sigma\sqrt{N}}, ~~ \beta(\delta, \lambda, \sigma, h ,N) = 2\sqrt{h}\zeta_\Delta\left( \sqrt{ n\log\left(2\right) + 2\log\left(\frac{2h}{\delta}\right)} + \frac{\lambda\kappa_a\kappa_b}{\gamma\zeta_\Delta\sigma\sqrt{hN}}\right) \nonumber \\
& \zeta_\Delta = \left(R_s + \frac{\kappa_a\kappa_b\kappa_m\kappa_wh}{\gamma} + \frac{\kappa_a\kappa_bR_u}{\gamma}\right). \nonumber 
\end{align}
\label{lem:esterror-k}
\end{lemma}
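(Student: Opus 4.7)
}
Throughout, fix a period $k$ on which $G_t = G$ is constant, and let $N_p := t^k_e - t^k_s - 2h$ denote the number of samples used in the ridge fit. My first step is to rewrite the regression as a linear model in the injected Gaussian perturbations. Using $u^{\pi_{\mathrm{DAC-O}}}_p = \tilde u^{\pi_{\mathrm{DAC-O}}}_p + \delta u^{\pi_{\mathrm{DAC-O}}}_p$ and the fact that $G_t$ is constant on the window, one gets
\[
y^{\pi_{\mathrm{DAC-O}}}_p \;=\; \sum_{l=1}^{h} G^{[l]}\, \delta u^{\pi_{\mathrm{DAC-O}}}_{p-l} \;+\; \zeta_p,
\]
where the residual
\[
\zeta_p := s_p \;+\; \sum_{l=1}^{h} G^{[l]}\tilde u^{\pi_{\mathrm{DAC-O}}}_{p-l} \;+\; \sum_{l=h+1}^{p-1} G^{[l]} u^{\pi_{\mathrm{DAC-O}}}_{p-l}
\]
collects the natural response, the DAC contribution within the memory window, and the truncated tail. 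The previously established high-probability bounds $\|s_p\|\le R_s$, $\|\tilde u_p\|\le \kappa_M\kappa_w h$, $\|u_p\|\le R_u$, combined with the geometric-decay estimate on $\|G^{[l]}\|_2$ from Assumption~\ref{ass:sys}, yield $\|\zeta_p\|\le \zeta_\Delta$ on an event of probability at least $1-\delta$.

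Next, let $z_p := [(\delta u^{\pi_{\mathrm{DAC-O}}}_{p-1})^\top, \dots, (\delta u^{\pi_{\mathrm{DAC-O}}}_{p-h})^\top]^\top \in\mathbb{R}^{hm}$ and $V := \lambda I + \sum_p z_p z_p^\top$. Writing the ridge estimator in closed form gives
\[
\widehat G^{\mathrm{cd}}_k - G \;=\; \Bigl(\sum_{p} \zeta_p z_p^\top \;-\; \lambda G\Bigr) V^{-1}.
\]
The key structural observation is that $z_p$ is built only from perturbations in $[p-h, p-1]$, whereas $\zeta_p$ depends on the pre-$(p-h)$ history (past perturbations via the tail sum and via the DAC gains $M_{p-l}$, both of which are measurable with respect to the filtration $\mathcal F_{p-h-1}$). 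Thus $\{z_p\zeta_p^\top\}$ is a matrix martingale-difference sequence adapted to a suitable filtration, which allows me to invoke a self-normalized concentration bound of the Abbasi-Yadkori type to obtain
\[
\Bigl\|\sum_{p} \zeta_p z_p^\top \Bigr\|_{V^{-1}} \;\le\; \zeta_\Delta \sqrt{\, n\log 2 + 2\log(2h/\delta)\,},
\]
with probability $\ge 1-\delta$; the $2h$ inside the log comes from a union bound over the $h$ output-blocks (and the factor of $2$ from the Hermitization of the matrix tail). The bias term is controlled via $\|G\|_2 \le \kappa_a\kappa_b/\gamma$, producing the $\lambda\kappa_a\kappa_b/\gamma$ contribution in $\beta$.

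Finally, I need a matching lower bound $\lambda_{\min}(V) \gtrsim \sigma^2 h N$. Because the block design vectors $\{z_p\}$ overlap on $h-1$ consecutive perturbations, they are not i.i.d., so the usual matrix-Chernoff bound does not apply verbatim; this is the main technical hurdle. I plan to handle it by subsampling every $h$-th index to obtain an independent subfamily of Gaussian vectors in $\mathbb{R}^{hm}$, applying standard Gaussian covariance concentration to this subfamily, and accounting for the loss of a factor of $h$ through the $\sqrt h$ prefactor in $\beta$. The hypothesis $t_p \ge c\, hm\log^2(2hm)\log^2(2t_p m)$ is exactly the sample-size regime required for this covariance concentration to succeed with probability $1 - N^{-\log N}$, matching the failure probability in the statement. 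Combining the martingale upper bound on $\|\sum_p z_p \zeta_p^\top\|_{V^{-1}}$, the bias term, and the eigenvalue lower bound, a union bound over these events produces
\[
\bigl\|\widehat G^{\mathrm{cd}}_k - G\bigr\|_2 \;\le\; \frac{\beta(\delta,\lambda,\sigma,h,N)}{\sigma\sqrt{N}},
\]
as claimed.
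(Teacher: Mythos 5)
Your overall architecture matches the paper's proof: rewrite the window as a linear regression in the injected perturbations with a bounded residual, use the closed form of the ridge error to split into a cross term and a $\lambda$-bias term, exploit the fact that the residual is $\mathcal{F}_{p-h-1}$-measurable while the perturbations arrive afterwards to apply a self-normalized martingale bound, lower-bound the design Gram matrix, and union-bound. The measurability argument and the bound $\|\zeta_p\|\le\zeta_\Delta$ are exactly the paper's $\delta_p$ decomposition.

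There is, however, one step that does not go through as you have written it. You self-normalize the cross term by $V^{-1}$ with $V=\lambda I+\sum_p z_pz_p^\top$ the \emph{design} Gram, and assert the bound $\zeta_\Delta\sqrt{n\log 2+2\log(2h/\delta)}$. In the Abbasi--Yadkori inequality the conditionally centered quantity must sit in the ``noise'' slot and the predictable quantity in the ``regressor'' slot; here the Gaussian perturbation is the centered object and $\zeta_p$ is the predictable one, so the correct self-normalization is by the Gram of the $\zeta_p$'s (regularized by $\zeta^2 I$ with $\zeta=\sqrt{N}\zeta_\Delta$), which lives in $\mathbb{R}^{n\times n}$ and is bounded by $2\zeta^2 I$ on the event $\|\mathbf{\Delta}\|_2\le\zeta$ --- that is precisely where the $n\log 2$ in $\beta$ comes from. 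With your $V^{-1}$ normalization the log-determinant term would instead be of order $hm\log(1+N\sigma^2/\lambda)$, and moreover $\zeta_p$ is not conditionally centered, so $\sum_p z_p\zeta_p^\top$ is not controlled by the standard inequality in that orientation. The paper avoids your overlapping-blocks difficulty entirely by bounding $\|\mathbf{U}^\top\mathbf{\Delta}\|_2\le\sqrt{h}\max_{i\in[1,h]}\|\sum_p\delta u_{p-i}\delta_p^\top\|_2$ and applying the scalar-noise inequality separately for each lag $i$ (each such sum uses disjoint, independent perturbations), with the $\sqrt{h}$ and the $2h/\delta$ accounting for the union over lags. Two smaller points: the Gram lower bound should read $\lambda_{\min}(V)\gtrsim N\sigma^2$ (the paper cites a result of Oymak--Ozay directly), not $\sigma^2 hN$; and your subsampling-every-$h$ workaround would cost a factor of $h$ in the effective sample size that is not absorbed by the $\sqrt{h}$ already present in $\beta$ (that $\sqrt{h}$ is spent on the union over lags), though since $h=O(\log T)$ this only degrades logarithmic factors.
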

{\em proof}: The proof follows the proof style of \cite[Lemma D.4]{simchowitz2020improper} but requires many additional steps. We recall
\[ \widehat{G}^{\mathrm{cd}}_k = \argmin_{\widehat{G}} \sum_{p = t^k_s+h}^{t^k_e} \ell_p\left(\widehat{G}\right) + \lambda \norm{\widehat{G}}, ~~ \ell_p\left(\widehat{G}\right) = \norm{y^{\pi_{\mathrm{DAC-O}}}_p - \sum_{l=1}^h \widehat{G}^{[l]}\delta u^{\pi_{\mathrm{DAC-O}}}_{p-l}}^2.\]

Let $\delta_p = y^{\pi_{\mathrm{DAC-O}}}_p - \sum_{l=1}^h G^{[l]}\delta u^{\pi_{\mathrm{DAC-O}}}_{p-l}$. Then, it follows that 
\[ \ell_p\left(\widehat{G}\right) = \norm{\delta_p - \sum_{l=1}^h (\widehat{G}^{[l]} - G^{[l]})\delta u^{\pi_{\mathrm{DAC-O}}}_{p-l}}^2 .\]

Let $\delta \mathbf{u}^\top_p = [\delta {u^{\pi_{\mathrm{DAC-O}}}_{p-1}}^\top, \delta {u^{\pi_{\mathrm{DAC-O}}}_{p-2}}^\top, \dots, \delta {u^{\pi_{\mathrm{DAC-O}}}_{p-h}}^\top]$. Then,
\[ \ell_p\left(\widehat{G}\right) = \norm{\delta_p - (\widehat{G} - G)\delta \mathbf{u}_p}^2 .\]

Let $\mathbf{\Delta}^\top = [\delta_{t^k_s+h}, \dots, \delta_{t^k_e}], ~~ \mathbf{U}^\top = [\delta \mathbf{u}^{\pi_{\mathrm{DAC-O}}}_{t^k_s+h}, \dots, \delta \mathbf{u}^{\pi_{\mathrm{DAC-O}}}_{t^k_e}].$ Then, the solution to the least squares satisfies
\[ (\widehat{G}^{\mathrm{cd}}_k)^\top - G^\top = \left(\mathbf{U}^\top\mathbf{U}+\lambda I\right)^{-1}\left(\mathbf{U}^\top\mathbf{\Delta} -\lambda G^\top\right).\]

Then, by Cauchy-Schwarz
\[ \norm{\widehat{G}^{\mathrm{cd}}_k - G}_2 \leq  \norm{\left(\mathbf{U}^\top\mathbf{U}+\lambda I\right)^{-1}}_2\left(\norm{\mathbf{U}^\top\mathbf{\Delta}}_2 + \lambda \norm{G}_2\right).\]

Now, by expanding $\mathbf{U}^\top\mathbf{\Delta}$ it follows that the rows of $\mathbf{U}^\top\mathbf{\Delta}$ are given by
\[ \sum_{p = t^k_s+h}^{t^k_e} \delta u^{\pi_{\mathrm{DAC-O}}}_{p-i}\delta^\top_p, ~~ \forall i \in [1,h]. \]

Let $\mathcal{S}^{m} := \{v \in \mathbb{R}^m : \norm{v} = 1\}$. Then, following steps similar to \cite[Lemma D.4]{simchowitz2020improper}, which follows from standard matrix norm inequalities, we get
\[ \norm{\mathbf{U}^\top\mathbf{\Delta}}_2 \leq \sqrt{h} \max_{i \in [1,h]}\norm{\sum_{p = t^k_s+h}^{t^k_e} \delta u^{\pi_{\mathrm{DAC-O}}}_{p-i}\delta^\top_p}_2 \leq \sqrt{h} \max_{i \in [1,h]} \max_{v \in \mathcal{S}^{m-} } \norm{v^\top\sum_{p = t^k_s+h}^{t^k_e} \delta u^{\pi_{\mathrm{DAC-O}}}_{p-i}\delta^\top_p}_2. \]

By definition
\[ \delta_p = s_p + \sum_{i = 1}^{p-1}\sum_{j=1}^h G^{[i]}_pM^{[j]}_{p-i}w_{p-i-j} + \sum_{i = h+1}^{p-1}G^{[i]}_p\delta u^{\pi_{\mathrm{DAC-O}}}_{p-i}.\]

Consider the filtration $\mathcal{F}_t$ generated by the sequence of random inputs $\delta u^{\pi_{\mathrm{DAC-O}}}_t$. The terms in $\delta_p$ that are dependent on the sequence of random inputs are the third and the second term. The third term is clearly $\mathcal{F}_{p-h-1}$ measurable. The variables in second term that are dependent on the sequence of random inputs are $M_{p-i}$s. $M_{p-i}$ is dependent only through $\widehat{G}_{p-i-1}$ given the linearity of the cost functions and the update equation for $M_t$. By Eq. \eqref{eq:ls-est}, it follows that $\widehat{G}_{p-i-1}$ is only a function of the random inputs up to $p-i-h-1$ time steps. Further, by the fact that the least-squares solution is a regularized least-squares, and the fact that $\tn{Proj}_\mathcal{G}$ is also continuous (which follows from the fact that $\mathcal{G}$ is a convex set in a Hilbert space), $\widehat{G}_{p-i-1}$ is  a continuous function. By the update equation for $M_t$s, it follows that $M_{p-i}$ is also a continuous function, and therefore $\mathcal{F}_{p-i-h-1}$ measurable. Therefore, $\delta_p$ is overall $\mathcal{F}_{p-h-1}$ measurable. 

Given that $\delta u^{\pi_{\mathrm{DAC-O}}}_{p-i}$ is $\mathcal{F}_{p-i}$ is measurable and $\delta_p$ is $\mathcal{F}_{p-h-1}$ for each $i \in [1,h]$, the self-normalized martingale inequality \cite[Theorem 16]{abbasi2011regret} can be applied to the sum $v^\top\sum_{p = t^k_s+h}^{t^k_e} \delta u^{\pi_{\mathrm{DAC-O}}}_{p-i}\delta^\top_p$ by replacing $k$ in \cite[Theorem 16]{abbasi2011regret} by $p$ and setting $\eta_{p} = v^\top \delta u^{\pi_{\mathrm{DAC-O}}}_{p-i}$, $m_{p-1} = \delta^\top_p$. 

Then, by recognizing that the length of the sum $\sum_{p = t^k_s+h}^{t^k_e} \delta u^{\pi_{\mathrm{DAC-O}}}_{p-i}\delta^\top_p$ is $N$ and therefore setting $V = \zeta^{2} I, ~ V_N = \mathbf{\Delta}^\top\mathbf{\Delta}, ~ \overline{V}_N = \mathbf{\Delta}^\top\mathbf{\Delta} + V$ in \cite[Theorem 16]{abbasi2011regret}, we have that with probability $1-\delta/(2h)$
\[ \norm{v^\top\sum_{p = t^k_s+h}^{t^k_e} \delta u^{\pi_{\mathrm{DAC-O}}}_{p-i}\delta^\top_p (\mathbf{\Delta}^\top\mathbf{\Delta}+\zeta^2 I)^{-1/2}}^2_2 \leq \sigma^2\left( \log\left(\frac{\tn{det}(\mathbf{\Delta}^\top\mathbf{\Delta}+\zeta^2 I)}{\zeta^{2n}}\right) + 2\log\left(\frac{2h}{\delta}\right) \right).\]

Let $\zeta$ be a parameter corresponding to the event $\mathcal{E}_\zeta:= \norm{\mathbf{\Delta}}_2 \leq \zeta$. Then, under this event we have that 
\[ \tn{det}(\mathbf{\Delta}^\top\mathbf{\Delta}+\zeta^2 I) \leq 2^n\zeta^{2n}, ~ \tn{i.e.}, ~\frac{\tn{det}(\mathbf{\Delta}^\top\mathbf{\Delta}+\zeta^2 I)}{\zeta^{2n}} \leq 2^n. \]

By the fact that $(\mathbf{\Delta}^\top\mathbf{\Delta}+\zeta^2 I)^{-1} \leq \frac{1}{\zeta^2}$, we have that
\[ \frac{1}{\zeta^2}\norm{v^\top\sum_{p = t^k_s+h}^{t^k_e} \delta u^{\pi_{\mathrm{DAC-O}}}_{p-i}\delta^\top_p}^2_2 \leq \norm{v^\top\sum_{p = t^k_s+h}^{t^k_e} \delta u^{\pi_{\mathrm{DAC-O}}}_{p-i}\delta^\top_p (\mathbf{\Delta}^\top\mathbf{\Delta}+\zeta^2 I)^{-1/2}}^2_2. \]

Then, combining the above three equations, under the event $\mathcal{E}_\zeta$ and with probability $1-\delta/(2h)$
\[ \frac{1}{\zeta^2}\norm{v^\top\sum_{p = t^k_s+h}^{t^k_e} \delta u^{\pi_{\mathrm{DAC-O}}}_{p-i}\delta^\top_p}^2_2 \leq \sigma^2\left( n\log\left(2\right) + 2\log\left(\frac{2h}{\delta}\right) \right). \]

Next, by \cite[Lemma C.2]{oymak2019non}, given that the conditions of \cite[Lemma C.2]{oymak2019non} are satisfied, with probability at least $1-N^{-\log(N)}$
\[ \mathbf{U}^\top\mathbf{U} \geq N\sigma^2/2, ~ \tn{i.e.}, ~ \mathbf{U}^\top\mathbf{U} + \lambda I \geq N\sigma^2/2 .\]

Therefore, under the event $\mathcal{E}_\zeta$, by union bound, with probability $1-\delta/2-N^{-\log(N)}$
\begin{align}
& \norm{\widehat{G}^{\mathrm{cd}}_k - G}_2 \leq  \norm{\left(\mathbf{U}^\top\mathbf{U}+\lambda I\right)^{-1}}_2\left(\norm{\mathbf{U}^\top\mathbf{\Delta}}_2 + \lambda \norm{G}_2\right) \leq \frac{2\sqrt{h}\zeta}{N\sigma}\left( \sqrt{ n\log\left(2\right) + 2\log\left(\frac{2h}{\delta}\right)} + \frac{\lambda}{\sqrt{h}\zeta\sigma} \norm{G}_2\right) \nonumber \\
& \leq \frac{2\sqrt{h}\zeta}{N\sigma}\left( \sqrt{ n\log\left(2\right) + 2\log\left(\frac{2h}{\delta}\right)} + \frac{\lambda\kappa_a\kappa_b}{\sqrt{h}\zeta\sigma\gamma}\right). \nonumber 
\end{align} 

By \cite[Claim D.3]{simchowitz2020improper}, for $T \geq 3$, $\delta \leq 1/T$, with probability greater than $1-\delta/2$
\[ \norm{\delta u^{\pi_{\mathrm{DAC-O}}}_{p}} \leq  3\sigma\sqrt{m+\log(1/\delta)}, ~~\forall ~~ p \in [t^k_s, t^k_e]. \]

Lets call the event where $\norm{\delta u^{\pi_{\mathrm{DAC-O}}}_{p}} \leq  3\sigma\sqrt{m+\log(1/\delta)}, ~~\forall ~~ p \in [t^k_s, t^k_e]$ as $\mathcal{E}_{u,\tn{b}}$. Therefore, under the event $\mathcal{E}_{u,\tn{b}}$
\begin{align}
& \norm{\mathbf{\Delta}}_2 \leq \sqrt{N} \max_{p \in [t^k_s+h,t^k_e]} \norm{\delta_p}  = \sqrt{N} \max_{p \in [t^k_s+h,t^k_e]} \norm{s_p + \sum_{i = 1}^{p-1}\sum_{j=1}^h G^{[i]}_pM^{[j]}_{p-i}w_{p-i-j} + \sum_{i = h+1}^{p-1}G^{[i]}_p\delta u^{\pi_{\mathrm{DAC-O}}}_{p-i}} \nonumber \\
& \leq \sqrt{N} \max_{p \in [t^k_s+h,t^k_e]} \left(\norm{s_p} + \norm{\sum_{i = 1}^{p-1}\sum_{j=1}^h G^{[i]}_pM^{[j]}_{p-i}w_{p-i-j}} + \norm{\sum_{i = h+1}^{p-1}G^{[i]}_p\delta u^{\pi_{\mathrm{DAC-O}}}_{p-i}}\right) \nonumber \\
& \leq \sqrt{N} \max_{p \in [t^k_s+h,t^k_e]} \left(\norm{s_p} + \sum_{i = 1}^{p-1}\norm{G^{[i]}_p}_2\sum_{j=1}^h \norm{M^{[j]}_{p-i}}\norm{w_{p-i-j}} + \sum_{i = h+1}^{p-1}\norm{G^{[i]}_p}_2\norm{\delta u^{\pi_{\mathrm{DAC-O}}}_{p-i}}\right) \nonumber \\
& \leq \sqrt{N} \left(R_s + \frac{\kappa_a\kappa_b\kappa_m\kappa_wh}{\gamma} + \frac{\kappa_a\kappa_bR_u}{\gamma}\right). \nonumber 
\end{align}

Then, for $\zeta = \sqrt{N}\zeta_\Delta$, the probability of event $\mathcal{E}_\zeta$ is greater than $1-\delta/2$. Then, by union bound, with probability greater than $1-\delta-N^{-\log(N)}$
\[\norm{\widehat{G}^{\mathrm{cd}}_k - G}_2 \leq \frac{2\sqrt{h}\zeta_\Delta}{\sqrt{N}\sigma}\left( \sqrt{ n\log\left(2\right) + 2\log\left(\frac{2h}{\delta}\right)} + \frac{\lambda\kappa_a\kappa_b}{\gamma\zeta_\Delta\sigma\sqrt{hN}}\right) \square \nonumber \]

\begin{lemma}
From any period $k$ (of the CPD algorithm) onwards, where the system parameters are a constant, i.e., $G_t = G$ for $t \geq t^k_s$, suppose $T \geq 3$, $N_p = t-t^k_s-2h+1$ and $N_p \geq chm\log(2hm)^2\log(2(N_p+h)m)^2$ for some constant $c > 0$. Then, with probability $1-\frac{(N-h)\log{T}}{(N-h)^{\log{N-h}}}-\delta$ 
\[\norm{\widehat{G}_t-G}_2 \leq \frac{\beta(\delta, \lambda, \sigma, h ,N_p)}{\sigma\sqrt{N_p}}, ~~ \beta(\delta, \lambda, \sigma, h , N_p) = 2\sqrt{h}\zeta_\Delta\left( \sqrt{ n\log\left(2\right) + 2\log\left(\frac{2hT}{\delta}\right)} + \frac{\lambda\kappa_a\kappa_b}{\gamma\zeta_\Delta\sigma\sqrt{hN_p}}\right), \]
for all $t$ s.t. $t \geq t^k_s+h+N$, $N_p \geq chm\log(2hm)^2\log(2(N_p+h)m)^2$, where $\zeta_\Delta = \left(R_s + \frac{\kappa_a\kappa_b\kappa_m\kappa_wh}{\gamma} + \frac{\kappa_a\kappa_bR_u}{\gamma}\right)$. Moreover, under the same event,
\[ \norm{\widehat{G}^{\mathrm{cd}}_k-G}_2 \leq \frac{\beta(\delta, \lambda, \sigma, h ,N)}{\sigma\sqrt{N}}. \]
\label{lem:regesterror}
\end{lemma}
{\em Proof}: The proof follows the same outline as Lemma \ref{lem:esterror-k}. We recall that
\[ \widehat{G}_t = \tn{Proj}_{\mathcal{G}}(\widehat{G}^\star), ~ \widehat{G}^\star = \argmin_{\widehat{G}} \sum_{p = t^k_s+h}^{t-h} \ell_p\left(\widehat{G}\right) + \lambda \norm{\widehat{G}}, ~~ \ell_p\left(\widehat{G}\right) = \norm{y^{\pi_{\mathrm{DAC-O}}}_p - \sum_{l=1}^h \widehat{G}^{[l]}\delta u^{\pi_{\mathrm{DAC-O}}}_{p-l}}^2.\]

The definition of $\delta_p$ and $\delta \mathbf{u}^\top_p$ are the same as the proof of Lemma \ref{lem:esterror-k}. Lets define $\mathbf{\Delta}^\top_t = [\delta_{t^k_s+h}, \dots, \delta_{t-h}], ~~ \mathbf{U}^\top_t = [\delta \mathbf{u}^{\pi_{\mathrm{DAC-O}}}_{t^k_s+h}, \dots, \delta \mathbf{u}^{\pi_{\mathrm{DAC-O}}}_{t-h}].$ 

Lets define a Hilbert space on the set of matrices $\mathbb{R}^{n\times mh}$ with the inner product defined as
\[ \langle G,H \rangle = \max_{v \in \mathcal{S}^{mh}}(Gv)^\top(Hv).\]

This inner product induces the $2-$ norm of a matrix as a norm on the space $\mathbb{R}^{n\times mh}$. Since $\mathcal{G}$ is convex by definition, it follows that any projection on to $\mathcal{G}$ is contractive. Therefore, we have that 
\[ \norm{\widehat{G}_t - G}_2 \leq \norm{\widehat{G}^\star - G}_2 \leq  \norm{\left(\mathbf{U}^\top_t\mathbf{U}_t+\lambda I\right)^{-1}}_2\left(\norm{\mathbf{U}^\top_t\mathbf{\Delta}_t}_2 + \lambda \norm{G}_2\right).\]

Next, applying \cite[Theorem 16]{abbasi2011regret} as in Lemma \ref{lem:esterror-k}, we have that with probability $1-\delta/(2h)$, for all $t \geq t^k_s+2h$ 
\[ \norm{v^\top\sum_{p = t^k_s+h}^{t-h} \delta u^{\pi_{\mathrm{DAC-O}}}_{p-i}\delta^\top_p (\mathbf{\Delta}^\top_t\mathbf{\Delta}_t+\zeta^2_t I)^{-1/2}}^2_2 \leq \sigma^2\left( \log\left(\frac{\tn{det}(\mathbf{\Delta}^\top_t\mathbf{\Delta}_t+\zeta^2_t I)}{\zeta^{2n}_t}\right) + 2\log\left(\frac{2hT}{\delta}\right) \right).\]

Let $\zeta_t$ be a parameter and define the event $\mathcal{E}_{\zeta_t} := \norm{\mathbf{\Delta}_t}_2 \leq \zeta_t$. Next, following the same steps as the proof of Lemma \ref{lem:esterror-k}, we have that, under the event $\mathcal{E}_{\zeta_t}$ for all $t \geq t^k_s+2h$ and with probability $1-\delta/(2h)$, for all $t \geq t^k_s+2h$
\[ \frac{1}{\zeta_t^2}\norm{v^\top\sum_{p = t^k_s+h}^{t-h} \delta u^{\pi_{\mathrm{DAC-O}}}_{p-i}\delta^\top_p}^2_2 \leq \sigma^2\left( n\log\left(2\right) + 2\log\left(\frac{2hT}{\delta}\right) \right). \]

Also, by \cite[Lemma C.2]{oymak2019non}, with probability at least $1-\frac{(N-h)\log{T}}{(N-h)^{\log{N-h}}}$
\[ \mathbf{U}^\top_t\mathbf{U}_t \geq N_p\sigma^2/2, ~ \tn{i.e.}, ~ \mathbf{U}^\top_t\mathbf{U}_t + \lambda I \geq N_p\sigma^2/2, \]
for all $t$ s.t. $t \in [N+h+t^k_s,T]$, and $N_p \geq chm\log(2hm)^2\log(2(N_p+h)m)^2$. The above conclusion follows from applying \cite[Lemma C.2]{oymak2019non} to each $t$ in $t \in [N+h+t^k_s,T]$, and then taking union bound. 

Therefore, under the event $\mathcal{E}_{\zeta_t}$, by union bound,
for all $t \geq N+h+t^k_s$ such that $N_p \geq chm\log(2hm)^2\log(2(N_p+h)m)^2$, and till the underlying system parameters do not change, with probability $1-\frac{(N-h)\log{T}}{(N-h)^{\log{N-h}}} -\delta/2$ 
\begin{align}
& \norm{\widehat{G}_t - G}_2 \leq  \norm{\left(\mathbf{U}^\top_t\mathbf{U}_t+\lambda I\right)^{-1}}_2\left(\norm{\mathbf{U}^\top_t\mathbf{\Delta}_t}_2 + \lambda \norm{G}_2\right) \leq \frac{2\sqrt{h}\zeta_t}{N_p\sigma}\left( \sqrt{ n\log\left(2\right) + 2\log\left(\frac{2hT}{\delta}\right)} + \frac{\lambda}{\sqrt{h}\zeta_t\sigma} \norm{G}_2\right) \nonumber \\
& \leq \frac{2\sqrt{h}\zeta_t}{N_p\sigma}\left( \sqrt{ n\log\left(2\right) + 2\log\left(\frac{2hT}{\delta}\right)} + \frac{\lambda\kappa_a\kappa_b}{\sqrt{h}\zeta_t\sigma\gamma}\right). \nonumber 
\end{align} 

Similar to Lemma \ref{lem:esterror-k}, by \cite[Claim D.3]{simchowitz2020improper}, for $T \geq 3$, $\delta \leq 1/T$, with probability greater than $1-\delta/2$
\[ \norm{\delta u^{\pi_{\mathrm{DAC-O}}}_{p}} \leq  3\sigma\sqrt{m+\log(1/\delta)}, ~~\forall ~~ p \in [1, T]. \]

Lets call the event where $\norm{\delta u^{\pi_{\mathrm{DAC-O}}}_{p}} \leq  3\sigma\sqrt{m+\log(1/\delta)} , ~~\forall ~~ p \in [1, T]$ as $\mathcal{E}_{u,\tn{b}}$. Then, similar to Lemma \ref{lem:esterror-k}, under the event $\mathcal{E}_{u,\tn{b}}$
\[ \norm{\mathbf{\Delta}_t}_2 \leq \sqrt{N_p} \left(R_s + \frac{\kappa_a\kappa_b\kappa_m\kappa_wh}{\gamma} + \frac{\kappa_a\kappa_bR_u}{\gamma}\right), ~ \forall ~ t \geq t^k_s+2h. \]

Then, for $\zeta_t = \sqrt{N_p}\zeta_\Delta$, event $\mathcal{E}_{\zeta_t}$ holds for all $t\in [h+t^k_s,T]$ with probability $1-\delta/2$. Therefore, by union bound, with probability $1-\frac{(N-h)\log{T}}{(N-h)^{\log{N-h}}} -\delta/2$,
for all $t \geq N+h+t^k_s$ such that $N_p \geq chm\log(2hm)^2\log(2(N_p+h)m)^2$, and till the underlying system parameters do not change
\[\norm{\widehat{G}_t - G}_2 \leq \frac{2\sqrt{h}\zeta_\Delta}{\sqrt{N_p}\sigma}\left( \sqrt{ n\log\left(2\right) + 2\log\left(\frac{2hT}{\delta}\right)} + \frac{\lambda\kappa_a\kappa_b}{\gamma\zeta_\Delta\sigma\sqrt{hN_p}}\right).\]

The final step follows from the fact that $\widehat{G}^{cd}_k$ is just $\widehat{G}_t$ before the projection for $t = t^k_s+N+2h$ $\square$

\begin{lemma}
From any period $k$ (of the CPD algorithm) onwards, suppose the system changes only up to $t_c$ after $t^k_s$ and $G_t = G$ for $t \geq t_c+h$. Suppose $T \geq 3, N_p = t-t^k_s-2h+1$, and $N_p \geq chm\log(2hm)^2\log(2(N_p+h)m)^2$ for some constant $c > 0$. Then, with probability $1-\frac{(N-h)\log{T}}{(N-h)^{\log{N-h}}}-\delta$ 
\begin{align}
& \norm{\widehat{G}_t-G}_2 \leq \frac{\beta(\delta, \lambda, \sigma, h ,N_p)}{\sigma\sqrt{N_p}} + \frac{36\kappa_a \kappa_b (t_c-t^k_s-h)(m+\log(1/\delta))}{\gamma N_p}, ~~ \nonumber\\
& \beta(\delta, \lambda, \sigma, h , N_p) = 2\sqrt{h}\zeta_\Delta\left( \sqrt{ n\log\left(2\right) + 2\log\left(\frac{2hT}{\delta}\right)} + \frac{\lambda\kappa_a\kappa_b}{\gamma\zeta_\Delta\sigma\sqrt{hN_p}}\right). \nonumber 
\end{align}
for all $t$ such that $t \geq t^k_s+N+h, t \geq t_c+h$, $N_p \geq chm\log(2hm)^2\log(2(N_p+h)m)^2$, where $\zeta_\Delta = \left(R_s + \frac{\kappa_a\kappa_b\kappa_m\kappa_wh}{\gamma} + \frac{\kappa_a\kappa_bR_u}{\gamma}\right)$. 
\label{lem:regesterror-limvar}
\end{lemma}
{\em proof}: The proof follows the same outline as Lemma \ref{lem:esterror-k}. We recall that
\[ \widehat{G}_t = \tn{Proj}_{\mathcal{G}}(\widehat{G}^\star), ~ \widehat{G}^\star = \argmin_{\widehat{G}} \sum_{p = t^k_s+h}^{t-h} \ell_p\left(\widehat{G}\right) + \lambda \norm{\widehat{G}}, ~~ \ell_p\left(\widehat{G}\right) = \norm{y^{\pi_{\mathrm{DAC-O}}}_p - \sum_{l=1}^h \widehat{G}^{[l]}\delta u^{\pi_{\mathrm{DAC-O}}}_{p-l}}^2.\]

The definition of $\delta \mathbf{u}^\top_p, \mathbf{U}^\top_t, \delta_p$ and $\mathbf{\Delta}^\top_t$ are the same as the proof of Lemma \ref{lem:regesterror}. As in the proof of Lemma \ref{lem:regesterror}, the contractive property of projection implies that
\[ \norm{\widehat{G}_t - G}_2 \leq \norm{\widehat{G}^\star - G}_2 \leq  \norm{\left(\mathbf{U}^\top_t\mathbf{U}_t+\lambda I\right)^{-1}}_2\left(\norm{\mathbf{U}^\top_t\mathbf{\Delta}_t}_2 + \lambda \norm{G}_2\right).\]

Now, we can rewrite $\delta_p$ as 
\[ \delta_p = y^{\pi_{\mathrm{DAC-O}}}_p + \sum_{l=1}^h (G^{[l]}_p - G^{[l]}) \delta u^{\pi_{\mathrm{DAC-O}}}_{p-l} - \sum_{l=1}^h G^{[l]}_p \delta u^{\pi_{\mathrm{DAC-O}}}_{p-l}. \]

Let $\tilde{\delta}_p = y^{\pi_{\mathrm{DAC-O}}}_p - \sum_{l=1}^h G^{[l]}_p\delta u^{\pi_{\mathrm{DAC-O}}}_{p-l}, ~~ \widetilde{\mathbf{\Delta}}^\top_t = [\tilde{\delta}_{t^k_s+h}, \dots, \tilde{\delta}_{t-h}]$ and
\[ \mathbf{\Gamma}^\top_t = [(G_{t^k_s+h} - G)\delta \mathbf{u}_{t^k_s+h}, \dots, (G_{t_c-1} - G)\delta \mathbf{u}_{t_c-1}, \underbrace{\mathbf{0}, \dots, \mathbf{0}}_{t-h-t_c+1 ~~ \tn{times}}], \]
where $\mathbf{0}$ is a vector with zeros of appropriate dimension. Then, it follows that 
\begin{align} 
& \norm{\widehat{G}_t - G}_2 \leq  \norm{\left(\mathbf{U}^\top_t\mathbf{U}_t+\lambda I\right)^{-1}}_2\left(\norm{\mathbf{U}^\top_t\widetilde{\mathbf{\Delta}}_t}_2 + \norm{\mathbf{U}^\top_t\mathbf{\Gamma}_t}_2 + \lambda \norm{G}_2\right) \nonumber \\
& = \underbrace{\norm{\left(\mathbf{U}^\top_t\mathbf{U}_t+\lambda I\right)^{-1}}_2\left(\norm{\mathbf{U}^\top_t\widetilde{\mathbf{\Delta}}_t}_2 + \lambda \norm{G}_2\right)}_{\tn{Term I}} + \underbrace{\norm{\left(\mathbf{U}^\top_t\mathbf{U}_t+\lambda I\right)^{-1}}_2 \norm{\mathbf{U}^\top_t\mathbf{\Gamma}_t}_2}_{\tn{Term II}}. \nonumber 
\end{align}

Now, Term I can be bound as in Lemma \ref{lem:regesterror}. The term $\widetilde{\mathbf{\Delta}}_t$ satisfies all the properties that $\mathbf{\Delta}_t$ in Lemma \ref{lem:regesterror} satisfies in order to apply the self-normalized martingale inequality \cite[Theorem 16]{abbasi2011improved}. Therefore, for all $t$ s.t. $t \geq t^k_s+N+h, t \geq t_c+h$, $N_p \geq chm\log(2hm)^2\log(2(N_p+h)m)^2$, under an event with probability $1-\frac{(N-h)\log{T}}{(N-h)^{\log{N-h}}} -\delta$,  
\[\tn{Term I} \leq \frac{2\sqrt{h}\zeta_\Delta}{\sqrt{N_p}\sigma}\left( \sqrt{ n\log\left(2\right) + 2\log\left(\frac{2hT}{\delta}\right)} + \frac{\lambda\kappa_a\kappa_b}{\gamma\zeta_\Delta\sigma\sqrt{hN_p}}\right).\] 
Under this event, as in Lemma \ref{lem:regesterror}, it holds that $\norm{\delta u^{\pi_{\mathrm{DAC-O}}}_{p}} \leq  3\sigma\sqrt{m+\log(1/\delta)}, ~~\forall ~~ p \in [1, T]$. Therefore,
\begin{align}
& \mathbf{U}^\top_t\mathbf{\Gamma}_t = \sum_{p = t^k_s+h}^{t_c-1} \delta \mathbf{u}_p\delta \mathbf{u}^\top_p (G_p - G)^\top \Rightarrow \norm{\mathbf{U}^\top_t\mathbf{\Gamma}_t}_2 \leq  \sum_{p = t^k_s+h}^{t_c-1} \norm{\delta \mathbf{u}_p\delta \mathbf{u}^\top_p}_2  \norm{G_p - G}_2  \nonumber \\
& \leq \frac{2\kappa_a \kappa_b}{\gamma} \sum_{p = t^k_s+h}^{t_c-1} \norm{\delta \mathbf{u}_p\delta \mathbf{u}^\top_p}_2 =  \frac{2\kappa_a \kappa_b}{\gamma} \sum_{p = t^k_s+h}^{t_c-1} \norm{\delta \mathbf{u}_p}^2_2 \leq \frac{18\kappa_a \kappa_b (t_c-t^k_s-h)\sigma^2(m+\log(1/\delta))}{\gamma}. \nonumber 
\end{align}

Under the same event, as in Lemma \ref{lem:regesterror}, it also holds that $\mathbf{U}^\top_t\mathbf{U}_t + \lambda I \geq N_p\sigma^2/2$. Therefore, under the same event,
\[ \tn{Term II} \leq \frac{36\kappa_a \kappa_b (t_c-t^k_s-h)(m+\log(1/\delta))}{\gamma N_p}. \]

Therefore, combining Term I and Term II, with probability $1-\frac{(N-h)\log{T}}{(N-h)^{\log{N-h}}} -\delta$,  
\[ \norm{\widehat{G}_t - G}_2 \leq \frac{2\sqrt{h}\zeta_\Delta}{\sqrt{N_p}\sigma}\left( \sqrt{ n\log\left(2\right) + 2\log\left(\frac{2hT}{\delta}\right)} + \frac{\lambda\kappa_a\kappa_b}{\gamma\zeta_\Delta\sigma\sqrt{hN_p}}\right) + \frac{36\kappa_a \kappa_b (t_c-t^k_s-h)(m+\log(1/\delta))}{\gamma N_p}, \]
for all $t$ s.t. $t \geq t^k_s+N+h, t \geq t_c+h$, $N_p \geq chm\log(2hm)^2\log(2(N_p+h)m)^2$ $\square$

We now present a final lemma which bounds the error of the least square estimate from the average of the system parameters over a period. We denote the average of the system parameters over a period by $\widetilde{G}_t$.
\begin{lemma}
From any period $k$ (of the CPD algorithm) onwards, suppose $N_p = t-t^k_s-2h+1, N_p \geq chm\log(2hm)^2\log(2(N_p+h)m)^2$ for some constant $c > 0$, 
$T \geq 3$, then, for any $\widetilde{G}_t$ with probability $1-\frac{(N-h)\log{T}}{(N-h)^{\log{N-h}}}-\delta$ 
\begin{align}
& \norm{\widehat{G}_t-\widetilde{G}_t}_2 \leq \frac{\beta(\delta, \lambda, \sigma, h ,N_p)}{\sigma\sqrt{N_p}} + \frac{18(m+\log(1/\delta))\sum_{p=t^k_s+h}^{t-h}\norm{G_p-\widetilde{G}_t}_2}{N_p}, ~~ \nonumber\\
& \beta(\delta, \lambda, \sigma, h , N_p) = 2\sqrt{h}\zeta_\Delta\left( \sqrt{ n\log\left(2\right) + 2\log\left(\frac{2hT}{\delta}\right)} + \frac{\lambda\kappa_a\kappa_b}{\gamma\zeta_\Delta\sigma\sqrt{hN_p}}\right). \nonumber 
\end{align}
for all $t$ such that $t \geq t^k_s+N+h$, and $N_p \geq chm\log(2hm)^2\log(2(N_p+h)m)^2$, where $\zeta_\Delta = \left(R_s + \frac{\kappa_a\kappa_b\kappa_m\kappa_wh}{\gamma} + \frac{\kappa_a\kappa_bR_u}{\gamma}\right)$.
\label{lem:regesterror-var}
\end{lemma}
{\em proof}: The proof follows the exact same steps as Lemma \ref{lem:regesterror-limvar}. The variables $\delta \mathbf{u}^\top_p, \mathbf{U}^\top_t, \delta_p, \mathbf{\Delta}^\top_t, \tilde{\delta}_p$ and $\widetilde{\mathbf{\Delta}}^\top_t$ are all the same with $G$ replaced by $\widetilde{G}_t$. $\mathbf{\Gamma}^\top_t$ is defined by
\[ \mathbf{\Gamma}^\top_t = [(G_{t^k_s+h} - \widetilde{G}_t)\delta \mathbf{u}_{t^k_s+h}, \dots, (G_{t-h} - \widetilde{G}_t)\delta \mathbf{u}_{t-h}]. \]

Then, following the steps as in Lemma \ref{lem:regesterror-limvar}, we have that
\begin{align} 
& \norm{\widehat{G}_t - \widetilde{G}_t}_2 \leq  \norm{\left(\mathbf{U}^\top_t\mathbf{U}_t+\lambda I\right)^{-1}}_2\left(\norm{\mathbf{U}^\top_t\widetilde{\mathbf{\Delta}}_t}_2 + \norm{\mathbf{U}^\top_t\mathbf{\Gamma}_t}_2 + \lambda \norm{\widetilde{G}_t}_2\right) \nonumber \\
& = \underbrace{\norm{\left(\mathbf{U}^\top_t\mathbf{U}_t+\lambda I\right)^{-1}}_2\left(\norm{\mathbf{U}^\top_t\widetilde{\mathbf{\Delta}}_t}_2 + \lambda \norm{\widetilde{G}_t}_2\right)}_{\tn{Term I}} + \underbrace{\norm{\left(\mathbf{U}^\top_t\mathbf{U}_t+\lambda I\right)^{-1}}_2 \norm{\mathbf{U}^\top_t\mathbf{\Gamma}_t}_2}_{\tn{Term II}}. \nonumber 
\end{align}

The final result follows from bounding Term I and Term II as in Lemma \ref{lem:regesterror-limvar}. The bound on Term I is the same as in  Lemma \ref{lem:regesterror-limvar}. Under the same event where the Term I bound holds, as in Lemma \ref{lem:regesterror-limvar}, we have that,
\begin{align}
& \mathbf{U}^\top_t\mathbf{\Gamma}_t = \sum_{p = t^k_s+h}^{t-h} \delta \mathbf{u}_p\delta \mathbf{u}^\top_p (G_p - \widetilde{G}_t)^\top \Rightarrow \norm{\mathbf{U}^\top_t\mathbf{\Gamma}_t}_2 \leq  \sum_{p = t^k_s+h}^{t-h} \norm{\delta \mathbf{u}_p\delta \mathbf{u}^\top_p}_2  \norm{G_p - \widetilde{G}_t}_2  \nonumber \\
& \leq \sum_{p = t^k_s+h}^{t-h} \norm{\delta \mathbf{u}_p\delta \mathbf{u}^\top_p}_2\norm{G_p - \widetilde{G}_t}_2  =  \sum_{p = t^k_s+h}^{t-h} \norm{\delta \mathbf{u}_p}^2_2\norm{G_p - \widetilde{G}_t}_2 \leq 9\sigma^2(m+\log(1/\delta))\sum_{p = t^k_s+h}^{t-h} \norm{G_p - \widetilde{G}_t}_2 . \nonumber 
\end{align}

Therefore, under the same event,
\[  \tn{Term II} \leq \frac{18(m+\log(1/\delta))\sum_{p = t^k_s+h}^{t-h} \norm{G_p - \widetilde{G}_t}_2}{N_p}. \]

Combining bounds for Term I and Term II we get the final result. This completes the proof of this lemma $\square$

Let, only $\widetilde{\gamma}$ of the changes occur within any consecutive $2t_p$ time steps. We call the time intervals where a change occurs only after $2t_p$ time steps as stationary intervals. Let us compactly denote $\beta = \beta(\delta/T, \lambda, \sigma, h ,N)$.

Now, for a given $d$ and $T$ sufficiently large, there is trivially a constant $c > 0$ such that $N \geq chm\log(2hm)^2\log(2(T+h)m)^2$. Next, we note that every interval in the CPD algorithm falls into the scenario of one of either Lemma \ref{lem:esterror-k} or Lemma \ref{lem:regesterror} or Lemma \ref{lem:regesterror-limvar} or Lemma \ref{lem:regesterror-var}. Therefore, by union bound, i.e., with probability greater than $1-\frac{T\log{T}}{(N-h)^{\log{N-h}}}-\delta$, the conclusions drawn in Lemmas \ref{lem:esterror-k}, \ref{lem:regesterror}, \ref{lem:regesterror-limvar}, \ref{lem:regesterror-var}, whichever applies, hold for any interval, with $\delta$ in these lemmas replaced by $\delta/T$. In the following, let $\mathcal{E}$ denote the event under which the previous statement holds. Then, it is clear that, $\mathcal{E}$ occurs with probability $1-\frac{T\log{T}}{(N-h)^{\log{N-h}}}-\delta$.

Next, we bound the term $\sum_{t=1}^T\norm{G_{t} - \widehat{G}_{t}}_2$. We derive the bound by the following steps: first we derive the bound for the case where all changes are large, then we derive the bound for the case where all the changes are small, and we derive the final bound by combining these two cases.

Consider a stationary interval $s$. Let $G$ denote the underlying system parameter in the current stationary interval. Let $G_{-}$ denote the underlying system parameter in the previous stationary interval. Let 
\[ \Delta = G - G_{-}, ~~ \Delta_n = \norm{\Delta}_2.\] 

\textbf{Scenario 1: $\Delta_n > \frac{4\beta}{\sigma \sqrt{N}}$, and a change point detection in the stationary interval $s$:}

Let the end of the previous stationary interval be $t^s_{e-}$ and the end of $s$ be $t^s_{e}$. Let the beginning of the interval $s$ be $t^s_{b}$. Let the change point detection occur at $t_d$ within $s$. There are two possible {\it cases} within this scenario: (i) No change point detection occurs in between the stationary interval $s$ and the previous stationary interval, (ii) multiple change point detection occurs in between the stationary interval $s$ and the previous stationary interval. 

{\it Case (i)}: This is the simplest of the cases. Let $t_d$ be the time of change point detection within $s$. Note that this has to be the beginning of some period of the CPD algorithm. Let $t_{d-}$ be the time of change point detection before $t_d$. In this case, a change point detection will not occur from $t_d$ till $t^s_{e}$ under event $\mathcal{E}$. This is because under the event $\mathcal{E}$, for any period $k$ and $l$ of the CPD algorithm that lies within the duration from $t_d$ till $t^s_{e}$
\[ \norm{\widehat{G}^{\mathrm{cd}}_{k} - \widehat{G}^{\mathrm{cd}}_l}_2 = \norm{\widehat{G}^{\mathrm{cd}}_{k} - G + G - \widehat{G}^{\mathrm{cd}}_l}_2 \leq \norm{\widehat{G}^{\mathrm{cd}}_{k} - G}_2 + \norm{G- \widehat{G}^{\mathrm{cd}}_l}_2 \leq \frac{2\beta}{\sigma N}, ~ \forall ~ l < k,\]
and thence do not satisfy the change point detection criteria. Therefore, under the event $\mathcal{E}$, from $t_d$ till $t^s_{e}$, the estimation algorithm will be run without any interruption. Let $\tilde{h} = chm\log(2hm)^2\log(2(T+h)m)^2, t_h = t_d+N+2h$. By definition, we note that $N \geq \tilde{h} \geq chm\log(2hm)^2\log(2(N+h)m)^2$. Therefore, 
\begin{align}
 & \sum_{t = t^s_{e-}+1}^{t^s_{e}}\norm{\widehat{G}_{t} - G_t}_2 \leq  \sum_{t = t^s_{e-}+1}^{t_d}\norm{\widehat{G}_{t} - G_t}_2 + \sum_{t = t_d+1}^{t^s_{e}}\norm{\widehat{G}_{t} - G_t}_2 \nonumber \\
 & = \sum_{t = t^s_{e-}+1}^{t_d}\norm{\widehat{G}_{t} - G_t}_2 + \sum_{t = t_d+1}^{t_h-1} \norm{\widehat{G}_{t} - G_t}_2 + \sum_{t = t_h}^{t^s_{e}} \norm{\widehat{G}_{t} - G_t}_2.  \nonumber
\end{align}

Let $k_1$ denote the index of the first period or block of length $t_p$ of the CPD algorithm that is entirely within the current stationary interval. We can make the following observation. 

{\it Sub-case}: If $t_{d-}$ is before the last block of length $t_p$ of the CPD algorithm within the previous stationary interval, then $t_d$ will necessarily occur at the end of $k_1$ period under event $\mathcal{E}$. This is because, under the event $\mathcal{E}$, for any block or period $k$ of the CPD algorithm that lies within the previous stationary interval, 
\begin{align} 
& \norm{\widehat{G}^{\mathrm{cd}}_{k_1} - \widehat{G}^{\mathrm{cd}}_k}_2 = \norm{\widehat{G}^{\mathrm{cd}}_{k_1} -G + \Delta + G_{-} -   \widehat{G}^{\mathrm{cd}}_k}_2 \nonumber \\
& \geq \Delta_n - \norm{\widehat{G}^{\mathrm{cd}}_{k_1} -G}_2 - \norm{G_{-} -   \widehat{G}^{\mathrm{cd}}_{k}}_2 > \frac{2\beta}{\sigma \sqrt{N}}.  \nonumber
\end{align}

It follows that, if $t_{d-}$ is before the last block of length $t_p$ of the CPD algorithm within the previous stationary interval, by the previous equation and the change point detection rule, change point detection would necessarily occur by the end of the $k_1$ period. Then, under the event $\mathcal{E}$,
\begin{align}
 & \sum_{t = t^s_{e-}+1}^{t^s_{e}}\norm{\widehat{G}_{t} - G_t}_2 \leq  \sum_{t = t^s_{e-}+1}^{t_d}\norm{\widehat{G}_{t} - G_t}_2 + \sum_{t = t_d+1}^{t_h-1} \norm{\widehat{G}_{t} - G_t}_2 + \sum_{t = t_h}^{t^s_{e}} \norm{\widehat{G}_{t} - G_t}_2  \nonumber \\
 & \stackrel{(a)}{\leq} \frac{2\kappa_a\kappa_b(t_d-t^s_{e-})}{\gamma} + \frac{2\kappa_a\kappa_b(N+2h)}{\gamma} + \sum_{t = t_h}^{t^s_{e}}\frac{\beta(\delta/T, \lambda, \sigma, h ,N_p)}{\sigma\sqrt{N_p}} \nonumber \\
 & \leq \frac{2\kappa_a\kappa_b(t_d-t^s_{e-})}{\gamma} + \frac{2\kappa_a\kappa_b(N+2h)}{\gamma} + \mathcal{O}\left( \frac{\beta\sqrt{t^s_{e}-t_d-2h}}{\sigma} \right). \nonumber 
\end{align}
Here, $(a)$ follows from applying Lemma \ref{lem:regesterror} to the last term with $\delta/(T)$ in place of $\delta$ and the first two terms follow by using the bound on the set within which $\widehat{G}_{t}$ and $G_t$ lie.

Let $\widetilde{\gamma}^s_c$ be the number of changes in the system between $s$ and the previous stationary interval. These $\widetilde{\gamma}^s_c$ changes, by the scenario under consideration, should occur within a gap of $2t_p$ time steps. This and the fact that $t_d$ should occur at the end of $k_1$, it follows that $t_d-t^s_{e-} \leq 2(\widetilde{\gamma}^s_c+1)t_p \leq 2(\widetilde{\gamma}^s_c+1)(N+h)$. 
Therefore, under the event $\mathcal{E}$, when $t_{d-}$ is before the last block of length $t_p$ of the CPD algorithm within the previous stationary interval, 
\[ \sum_{t = t^s_{e-}+1}^{t^s_{e}}\norm{\widehat{G}_{t} - G_t}_2 \leq \mathcal{O}\left(\frac{\kappa_a\kappa_b(\widetilde{\gamma}^s_c+2)(N+h)}{\gamma}\right) + \frac{2\kappa_a\kappa_b(N+2h)}{\gamma} + \mathcal{O}\left( \frac{\beta\sqrt{t^s_{e}-t^s_{e-}}}{\sigma} \right). \]

{\it Sub-case}: If $t_{d-}$ is not before the last block of length $t_p$ of the CPD algorithm within the previous stationary interval, then (i) $t_{d-}$ has to be at the end of the last block within the previous stationary interval and (ii) $t_d$ need not necessarily occur by the end of the first period or block of length $t_p$ of CPD algorithm within the current stationary interval. Let $t_{h-} = N+t_{d-}+2h$ and $N_{p-} = t-t_{d-}-2h+1$. Then, under the event $\mathcal{E}$,
\begin{align}
 & \sum_{t = t^s_{e-}+1}^{t^s_{e}}\norm{\widehat{G}_{t} - G_t}_2 \leq  \sum_{t = \max\{t_{d-},t^s_{e-}+1\}}^{t_d}\norm{\widehat{G}_{t} - G_t}_2 + \sum_{t = t_d+1}^{t^s_{e}}\norm{\widehat{G}_{t} - G_t}_2 \nonumber \\
 & = \sum_{t = \max\{t_{d-},t^s_{e-}+1\}}^{t_d}\norm{\widehat{G}_{t} - G_t}_2 + \sum_{t = t_d+1}^{t_h-1} \norm{\widehat{G}_{t} - G_t}_2 + \sum_{t = t_h}^{t^s_{e}} \norm{\widehat{G}_{t} - G_t}_2  \nonumber \\
 & \stackrel{(b)}{\leq} \sum_{t = \max\{t_{d-},t^s_{e-}+1\}}^{t_{h-}-1} \norm{\widehat{G}_{t} - G_t}_2 + \sum_{t = \max\{t_{h-},t^s_{e-}+1\}}^{t_d}\norm{\widehat{G}_{t} - G_t}_2 + \frac{2\kappa_a\kappa_b(N+2h)}{\gamma} + \mathcal{O}\left( \frac{\beta\sqrt{t^s_{e}-t_d-2h}}{\sigma} \right) \nonumber  \\
 & \stackrel{(c)}{\leq} \frac{4\kappa_a\kappa_b(N+2h)}{\gamma} + \mathcal{O}\left(\frac{\kappa_a\kappa_b(\widetilde{\gamma}^s_c)(N+h)}{\gamma}\right) + \sum_{t = \max\{t_{h-},t^s_{e-}+1\}}^{t_d}\norm{\widehat{G}_{t} - G_t}_2 + \mathcal{O}\left( \frac{\beta\sqrt{t^s_{e}-t_d-2h}}{\sigma} \right) \nonumber \\
 & \stackrel{(d)}{\leq} \frac{4\kappa_a\kappa_b(N+2h)}{\gamma} + \mathcal{O}\left(\frac{\kappa_a\kappa_b(\widetilde{\gamma}^s_c)(N+h)}{\gamma}\right) + \mathcal{O}\left( \frac{\beta\sqrt{t^s_{e}-t_d-2h}}{\sigma} \right) \nonumber \\
 & + \sum_{t = \max\{t_{h-},t^s_{e-}+1\}}^{t_d} \left(\frac{\beta(\delta/T, \lambda, \sigma, h ,N_{p-})}{\sigma\sqrt{N_{p-}}}  + \frac{72\kappa_a \kappa_b (\widetilde{\gamma}^s_c+1)(N+h)(m+\log(T/\delta))}{\gamma N_{p-}}\right) \nonumber \\
 & \leq \frac{4\kappa_a\kappa_b(N+2h)}{\gamma} + \mathcal{O}\left(\frac{\kappa_a\kappa_b(\widetilde{\gamma}^s_c+2)(N+h)(1+(m+\log(T/\delta))\log(T))}{\gamma}\right)  \nonumber \\
 & + \mathcal{O}\left( \frac{\beta\sqrt{t^s_{e}-t_d-2h}}{\sigma} \right) + \sum_{t = \max\{t_{h-},t^s_{e-}+1\}}^{t_d} \left(\frac{\beta(\delta/T, \lambda, \sigma, h ,N_{p-})}{\sigma\sqrt{N_{p-}}}\right) \nonumber \\
  & \leq \frac{4\kappa_a\kappa_b(N+2h)}{\gamma} + \mathcal{O}\left(\frac{\kappa_a\kappa_b(\widetilde{\gamma}^s_c+2)(N+h)(1+(m+\log(T/\delta))\log(T))}{\gamma}\right)  \nonumber \\
 & + \mathcal{O}\left( \left(\frac{\beta}{\sigma} \right) \left(\sqrt{t^s_{e}-t_d-2h} + \sqrt{t_{d}-t^s_{e-}-2h}\right)\right).
\end{align}
Here, $(b)$ follows from applying Lemma \ref{lem:regesterror} with $\delta/(T)$ in place of $\delta$ to the last term and applying the bound on the set within which $\widehat{G}_t$ and $G_t$ lie to the second term, the first term in $(c)$ follows from applying the bound on $\widehat{G}_t$ and $G_t$ to the first term in $(b)$, the second term in $(c)$ follows from repeating the same to the first $t_c$ time steps from $t_{h-}$ as defined in Lemma \ref{lem:regesterror-limvar}, which in this case is $2\widetilde{\gamma}^s_c(N+h)$: follows from the fact that there can be at the most $\widetilde{\gamma}^s_c$ between the stationary intervals, $(d)$ follows from applying Lemma \ref{lem:regesterror-limvar} with $\delta/(T)$ in place of $\delta$. The term $\left(\sqrt{t^s_{e}-t_d-2h} + \sqrt{t_{d}-t^s_{e-}-2h}\right)$ is maximum when $t^s_{e}-t_d = t_{d}-t^s_{e-}$. Therefore, under event $\mathcal{E}$, 
\begin{align}
 & \sum_{t = t^s_{e-}+1}^{t^s_{e}}\norm{\widehat{G}_{t} - G_t}_2 
  \leq \frac{4\kappa_a\kappa_b(N+2h)}{\gamma} + \mathcal{O}\left(\frac{\kappa_a\kappa_b(\widetilde{\gamma}^s_c+2)(N+h)(1+(m+\log(T/\delta))\log(T))}{\gamma}\right)  \nonumber \\
 & + \mathcal{O}\left( \left(\frac{\beta}{\sigma} \right) \left(\sqrt{t^s_{e}-t^s_{e-}}\right)\right). \nonumber 
\end{align}

{\it Case (ii)}: The steps involved in bounding this {\it case} are same as the second sub-case of {\it case (i)}. Therefore, we get the same bound. This covers all the cases for this scenario. 

Let $R^g_1(t^s_{e-},t^s_{e})$ denote the cumulative error in this scenario. Then, $R^g_1(t^s_{e-},t^s_{e})$ is given by the bound derived for the second sub-case of case (i) of scenario 1.

\textbf{Scenario 2: $\Delta_n > \frac{4\beta}{\sigma \sqrt{N}}$, and no change point detection in the stationary interval $s$:} 

If there is no change point detection at all inside the stationary interval, then under event $\mathcal{E}$, given that $\Delta_n > \frac{4\beta}{\sigma \sqrt{N}}$, there should have been a change point detection at the end of the last period or block of length $t_p$ within the previous stationary interval. Since there is no change point detection in $s$, this scenario is equivalent to second sub-case of {\it case (i)} of scenario 1 with $t_d$ in that bound set to $t^s_{e}$. Therefore, the bound derived over there applies here as well. Let $R^g_2(t^s_{e-},t^s_{e})$ denote the cumulative error in this scenario. Then, $R^g_2(t^s_{e-},t^s_{e})$ is given by the bound derived for the second sub-case of case (i) of scenario 1.

\textbf{Scenario 3: $\Delta_n \leq \frac{4\beta}{\sigma \sqrt{N}}$, and there is no change point detection in the stationary interval $s$}:

Under event $\mathcal{E}$, by the same argument as in {\it case (i)} of scenario 1, there can only be at the most one change point detection within any stationary interval. Let $t_{d}$ be the change point before an interval $s$ (satisfying this scenario) and $t_{d+}$ be the first change point detection starting from $s$. Let there be $q$ non-stationary intervals between these two change point detections. With a slight abuse of notation, denote the total number of changes within the non-stationary intervals in this period together by $\widetilde{\gamma}^s_c$.

We denote the true underlying model in each stationary interval in this period by $G_{r}$, where $r$ is the index of the stationary interval that overlaps with the period between the two aforementioned change point detections. We remind that this changes the notation for the true underlying model we introduced in the beginning for the scenario under consideration. Then, under event $\mathcal{E}$, for any two pair of stationary intervals $r$ and $r'$ within this period, whose overlap with this period is greater than $t_p$,
\[ \norm{G_r-G_{r'}}_2 \leq \frac{4\beta}{\sigma \sqrt{N}}. \]
Otherwise, under event $\mathcal{E}$, there would have to be a change point detection in at least one of these intervals.

Let the start and end time of the $r$th stationary interval be $t^r_s$ and $t^r_e$. Let $\tilde{h} = chm\log(2hm)^2\log(2(T+h)m)^2, t_h = t_{d}+N+2h$. Then, under event $\mathcal{E}$, 
\begin{align}
 & \sum_{t = t^{1}_e+1}^{t^{q+1}_{e}}\norm{\widehat{G}_{t} - G_t}_2 = \sum_{t = t^1_e+1}^{t^2_s-1}\norm{\widehat{G}_{t} - G_t}_2 + \sum_{t = t^2_s}^{t^2_e}\norm{\widehat{G}_{t} - G_t}_2  + \sum_{t = t^q_e+1}^{t^{q+1}_{s}-1}\norm{\widehat{G}_{t} - G_t}_2 + \sum_{t = t^{q+1}_{s}}^{t^{q+1}_{e}}\norm{\widehat{G}_{t} - G_t}_2 \nonumber \\
 & \stackrel{(e)}{\leq} \mathcal{O}\left(\frac{\kappa_a\kappa_b(\widetilde{\gamma}^s_c+2)(N+h)}{\gamma}\right) + \sum_{t = t^2_s}^{t^2_e}\norm{\widehat{G}_{t} - G_t}_2 \dots + \sum_{t = t^{q+1}_{s}}^{t^{q+1}_{e}}\norm{\widehat{G}_{t} - G_t}_2 \nonumber \\
 & \leq \mathcal{O}\left(\frac{\kappa_a\kappa_b(\widetilde{\gamma}^s_c+2)(N+h)}{\gamma}\right) + \sum_{t = t^1_{e}+1}^{t_h-1}\norm{\widehat{G}_{t} - G_t}_2 + \sum_{t = t^2_s}^{t^2_e}\norm{\widehat{G}_{t} - G_t}_2  + \sum_{t = t^{q+1}_{s}}^{t^{q+1}_{e}}\norm{\widehat{G}_{t} - G_t}_2 \nonumber \\
  & \stackrel{(f)}{\leq} \mathcal{O}\left(\frac{\kappa_a\kappa_b(\widetilde{\gamma}^s_c+2)(N+h)}{\gamma}\right) + \frac{2\kappa_a\kappa_b(N+2h)}{\gamma} + \sum_{t = t^2_s}^{t^2_e}\norm{\widehat{G}_{t} - G_t}_2 + \sum_{t = t^{q+1}_s+1}^{t_{d+}}\norm{\widehat{G}_{t} - G_t}_2 + \sum_{t = t_{d+}+1}^{t^{q+1}_{e}}\norm{\widehat{G}_{t} - G_t}_2. \nonumber 
\end{align}

Here, $(e)$ follows from accumulating errors from all the non-stationary intervals in this period, $(f)$ follows from applying the bound of the set within which $\widehat{G}_t$ and $G_t$ lie to the second term in the previous line. We denote the total number of changes within the non-stationary interval before the $r$th stationary interval by $\widetilde{\gamma}^r_c$. Then, under event $\mathcal{E}$, the cumulative error within the period between the change point detections at $t_{d}$ and $t_{d+}$
\begin{align}
& \sum_{t = \max\{t^r_s,t_h\}}^{\min\{t^r_e, t_{d+}\}}\norm{\widehat{G}_{t} - G_t}_2 \nonumber \\
& \stackrel{(g)}{\leq} \sum_{t = \max\{t^r_s,t_h\}}^{\min\{t^r_e, t_{d+}\}}\left(\frac{\beta(\delta/T, \lambda, \sigma, h , N_p)}{\sigma\sqrt{N_p}} + \frac{18(m+\log(T/\delta))\sum_{p=t_d+h}^{t-h}\norm{G_p-G_t}_2}{N_p}\right) \nonumber \\
& \leq \sum_{t = \max\{t^r_s,t_h\}}^{\min\{t^r_e, t_{d+}\}}\left(\frac{\beta}{\sigma\sqrt{N_p}} + \frac{18(m+\log(T/\delta))\sum_{p=t_d+h}^{t-h}\norm{G_p-G_t}_2}{N_p}\right)  \nonumber \\
& = \sum_{t = \max\{t^r_s,t_h\}}^{\min\{t^r_e, t_{d+}\}}\left(\frac{\beta}{\sigma\sqrt{N_p}}\right) + \sum_{t = \max\{t^r_s,t_h\}}^{\min\{t^r_e, t_{d+}\}}\left(\frac{18(m+\log(T/\delta))\sum_{p=t_d+h}^{t-h}\norm{G_p-G_t}_2}{N_p}\right) \nonumber \\
& \stackrel{(h)}{\leq}\sum_{t = \max\{t^r_s,t_h\}}^{\min\{t^r_e, t_{d+}\}}\left(\frac{\beta}{\sigma\sqrt{N_p}}\right) + \sum_{t = \max\{t^r_s,t_h\}}^{\min\{t^r_e, t_{d+}\}}18(m+\log(T/\delta))\left(\frac{2\kappa_a\kappa_b(\widetilde{\gamma}^r_c)(N+h)}{\gamma N_p} + \frac{4\beta}{\sigma \sqrt{N}} \right). \nonumber 
\end{align}

Here, $(g)$ follows from applying Lemma \ref{lem:regesterror-var} with $\delta/T$ in place of $\delta$, $(h)$ follows from using the condition $\Delta_n \leq \frac{4\beta}{\sigma \sqrt{N}}$ and the bound on the set within which $G_t$s lie. Putting together the sum for all $r$s, we get that 
\begin{align}
& \sum_{t = t_{d}}^{t^{q+1}_e}\norm{\widehat{G}_{t} - G_t}_2 \leq \mathcal{O}\left(\frac{\kappa_a\kappa_b(\widetilde{\gamma}^s_c+2)(N+h)}{\gamma}\right) + \frac{2\kappa_a\kappa_b(N+2h)}{\gamma} + \mathcal{O}\left(\frac{\beta\sqrt{t^{q+1}_{e} -t^1_{e}}}{\sigma}\right) \nonumber \\
& + \frac{36\kappa_a\kappa_b(\widetilde{\gamma}^s_c)(N+h)(m+\log(T/\delta))\log(T)}{\gamma} + \frac{72(m+\log(T/\delta))\beta (t^{q+1}_{e} - t^1_{e})}{\sigma \sqrt{N}} + \sum_{t = t_{d+}+1}^{t^{q+1}_{e}}\norm{\widehat{G}_{t} - G_t}_2 \nonumber \\
& \leq \mathcal{O}\left(\frac{\kappa_a\kappa_b(\widetilde{\gamma}^s_c+2)(N+h)}{\gamma}\right) + \frac{4\kappa_a\kappa_b(N+2h)}{\gamma} + \mathcal{O}\left(\frac{\beta\sqrt{t^{q+1}_{e} -t^1_{e}}}{\sigma}\right) \nonumber \\
& + \frac{36\kappa_a\kappa_b(\widetilde{\gamma}^s_c)(N+h)(m+\log(T/\delta))\log(T)}{\gamma} + \frac{72(m+\log(T/\delta))\beta (t^{q+1}_{e} - t^1_{e})}{\sigma \sqrt{N}} + \mathcal{O}\left( \frac{\beta\sqrt{t^{q+1}_{e}-t^1_{e}}}{\sigma} \right). 
\end{align}
Let $R^g_3(t^s_{e-},t^s_{e})$ denote the cumulative error corresponding to a stationary interval that is part of a scenario such as scenario 3.

\textbf{Scenario 4: $\Delta_n \leq \frac{4\beta}{\sigma \sqrt{N}}$, and there is a change point detection in the stationary interval $s$}:

Under event $\mathcal{E}$, by the same argument as in {\it case (i)} of scenario 1, there can only be at the most one change point detection within any stationary interval. So in this scenario, there wouldn't be another change point detection. Therefore, the cumulative error over the period from $t^s_{e-}$ to $t^s_{e}$ in this scenario can be bounded similar to second sub-case of case (i) of scenario 1. Therefore, the same bound derived there applies here as well. Let $R^g_4(t^s_{e-},t^s_{e})$ denote the cumulative error corresponding to a stationary interval satisfying scenario 4.

This covers all possible scenarios that can occur in a stationary interval. Let us index the stationary intervals by $s$. Given that $\widetilde{\gamma}$ of the changes occur within $2t_p$ time steps, there will be $\Gamma_T-\widetilde{\gamma}+1$ stationary intervals. We denote the scenario of an interval $s$ by $S(s)$. Then
\begin{align}
& \sum_{t=1}^T \norm{\widehat{G}_t -G_t}_2 = \sum_{s=1}^{\Gamma_T-\widetilde{\gamma}+1} \sum_{t = t^{s-1}_{e}}^{t^s_{e}} \norm{\widehat{G}_t -G_t}_2 \nonumber \\
& = \sum_{s=1}^{\Gamma_T-\widetilde{\gamma}+1} \left( \mathbb{I}[S(s) = 1] R^g_1(t^{s-1}_{e},t^s_{e}) + \mathbb{I}[S(s) = 2]R^g_2(t^{s-1}_{e},t^s_{e}) + \mathbb{I}[S(s) = 3]R^g_3(t^{s-1}_{e},t^s_{e}) + \mathbb{I}[S(s) = 4]R^g_4(t^{s-1}_{e},t^s_{e})\right). \nonumber
\end{align}
Note that, in any control episode, as no two scenarios can co-exist in an interval $s$, under event $\mathcal{E}$, $S(s)$ can only be one of the four scenarios. Therefore, the cumulative error is bounded by
\begin{align}
& \sum_{t = 1}^{T}\norm{\widehat{G}_{t} - G_t}_2 
  \leq \frac{4\kappa_a\kappa_b(N+2h)\Gamma_T}{\gamma} + \mathcal{O}\left(\frac{\beta\sqrt{T\Gamma_T}}{\sigma}\right) \nonumber \\
& + \mathcal{O}\left(\frac{\kappa_a\kappa_b\Gamma_T(N+h)(1+(m+\log(T/\delta))\log(T))}{\gamma}\right) + \widetilde{\mathcal{O}}\left(\frac{\beta T}{\sigma \sqrt{N}}\right). \nonumber
\end{align}
Substituting the values of $\sigma$ and $N$ into this bound, and then substituting the cumulative error by this bound in the bound derived for the regret, we get the desired result for the regret with probability $1-\frac{T\log{T}}{(N-h)^{\log{N-h}}}-\delta$.

Now, given that $\Gamma_T = \widetilde{\mathcal{O}}\left(T^d\right)$, $d < 1$, and $N = \Gamma^{-0.8}_TT^{4/5}$, $N = \mathcal{O}(T^{4/5(1-d)})$. Therefore,
\beq
 \lim_{T \rightarrow \infty} T\log(T)/((N-h)^{\log{N-h}}) = 0. \nonumber 
\eeq
Therefore, for $\Gamma_T = \widetilde{\mathcal{O}}\left(T^d\right)$ and a given $\widetilde{\delta}$ such that $\delta \leq \widetilde{\delta}$, we can find a $T_1$ such that, for all $T > T_1$, $T\log(T)/((N-h)^{\log{N-h}})$ is smaller than $\widetilde{\delta}$. Also, for $\Gamma_T = \widetilde{\mathcal{O}}\left(T^d\right)$, $t_p = N+h > N \geq \mathcal{O}(T^{2/5(1-d)})$, and $hm\log(2hm)^2\log(2(T+h)m)^2 = \mathcal{O}(\log(T)^5)$. Therefore, we can find a $T_2$ and a constant $c > 0$ such that, for all $T > T_2$, $N \geq chm\log(2hm)^2\log(2(T+h)pm)^2$. Take $T_0 = \max\{T_1,T_2\}$. Then, for $T > T_0$, we can find a constant $c$ such that the condition $N \geq chm\log(2hm)^2\log(2(T+h)m)^2$ holds, and $T\log(T)/((N-h)^{\log{N-h}})$ is smaller than $\widetilde{\delta}$. Therefore, for $T > T_0$ the regret holds with probability $1-2\widetilde{\delta}$ $\blacksquare$

\section{Proof of Theorem \ref{thm:olc-zk-cd} : Setting S-1 }
\label{sec:th2-proof-1}

The proof steps are exactly same as setting S-2. The differences are minor and we only highlight the differences here without repeating all the steps. Proceeding exactly as in Setting S-2 the constant $R_s$ would be $\frac{\kappa_a\kappa_b\kappa_w}{\gamma} + \kappa_e + \frac{2R_u\kappa_a\kappa_b}{\gamma}$. The other difference occurs in the cost truncation error:
\begin{align}
& \sum_{t=1}^T c_t(y^{\pi_{\mathrm{DAC-O}}}_t, u^{\pi_{\mathrm{DAC-O}}}_t) - \sum_{t=1}^T c_t(\tilde{y}^{\pi_{\mathrm{DAC-O}}}_t[M_{t:t-h}\vert \widehat{G}_t , \hat{s}_{1:t}], u^{\pi_{\mathrm{DAC-O}}}_t) \leq LR \norm{y^{\pi_{\mathrm{DAC-O}}}_t - \tilde{y}^{\pi_{\mathrm{DAC-O}}}_t[M_{t:t-h}\vert \widehat{G}_t,\hat{s}_{1:t}] } \nonumber \\
& \leq LR(R_u + \kappa_w)\sqrt{h} \norm{G_t - \hat{G}_t} + \frac{\kappa_a\kappa_b}{\gamma}\left( \kappa_w + R_u\right). \nonumber 
\end{align}

The second difference lies in the following steps.
\begin{align}
& \norm{\tilde{\tilde{y}}^{\pi_{\mathrm{DAC-O}}}_t[M_\star \vert \widehat{G}_t , \hat{s}_{1:t}] - \tilde{\tilde{y}}^{\pi_{\mathrm{DAC-O}}}_t[M_\star \vert G_t , s_{1:t}]} \nonumber \\
& = \norm{ (\hat{s}_t - s_t) + \sum_{k=1}^h  \widehat{G}^{[k]}_t\tilde{u}^{\pi_{\mathrm{DAC-O}}}_{t-k}[M_\star \vert w_{1:t}] - \sum_{k=1}^h  G^{[k]}_t\tilde{u}^{\pi_{\mathrm{DAC-O}}}_{t-k}[M_\star \vert w_{1:t}] } \nonumber \\
& = \norm{ (\hat{s}_t - s_t) + \sum_{k=1}^h  \left(\widehat{G}^{[k]}_t - G^{[k]}_t\right)\tilde{u}^{\pi_{\mathrm{DAC-O}}}_{t-k}[M_\star \vert w_{1:t}] }  \leq  \norm{\hat{s}_t - s_t} + R_u\sqrt{h}\norm{\widehat{G}_t - G_t}_2 \nonumber \\
& \leq (R_u+\kappa_w)\sqrt{h}\norm{G_{t} - \widehat{G}_{t}}_2 + \frac{\kappa_a\kappa_bR_u(1-\gamma)^{h}}{\gamma}. \nonumber 
\end{align}

The final difference is a minor difference in Lemma \ref{lem:esterror-k}. There it is mentioned that ``$M_{p-i}$ are dependent on the random perturbation inputs only through $\widehat{G}_{p-i-1}$ given the linearity of the cost functions and the update equation for $M_t$". In Setting S-1, $M_{p-i}$s are again dependent on the random perturbation inputs only through $\widehat{G}_{p-i-1}$, but not because of linearity, but because of the definition of $\hat{s}_t$ used in setting S-1. This covers all the differences. The rest of the proof is the same.

\end{document}